\documentclass[draftcls,onecolumn]{IEEEtran}

\usepackage[utf8]{inputenc}
\usepackage[T1]{fontenc}
\usepackage{geometry}
\usepackage{amsmath,amssymb,amsthm,mathtools}
\usepackage{bbm}
\usepackage{microtype}
\usepackage[colorlinks=true, allcolors=blue]{hyperref}
\usepackage{color}

\newtheorem{definition}{Definition}

\newtheorem{theorem}{Theorem}

\newtheorem{lemma}{Lemma}
\newtheorem{corollary}{Corollary}
\newtheorem{assumption}{Assumption}

\title{Group Testing with Selectable Thresholds}
\author{Trung-Khang Tran, Daniel McMorrow, and Jonathan Scarlett\thanks{The authors are with the Department of Computer Science, National University of Singapore (NUS).  J.~Scarlett is also with the Department of Mathematics and Institute for Data Science at NUS.  This work is supported by the National
University of Singapore under the Presidential Young Professorship grant scheme. }}
\date{\today}
\allowdisplaybreaks

\newcommand{\bX}{\mathbf{X}}

\newcommand{\by}{\mathbf{y}}
\newcommand{\bM}{\mathbf{M}}
\newcommand{\bbR}{\mathbb{R}}
\newcommand{\sd}{\preceq_{\rm st}}

\begin{document}
    
\maketitle

\begin{abstract}
    We consider the problem of group testing, in which one seeks to identify a subset of defective items of size $k$ from a larger set of $n$ items based on pooled tests.  We introduce a \emph{selectable threshold model}, in which each test has an associated threshold that can be chosen, such that the test outcome is 1 if and only if the number of defectives in the test is no smaller than that threshold.  In settings with a large or unbounded maximum threshold, we establish conditions under which high-probability recovery can be attained with a rate (i.e., the asymptotic ratio of $\log_2{n \choose k}$ to the number of tests) approaching its maximum possible value of 1.  Moreover, in the case of a fixed maximum threshold, we establish an achievable number of tests using simple and computationally efficient decoding methods, and a converse that holds under suitable regularity conditions on the test design, with the two coinciding in the dense limit (i.e., $\theta$ approaching one in the scaling $k = \Theta(n^{\theta})$).
\end{abstract}

\section{Introduction}

The group testing problem, in which one seeks to identify a subset of ``defective'' items using pooled tests, has been widely studied since its introduction in the 1940s, and has far-reaching applications including medical testing \cite{aldridge2021pooled}, DNA testing \cite{curnow1998pooling, gille1991pooling}, database systems \cite{cormode2005whats}, communication protocols \cite{hayes1978adaptive,wolf1985born}, and more.  The majority of attention has been paid to the standard binary ``OR'' model, in which a test outcome is positive if and only if it includes at least one defective item, but a number of alternative models have also been proposed, including the following:
\begin{itemize}
    \item In \emph{quantitative group testing} (QGT) \cite{wang2018optimal,feige2020quantitative}, the test outcome reveals the number of defectives in the test, and is thus a non-binary outcome;
    \item In \emph{semi-quantitative group testing} (SQGT) \cite{emad2014semiquantitative,cheraghchi2021semiquantitative}, the test outcome can be viewed as a quantized form of the QGT outcome, so that the outcome reveals a certain \emph{range} that the number of defectives lies in (but not the exact value);
    \item In the simplest form of \emph{threshold group testing} (TGT) \cite{damaschke2006threshold,chen2009nonadaptive,chan2013stochastic}, the test outcome is again binary, and indicates whether or not the number of defectives is greater than or equal to a certain threshold $\gamma$; setting $\gamma=1$ gives regular group testing.
\end{itemize}
In this paper, we study a setting that builds on TGT but is new to the best of our knowledge: For each test, we can not only select which items are included in the test, but also \emph{select the threshold} $\gamma$ such that the test outcome is $Y = \mathbbm{1}\{ \text{\#defectives in test} \ge \gamma\}$ (subject to suitable constraints such as $\gamma \le \gamma_{\max}$).  We call this problem \emph{group testing with selectable thresholds} (GT-ST). 

From an application viewpoint, this setting is potentially relevant to scenarios where the testing device has a ``sensitivity parameter'' that dictates how many defectives are required to produce a positive outcome, though we are starting with an idealized theoretical model in which the threshold can be specified exactly without noise.  From a theoretical viewpoint, GT-ST is a natural generalization of TGT, and its study provides insight on \emph{what kinds of 1-bit test outcomes are useful}, and how the flexibility of selectable thresholds can simplify algorithm design.

We proceed to describe the problem formally, and then outline the related work and our contributions.

\subsection{Problem Setup} \label{sec:setup}

There are $n$ items, among which some subset $S \subseteq[n] \coloneqq \{1,\dotsc,n\}$ of size $|S|=k$ is \emph{defective}.  We focus on the widely-studied sublinear sparsity regime in which $k = \Theta(n^{\theta})$ for some $\theta \in (0,1)$. The number of tests is denoted by $T$, and the $t$-th test is specified by an indicator vector $x_t\in\{0,1\}^n$, where an entry of 1 indicates an item being included.  The test outcome (described below) is denoted by $y_t \in \{0,1\}$, and we collect the test vectors into a matrix $\bX \in \{0,1\}^{T \times n}$ and the test outcomes into a vector $\by \in \{0,1\}^T$.  We let $\psi_t=\sum_{i\in S}x_{ti}$ denote the number of defectives in test $t$.

In standard noiseless group testing (GT), we observe $y_t=\mathbbm{1}\{\psi_t\ge 1\}$.  In noiseless threshold group testing (TGT), this generalizes to $y_t=\mathbbm{1}\{\psi_t\ge \gamma\}$ for some fixed threshold $\gamma$.  In this paper, we are interested in the following variant:
\begin{itemize}
    \item Each test is specified not only by which items are included (i.e.,  $x_t\in\{0,1\}^n$), but also by a \emph{choice} of a positive integer threshold $\gamma^{(t)}$.  Unless stated otherwise, we assume that the threshold is subject to a constraint of the form $\gamma^{(t)} \le \gamma_{\max}$ for some maximum value $\gamma_{\max}$.
    \item The corresponding test result is $y_t=\mathbbm{1}\{\psi_t\ge \gamma^{(t)}\}$.
\end{itemize}
We are interested in conditions under which this added flexibility can help in designing algorithms, e.g., by utilizing known group testing methods (with $\gamma^{(t)} = 1$) and combining them with higher-threshold tests.

In an \emph{adaptive} testing strategy, each test (including its choice of threshold) may be designed based on previous outcomes, whereas in a \emph{non-adaptive} testing strategy, all tests must be chosen in advance.  For reasons we will mention shortly, we will focus entirely on non-adaptive strategies.  Given $(\bX,\by)$, a decoding algorithm produces an estimate $\hat{S}$ of $S$, and the error probability is given by 
\begin{equation}
    P_e \coloneqq  \Pr(\hat{S} \ne S),
\end{equation}
where the randomness is over a uniformly random defective set $S$ of size $k$, as well as any randomness used in the test design or decoding rule.

Before proceeding, we make a simple observation that the well-known \emph{counting bound} remains true in our setting, as we formally state in the following.

\begin{theorem} \label{thm:counting}
    {\em (Counting Bound)} 
    In the preceding GT-ST model, if the number of tests satisfies $T \le (1-\eta)\log_2 {n \choose k}$ for some fixed $\eta > 0$, then it holds that $P_e \to 1$ as $n \to \infty$ (with $k = \Theta(n^{\theta})$ for some $\theta \in (0,1)$).
\end{theorem}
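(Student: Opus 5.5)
The plan is the standard counting argument for the strong converse, adapted to the fact that the thresholds are now part of the design. First I would reduce to deterministic designs and decoders. Conditioned on the internal randomness (the test matrix, the thresholds, and any randomness in the decoder), the error probability is an average over $S$. If I can show that $\Pr(\hat{S}=S) \le 2^{-\eta \log_2 {n \choose k} + o(1)}$ uniformly over every deterministic choice, then averaging over the internal randomness gives the same bound. Without loss of generality the decoder is then a deterministic map $\{0,1\}^T \to \{S' \subseteq [n] : |S'|=k\}$: a randomized decoder is a mixture of deterministic ones, so it cannot do better than the best deterministic one.

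Second, I would observe that once $\bX$ and the thresholds $\gamma^{(1)},\dotsc,\gamma^{(T)}$ are fixed, the outcome vector $\by$ is a deterministic function of $S$. This holds whether or not the design is adaptive, because under adaptivity each test and its threshold are themselves determined by the earlier outcomes and hence by $S$. The selectable threshold changes which function this is, but in every case $\by \in \{0,1\}^T$ takes at most $2^T$ distinct values. The decoder's estimate $\hat{S}(\by)$ therefore also takes at most $2^T$ distinct values, so correct decoding is possible for at most $2^T$ defective sets $S$.

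Third, since $S$ is uniform over ${n \choose k}$ sets,
\begin{equation}
\Pr(\hat{S}=S) \le \frac{2^T}{{n \choose k}} \le 2^{-\eta \log_2 {n \choose k}}.
\end{equation}
With $k=\Theta(n^\theta)$ and $\theta\in(0,1)$ we have $\log_2 {n\choose k} \ge k\log_2(n/k) \to \infty$. Hence the bound tends to zero, and $P_e \to 1$.

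The only step needing care is the first, namely making sure that randomness in the design and decoder, and possibly adaptively chosen thresholds, do not break the $2^T$ count. The conditioning argument handles this, since after conditioning the outcome sequence is still a binary string of length $T$ determined by $S$. Nothing else should present a real obstacle.
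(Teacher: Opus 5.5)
Your proposal is correct and takes essentially the paper's approach. The paper just defers to the counting proof of Baldassini et al., noting that it uses only the uniform prior on $S$ and the fact that outcomes are binary; that proof is exactly your bound $\Pr(\hat{S}=S) \le 2^T/{n \choose k}$ after conditioning on the design, threshold and decoder randomness, and you spell out the reduction that the paper leaves to the citation.
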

\begin{proof}
    This result is well-known for the standard GT model, and the proof in \cite{baldassini2013capacity} only makes use of the uniform prior on $S$ and the fact that the test results are binary.  Hence, an identical proof holds in our setting.
\end{proof}

Motivated by this result and the extensive developments in standard GT \cite{aldridge2019group}, we adopt the convention of measuring the number of tests using the \emph{rate}: A sequence of strategies (indexed by $n$) is said to achieve rate $R$ if it holds that $P_e \to 0$ and
\begin{equation}
    \liminf_{n \to \infty} \frac{\log_2 {n \choose k}}{T} \ge R.
\end{equation}
In other words, the number of tests is at most $\frac{1}{R} \log_2 {n \choose k}$ up to lower-order asymptotic terms (or equivalently $\frac{1}{R} k \log_2 \frac{n}{k}$ since $k = o(n)$), and $R$ represents the amount of information learned per test.  The counting bound reveals that the rate can never exceed 1.

In the rest of the paper, we focus entirely on non-adaptive strategies, in which all tests must be decided in advance.  For adaptive strategies, there is little to be gained, since well-known standard group testing methods asymptotically match the counting bound, and they can still be used in our setting as long as the threshold $\gamma^{(t)} = 1$ is allowed.  We highlight the following two test designs from the standard GT literature, which we will use in certain subroutines in our own algorithms.

\begin{definition} \label{def:Bernoulli}
    (Bernoulli Design)
    Under the {\em Bernoulli test design} with parameter $\nu > 0$, each item is placed in each test independently with probability $\frac{\nu}{k}$.
\end{definition}

\begin{definition} \label{def:NCC}
    (Near-Constant Column Weight Design) 
    Under the {\em near-constant column weight} (NCC) design  with parameter $\nu > 0$, also known as the \emph{near-constant tests-per-item design}, each item is independently placed in $L = \frac{\nu T}{k}$ tests\footnote{The choice of rounding here is inconsequential, so is omitted.} chosen uniformly at random with replacement.  (If some test gets chosen multiple times by some item, then it still only gets placed once, resulting in a column weight strictly less than $L$.)
\end{definition}

{\bf Notation.} The function $\log(\cdot)$ has base $e$, and we use the standard asymptotic notation $O(\cdot)$, $o(\cdot)$, $\Omega(\cdot)$, $\omega(\cdot)$, and $\Theta(\cdot)$, often with the implicit understanding of the implied constants being non-negative.  For two real-valued random variables $U,V$ we use the notation $U \sd V$ to mean that $U$ is \emph{stochastically dominated} by $V$, i.e., $\Pr(U \ge x) \le \Pr(V \ge x)$ for all $x \in \mathbb{R}$ (and thus $V$ is ``larger'' than $U$ in a probabilistic sense).

\subsection{Related work} \label{sec:related_work}

{\bf Standard group testing.} The information-theoretic and algorithmic limits of (noiseless) standard GT are now very well-understood; we provide a brief summary here, and refer the reader to \cite{aldridge2019group} for a detailed overview.  Some of the main milestones included rates of simple algorithms \cite{aldridge2014group}, information-theoretic achievability \cite{scarlett2016phase}, improved rates via near-constant column weight designs \cite{johnson2018performance,coja2020information}, and finally, optimal rates \cite{coja2020optimal} (along with a polynomial-time strategy based on spatial coupling).  We highlight two prominent standard GT algorithms (e.g., see \cite[Ch.~2]{aldridge2019group}) that we will use as subroutines:
\begin{itemize}
    \item The \emph{Combinatorial Orthogonal Matching Pursuit} (COMP) algorithm declares all items in negative tests as non-defective, and the rest as defective.
    \item The \emph{Definite Defectives} (DD) algorithm first forms the \emph{possibly defective} (PD) set as the set of items appearing in no negative tests.  It then declares an item as defective if it is PD and lies in at least one test with no other PD items, and declares the item as non-defective otherwise.
\end{itemize}

The optimal rates of standard GT with \emph{approximate recovery} were also established in \cite{scarlett2016phase,scarlett2017little}, with a two-sided error guarantee allowing both $o(k)$ false positives and $o(k)$ false negatives.  More relevant to our work is the recent work \cite{mcmorrow2026optimal} studying \emph{one-sided criteria}.  Specifically, they established the optimal rates in the case of either $o(k)$ false positives (SUPERSET) \emph{or} $o(k)$ false negatives (SUBSET), but not both.  We will use their results for the SUBSET problem as a stepping stone towards some of our GT-ST results.

{\bf Threshold group testing.} An early mathematical study of threshold group testing (TGT) was given by Damaschke~\cite{damaschke2006threshold}.  The model therein was more general than that of Section \ref{sec:setup} in that it allowed a \emph{gap}: For some parameters $(\ell,u)$, the test result is always $0$ with up to $\ell$ defectives, always $1$ with at least $u$ defectives, and is arbitrary otherwise.  Exact recovery turns out to be impossible when the ``gap'' $g = u-\ell-1$ is positive, but one can still attain recovery to within $g$ false positives and $g$ false negatives.

Follow-up works on TGT include \cite{ahlswede2011bounds,cheraghchi2013improved,bui2021improved} for the setting with a gap, and \cite{de2017subquadratic,bui2019efficiently,bui2024efficient} for the gap-free setting.  All of the preceding works focus on \emph{zero-error} (combinatorial) recovery, meaning that recovery must be deterministically guaranteed for all defective sets of a given size.  Moreover, their focus is on scaling laws rather than precise constants.  Thus, their settings and results are not comparable to  ours, and we avoid going into significant detail.

Our focus is on \emph{small-error} (probabilistic) recovery, allowing $o(1)$ error probability with respect to a random defective set and/or a random test design.  Threshold GT has historically received less attention in such settings, with a notable exception being the work of Laarhoven \cite{laarhoven2015asymptotics} providing exact threshold in the very sparse regime $k = O(1)$ (which is known to be significantly simpler than the general sparse regime $k = \Theta(n^{\theta})$).  However, two highly relevant papers very recently arose in concurrent work, which we now discuss in detail.

\textbf{Comparison to concurrent work.} In the late stages of our work, two closely related papers on (noiseless) threshold group testing appeared:  The work of van der Hofstad \emph{et al.}~\cite{van2026tgt} establishes the exact information-theoretic limits of the NCC design (subject to a certain analytic conjecture holding for $\gamma \ge 3$), and the work of Coja-Oghlan \emph{et al.}~\cite{coja2026tgt} establishes that the same rate can be achieved with polynomial-time decoding under a spatially coupled design (without needing the analytic conjecture).  Thus, these works on TGT provide precise counterparts to standard GT results from \cite{coja2020information} and \cite{coja2020optimal} respectively.  While there is some clear overlap with our work, we highlight the following key aspects in which our work differs:
\begin{itemize}
    \item We introduce the \emph{selectable threshold} setting, whereas \cite{van2026tgt,coja2026tgt} focus on the case of a fixed non-selectable threshold.
    \item Within our setting, we propose an approach that is conceptually simple, easy to analyze (though sometimes relying on existing results), and can be computationally efficient.  The idea is to identify most of the defectives using standard GT tests, and then resolve the remaining uncertainty via tests that use the highest threshold.  While the resulting rates are not as precise as \cite{van2026tgt,coja2026tgt}, their simplicity and connections to the more well-studied standard GT problem make them potentially desirable.
    \item We establish a converse that matches that of \cite{van2026tgt} but holds in greater generality.   Specifically, our converse holds even with selectable thresholds, and more importantly, even within a class of \emph{deterministic designs}, rather than only the random NCC design of Definition \ref{def:NCC}.  (We show that under the NCC design, our deterministic conditions hold with high probability.)  This level of generality also leads to largely distinct proof details, though still following the same high-level idea of identifying ``uninformative'' tests.
\end{itemize}

\subsection{Contributions}

Our main contributions are outlined as follows:
\begin{itemize}
    \item We introduce the GT-ST problem, as outlined above.
    \item We provide various algorithms based on first finding a near-complete subset of defectives using standard group testing queries and then resolving the remaining by additional tests that use $\gamma^{(t)} = \gamma_{\max}$:
    \begin{itemize}
        \item When the first step achieves the optimal threshold for the SUBSET problem \cite{mcmorrow2026optimal} (albeit only currently proved for a computationally inefficient method), we establish an achievable rate that approaches 1 as $\gamma_{\max}$ increases (with $O\big(\frac{1}{\sqrt{\gamma_{\max}}}\big)$ convergence speed), thus approaching the counting bound.
        \item When the first step instead uses a combination of the simple and computationally efficient COMP and DD algorithms \cite[Ch. 2]{aldridge2019group}, we obtain an achievable rate for the finite $\gamma_{\max}$ setting, and observe that it can be strictly higher than the case $\gamma_{\max} = 1$ (corresponding to standard group testing).
    \end{itemize}
    \item Beyond the simple counting bound, we establish a refined converse for the finite $\gamma_{\max}$ under suitable regularity conditions on the test matrix.  Among other things, this establishes that our achievable number of tests is tight under these regularity conditions in the dense limit $\theta \to 1$, albeit with gaps remaining for $\theta < 1$.  As noted above, this converse generalizes the converse from concurrent work \cite{van2026tgt}, which was proved therein for TGT under the random NCC design.
\end{itemize}
For the fixed-$\gamma_{\max}$ regime, an example rate plot (with $\gamma_{\max} = 10$) is shown in Figure \ref{fig:Gamma10}, and demonstrates two notable features: (i) As $\theta \to 1$, the achievability and converse match in the sense of having a matching slope (viewed differently, the bounds match asymptotically in terms of the ratio $\frac{\#{\rm tests}}{k \log k}$, whereas the ``inverse rate'' $\frac{\#{\rm tests}}{k \log \frac{n}{k}}$ approaches $\infty$), and (ii) For $\theta$ sufficiently close to 1, the achievability curve is strictly higher than the optimal threshold for standard group testing, thus showing that (selectable) threshold queries are strictly more powerful.

\begin{figure*}[!t] 
	{\centering \includegraphics[width=0.45\textwidth]{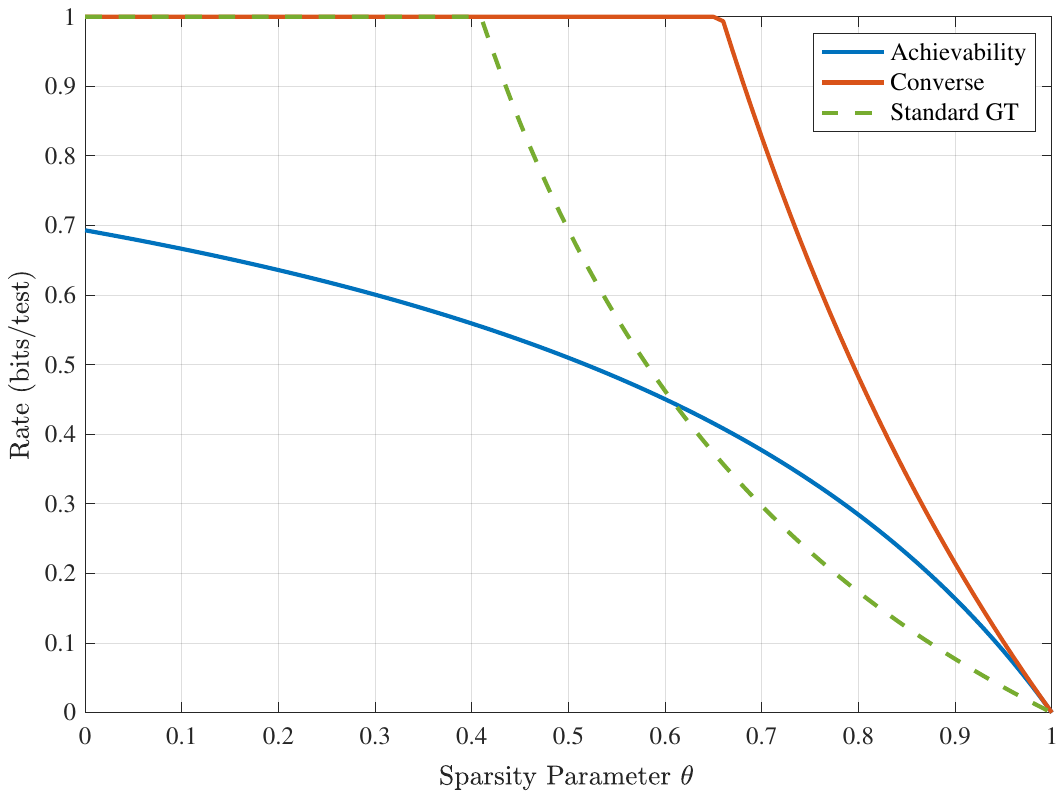} \par}
	\caption{Comparison of rates with $\gamma_{\max} = 10$.  The achievability curve is from Theorem \ref{thm:fixed}, the converse from Theorem \ref{thm: multi} (combined with Lemma~\ref{cor: bestdenom}), and the standard GT curve from \cite{coja2020optimal}.} \label{fig:Gamma10}
\end{figure*}

\section{Achievability Results} \label{sec:achievability}

\subsection{Warm-Up: Stronger Query Models}

In this section, we discuss more general 1-bit models, as they will provide useful insights that we can build on.  We broadly refer to a ``query'' as any yes/no question about a given test, e.g., our model in Section \ref{sec:setup} is based on threshold queries of the form ``are there at least $\gamma^{(t)}$ defectives in the test?''.

\subsubsection{General 1-bit Queries}

We first note that the counting bound (Theorem \ref{thm:counting}) applies for arbitrary 1-bit queries; any group testing strategy whose error probability is bounded away from one as $n \to \infty$ must have a number of tests $T$ satisfying $T \ge (1-\eta) \log_2 {n \choose k}$, where $\eta > 0$ is arbitrarily small. 

General 1-bit queries, in which $y_t$ is an arbitrary 1-bit function of $\{ i \,:\, i \in S, x_{ti} = 1 \}$ are ``too powerful'' in the sense that they make the problem trivial to solve.  Specifically, we can let the test include all items, in which case the preceding set becomes $S$ itself, and then we can trivially ask queries of the following form: ``Represent each $S$ by a length-$\log_2 {n \choose k}$ binary vector: Is the first bit 1?  Is the second bit 1? (etc.)''.  This trivially succeeds using $\log_2 {n \choose k}$ 1-bit queries, and matches the counting bound.

To reduce the power of the 1-bit queries, a natural starting point is to require that they return a function of the number of defectives in the test (i.e., $\psi_t$).  Group testing models satisfying this property are termed ``only defects matter'' in \cite{aldridge2019group}.

\subsubsection{Two-Stage Template via \textsc{Subset} and Quantitative Queries} \label{sec:qgt}

In this subsection, we momentarily suppose that we have access to a limited number of \emph{quantitative} queries that return $\psi_t$ itself, thus violating the condition of 1-bit queries.  After this subsection, we will only consider 1-bit queries.

We propose a strategy that mostly makes standard (1-bit) group testing queries, and then makes a small number of quantitative queries.  Accordingly, the tests are performed in two batches:
\begin{itemize}
    \item \textbf{Batch~1 (\textsc{Subset}).} In this batch, we perform only standard group testing queries.  We use these to run a non-adaptive routine \textsc{Subset} that returns $\widehat{S}_1 \subseteq S$ of size $|\widehat{S}_1| \ge k-k/\log^4 k$ (i.e., it misses at most $k/\log^4 k$ defectives).  This can be done with a number of tests meeting the counting bound \cite[Thm. 2]{mcmorrow2026optimal}.
    \item \textbf{Batch~2 (Reduced GT).} Design tests for a standard group testing problem with $k/\log^4 k$ defectives, but instead of observing the standard results, observe the quantitative results ($\psi_t$).  Subtract the contribution of items in $\widehat{S}_1$ from each measurement, then change any non-zero result to $1$ (i.e., an ``OR'' operation), so that the outcomes correspond to standard GT outcomes on the reduced defective set $S \setminus \widehat{S}_1$.  Run a standard GT decoder such as \textsc{COMP} or \textsc{DD} (see Section \ref{sec:related_work} for their descriptions) to recover the defectives that were not identified in batch 1.
\end{itemize}
Batch~2 then needs only $O\big((k/\log^4 k)\log n\big) = o(k \log \frac{n}{k})$ tests, and the overall rate is $1$.  However, we would like to avoid the use of quantitative queries; we now give a simple variation for this purpose.

\subsubsection{Adapting to Generalized 1-bit Queries}

Observe that in batch 2 above, the test outcomes lie in $\{0,1,\dotsc,k\}$.  As a result, we can trivially adapt the above strategy to only use 1-bit queries by repeating each of batch 2's test placements $\log_2(k+1)$ times, but with each repetition using a different choice of 1-bit query, namely, one for each of the $\log_2(k+1)$ bits in the binary expansion of $\psi_t$ (or similarly with suitable rounding to an integer).  This only increases the number of tests in batch 2 by an $O(\log k)$ factor to a total of $O\big((k/\log^3 k)\log n\big)$, and all tests are now 1-bit.  Observe that the number of tests in batch 1 dominates, and so this strategy achieves a rate of $1$.

\subsubsection{Adapting to Unbounded Threshold Queries} \label{sec:adapted_threshold_queries}

To conclude this ``warm-up'' subsection, we adapt the above strategy to the case that only threshold queries are allowed but there are no constraints on the selectable threshold (i.e., $\gamma_{\max} = \infty$, which is equivalent to $\gamma_{\max} = k$).  We outline the relevant analysis below and defer the full details to Appendix \ref{app:unbounded_proof}.

A naive strategy would be to learn $\psi_t$ by testing all thresholds $\gamma \in \{1,2,\dotsc,k\}$, but this would require $k$ queries to learn each value of $\psi_t$.  This can be reduced by noticing that the number of defectives (including those in $\widehat{S}_1$) in each test of batch 2 is $\Theta\big( \log^4 k \big)$ with high probability; however, testing all thresholds $\gamma \in \{1,2,\dotsc,\Theta(\log^4 k)\}$ would still lead to $O(k \log n)$ tests in batch 2.  The solution is to further narrow the required range using a concentration argument.

Specifically, consider a Bernoulli test design with  inclusion probability $p=\frac{\log^4 k}{k}$ so that $ \mu_\psi \coloneqq \mathbb{E}[\psi_t]=\log^4 k$. By the multiplicative Chernoff bound and a union bound over all tests in the second batch, it can be shown that the deviation of $\psi_t$ from its mean $\mu_\psi$ is at most $\Theta(\log^{5/2}k)$ for all $t$ with probability $1-o(1)$.
Hence one can emulate a quantitative result with only $O(\log^{5/2} k)$ threshold queries per test, so the total scales as
\begin{equation}
    O\left(\frac{k}{\log^4 k}\log n\right)\times O(\log^{5/2} k)=o(k\log n).
\end{equation}
Thus, batch 1 again dominates, and we meet the counting bound.  Formally, we have the following.

\begin{theorem} \label{thm:unbounded}
    {\em (Unconstrained Threshold Queries)}
    In the GT-ST problem with no constraints on the selectable threshold (i.e., $\gamma_{\max} = \infty$), there exists a choice of non-adaptive test design $\bX$ and decoder $\widehat{S}$ satisfying $P_e \to 0$ with a number of tests satisfying $T = \big( k \log_2\frac{n}{k} \big)(1+o(1))$, i.e., a rate of 1 bit/test is achievable.
\end{theorem}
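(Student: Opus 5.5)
Our approach is a two-batch design that formalizes the sketch in Section~\ref{sec:adapted_threshold_queries}. \textbf{Batch~1} uses only threshold-$1$ tests. We invoke the non-adaptive \textsc{Subset} routine of \cite[Thm.~2]{mcmorrow2026optimal}. With $T_1 = (k\log_2\frac{n}{k})(1+o(1))$ tests, it returns $\widehat{S}_1 \subseteq S$ with $|S\setminus\widehat{S}_1| \le k/\log^4 k$, with probability $1-o(1)$. Because the design is non-adaptive, batch~2 must be fixed in advance and cannot depend on $\widehat{S}_1$. This causes no difficulty, since $\widehat{S}_1$ is used only at the decoding stage.

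\textbf{Batch~2} starts from $T_2' = C(k/\log^4 k)\log n$ test placements, drawn i.i.d.\ Bernoulli with inclusion probability $p = \frac{\log^4 k}{k}$, so that $\mu_\psi = \mathbb{E}[\psi_t] = \log^4 k$. Each placement is repeated once for every threshold in the window $\gamma \in [\mu_\psi - c\log^{5/2}k,\ \mu_\psi + c\log^{5/2}k]$, which gives $O(\log^{5/2}k)$ threshold tests per placement. For fixed $S$, $\psi_t \sim \mathrm{Bin}(k,p)$. The multiplicative Chernoff bound with deviation $\delta\mu_\psi$, $\delta = c\log^{-3/2}k$, gives failure probability $2\exp(-\delta^2\mu_\psi/3) = 2\exp(-c^2\log k/3)$. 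A union bound over $T_2' = O(k\log n/\log^4 k)$ placements therefore succeeds once $c$ is a large enough constant, using $\log n = O(\log k)$ since $k=\Theta(n^\theta)$. On this event, $\psi_t = \min\{\gamma \text{ in window} : y_{t,\gamma}=0\}-1$, so $\psi_t$ is recovered exactly for every placement. The total batch-2 count is $O\big(\frac{k}{\log^{3/2}k}\log n\big) = o(k\log\frac{n}{k})$.

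\textbf{Decoding} computes the residual $\psi_t' = \psi_t - \sum_{i\in\widehat{S}_1}x_{ti}$, which equals the number of defectives of $S' = S\setminus\widehat{S}_1$ in placement $t$ because $\widehat{S}_1 \subseteq S$. It then sets $y_t' = \mathbbm{1}\{\psi_t'\ge 1\}$, forms the candidate set $\mathcal{C}$ of items outside $\widehat{S}_1$ that appear in no test with $y_t'=0$ (COMP restricted to $[n]\setminus\widehat{S}_1$), and outputs $\widehat{S} = \widehat{S}_1\cup\mathcal{C}$. The main obstacle is showing that COMP succeeds on this reduced problem even though $|S'| = k'$ is random and possibly much smaller than $k/\log^4 k$, and even though $S'$ depends on the batch-1 outcomes. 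Independence of the batch-2 matrix from batch~1 handles the dependence: we condition on $S'$ and $\widehat{S}_1$, and the batch-2 rows are still i.i.d.\ Bernoulli. Fix a non-defective $j\notin\widehat{S}_1$. Each placement independently contains $j$ and no item of $S'$ with probability at least $p(1-p)^{k/\log^4 k} \ge p/e$. Hence
\[
\Pr(j \in \mathcal{C}) \le (1-p/e)^{T_2'} \le \exp\bigl(-C\log n/e\bigr),
\]
and choosing $C$ large makes this $o(1/n)$. A union bound over $n$ items, together with the batch-1 and concentration failure events, gives $P_e\to 0$ with $T = T_1+o(k\log\frac{n}{k})$, which yields rate $1$. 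Every defective of $S'$ survives in $\mathcal{C}$ automatically, since it never lies in a test with residual $0$. The remaining checks are that $\mu_\psi$ may be rounded to an integer and that thresholds stay at least $1$, both of which hold for large $k$.
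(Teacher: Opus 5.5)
Your proposal is correct and follows essentially the same route as the paper's proof: \textsc{Subset} in batch~1, then a Bernoulli design with $p=\log^4 k/k$ whose placements are each repeated over a Chernoff-derived threshold window of width $O(\log^{5/2}k)$, after which the $\widehat{S}_1$ contribution is subtracted and COMP is run on the residual outcomes. One small slip: since $\log(1-p)\le -p$, you actually have $(1-p)^{1/p}\le e^{-1}$, so the per-placement bound should read $p(1-p)^{k/\log^4 k}\ge (1-o(1))\,p/e$, which is absorbed by taking your constant $C$ slightly larger.
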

See Appendix \ref{app:unbounded_proof} for a detailed proof of this result.

\subsection{The Large $\gamma_{\max}$ Limit}

In this subsection, we seek to sharpen the understanding from Theorem \ref{thm:unbounded} by studying the case that $\gamma_{\max}$ is ``bounded but large''.  Specifically, we seek to understand how quickly the rate approaches 1 as $\gamma_{\max} \to \infty$, where this limit is taken \emph{after} having already taken $n \to \infty$ with $k = \Theta(n^{\theta})$.  We outline the relevant analysis below and defer the full details to Appendix \ref{app:large_proof}.

We adopt a similar strategy to the previous subsection, but with modified parameters so that $\psi_t$ in batch 2 concentrates around $\gamma_{\max} / 2$ rather than around $\log^4 k$.  
In batch 1, we again apply \textsc{Subset} to identify $k-k/\log^4 k$ defectives.  In batch 2, we apply Bernoulli testing with inclusion probability $p=\frac{\gamma_{\max}}{2k}$ and keep only those tests whose number of defectives lies in the interval $(\gamma^-, \gamma^+) \coloneqq(\frac{\gamma_{\max}}{2}-2\sqrt{\gamma_{\max}},\,\frac{\gamma_{\max}}{2}+2\sqrt{\gamma_{\max}})$ (see below for how this is detected). Similar to the case with unbounded $\gamma_{\max}$, we repeat each test $O(\sqrt{\gamma_{\max}})$ times by initially using threshold $\lfloor \gamma^-\rfloor $, and repeating the test with incrementing thresholds until reaching $\lceil \gamma^+ \rceil$. This allows us to learn $\psi_t$ if it lies within $(\gamma^-, \gamma^+)$, or learn that $\psi_t$ lies outside of the interval, in which case it is discarded. The multiplicative Chernoff bound (Lemma \ref{lem:chernoff}) implies that the probability of $\psi_t$ falling outside of this interval is upper bounded by an absolute constant.
Thus, the window captures a constant fraction of tests, so discarding the rest loses only a constant factor (with high probability). 

Observe that the ``residual'' problem (with the \textsc{Subset} defectives removed) has up to $\frac{k}{\log^4 k}$ unknown defectives and inclusion probability $p = \frac{\gamma_{\max}}{2k}$. Using Bernoulli's inequality, the probability of a specific test containing none of these defectives (which we term being \emph{residually negative}) is lower bounded by
\begin{equation}
    \left(1-\frac{\gamma_{\max}}{2k}\right)^{k/\log^4 k} \geq  1-\frac{\gamma_{\max}}{2\log^4 k} = 1-o(1).
\end{equation}
Now let us momentarily assume that the remaining residually negative tests are quantitative (i.e., $\psi_t$ is observed), and that there are $T_2'$ such tests (among those that were not discarded). Then, we can apply the same steps as in Section \ref{sec:qgt} (i.e., subtract the contribution from $\widehat{S}_1$, change non-zero tests to $1$) to convert it to a classical group testing problem. We can then recover the remaining defective items from these ``converted'' tests using these residually negative tests. A fixed non-defective fails to appear in any kept residually negative test with probability $\big(1 - \frac{\gamma_{\max}}{2k}\big)^{T_2'} \leq \exp\big(-\frac{\gamma_{\max}}{2k}T_2'\big)$. Thus, if $T_2'=\Theta\big(\frac{k\log n}{\gamma_{\max}}\big)$ with a large enough implied constant, then every non-defective will appear in at least one such negative test with probability $1-o(1)$, implying successful recovery.

Each ``quantitative test'' corresponds to evaluating $O(\sqrt{\gamma_{\max}})$ threshold queries (i.e., the above-mentioned interval length), and hence, the overall number of 1-bit queries (to non-discarded, residually negative tests) sufficient for accurate recovery is
\begin{equation}
    T''_2=\Theta\left(\frac{k\log n}{\sqrt{\gamma_{\max}}}\right). \label{eq:T2_gamma_max}
\end{equation}
To ensure that there are $T''_2$ non-discarded and residually negative 1-bit tests, we set the actual number of conducted tests to be
\begin{equation}
    \label{eq:actual_T} T_2 = \Theta(T''_2) = \Theta\left(\frac{k\log n}{\sqrt{\gamma_{\max}}}\right).
\end{equation}
Since a constant fraction of $T_2$ will be discarded, and a further $\gamma_{\max}/(2\log^4 k) = o(1)$ fraction will not be residually negative, we can set the implied constant in \eqref{eq:actual_T} sufficiently large to attain \eqref{eq:T2_gamma_max} with high probability (by binomial concentration).  Adding the $\big(k\log_2\frac{n}{k}\big)(1+o(1))$ tests from batch 1, we see that the rate attained is $\big(1+O(\gamma_{\max}^{-1/2})\big)^{-1}=1-O(\gamma_{\max}^{-1/2})$.  Formally, we have the following.

\begin{theorem} \label{thm:large}
    {\em (The Large-$\gamma_{\max}$ Limit)}
    For the GT-ST problem with maximum threshold $\gamma_{\max}$, there exists a choice of non-adaptive test design and decoder satisfying $P_e \to 0$ with a number of tests satisfying the following as $n \to \infty$:
    \begin{equation}
        T = \bigg( k \log_2\frac{n}{k} \bigg)(1+o(1)) + O\bigg( \frac{k\log n}{\sqrt{\gamma_{\max}}} \bigg), \label{eq:T_large}
    \end{equation}
    Hence, the rate $1 - O\big( \frac{1}{\sqrt{\gamma_{\max}}} \big)$ is achieved as $\gamma_{\max} \to \infty$, where the limit is after already having taken $n \to \infty$ with $k = \Theta(n^{\theta})$ in accordance with the definition of an achievable rate.
    
\end{theorem}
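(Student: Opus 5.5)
We follow the two-batch template described before the statement and bound each source of error separately, then union bound. \textbf{Batch 1} uses only threshold-1 tests and runs \textsc{Subset} from \cite[Thm.~2]{mcmorrow2026optimal}. With $\big(k\log_2\frac{n}{k}\big)(1+o(1))$ tests, this returns $\widehat{S}_1\subseteq S$ with $|S\setminus\widehat{S}_1|\le k/\log^4 k$, with probability $1-o(1)$. \textbf{Batch 2} draws $M=C\frac{k\log n}{\gamma_{\max}}$ independent Bernoulli placements, each item included with probability $p=\frac{\gamma_{\max}}{2k}$ (Definition~\ref{def:Bernoulli} with $\nu=\gamma_{\max}/2$). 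Each placement is repeated with every threshold $\gamma\in\{\lfloor\gamma^-\rfloor,\dotsc,\lceil\gamma^+\rceil\}$. This uses $O(\sqrt{\gamma_{\max}})$ queries per placement, for a total of $T_2=O\big(\frac{k\log n}{\sqrt{\gamma_{\max}}}\big)$, which gives \eqref{eq:T_large}. From these outcomes the decoder reads off $\psi_t$ exactly when $\psi_t\in[\lceil\gamma^-\rceil,\lfloor\gamma^+\rfloor]$ (the largest threshold answered 1, which is not the top one), and otherwise discards placement $t$. Batch 2 is designed independently of batch 1 and of $S$, so we may condition on the batch-1 outcome.

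\textbf{Decoding and correctness.} For each kept placement $t$, compute $r_t=\psi_t-\sum_{i\in\widehat{S}_1}x_{ti}$, which is the number of residual defectives in the test. Then run COMP on the pairs $(x_t,\mathbbm{1}\{r_t>0\})$, restricted to items outside $\widehat{S}_1$, and output $\widehat{S}_1$ together with the COMP output. COMP never declares a residual defective negative, so an error occurs only if some non-defective $i$ appears in no kept test with $r_t=0$. Fix such an $i$ and a placement $t$, and call the placement good for $i$ if the following three events hold:
\begin{itemize}
\item $x_{ti}=1$, which has probability $p$ and is independent of the other columns;
\item $t$ is residually negative, which has probability at least $1-\frac{\gamma_{\max}}{2\log^4 k}$ by Bernoulli's inequality;
\item $t$ is kept, i.e.\ $\psi_t\in(\gamma^-,\gamma^+)$.
\end{itemize}
The count $\psi_t\sim\mathrm{Bin}(k,p)$ has mean $\gamma_{\max}/2$ and variance at most $\gamma_{\max}/2$, and $i\notin S$, so $\psi_t$ is independent of $x_{ti}$. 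Hence Chebyshev's inequality, or Lemma~\ref{lem:chernoff}, gives $\Pr(\psi_t\notin(\gamma^-,\gamma^+))\le \frac{\gamma_{\max}/2}{4\gamma_{\max}}=\frac18$. Combining the three events by a union bound on the complements of the last two, placement $t$ is good for $i$ with probability at least $p\,(1-\frac18-o(1))\ge p/2$. Placements are independent, so $i$ is missed with probability at most $(1-p/2)^M\le \exp(-pM/2)=\exp(-\tfrac{C}{4}\log n)$. Taking $C>8$ and union bounding over the $n$ items gives failure probability $o(1)$.

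The one subtle point is the independence used above. The residual set $S\setminus\widehat{S}_1$ depends on batch 1 only, and $\psi_t$ depends on the column entries of defectives, both of which are independent of $x_{ti}$ for $i\notin S$. This is what makes the per-test success probability bound valid uniformly over placements; no binomial concentration over the number of kept tests is needed, since the direct union-bound argument absorbs the discarding. A minor point is that the window must contain at least one integer threshold $\ge 1$, which holds for $\gamma_{\max}$ larger than an absolute constant; small $\gamma_{\max}$ is trivial because it is absorbed into the $O(\cdot)$. Finally, the rate is $\big(1+O(\gamma_{\max}^{-1/2})\big)^{-1}$ after $n\to\infty$, because $\log n=\Theta(\log\frac{n}{k})$ for $\theta<1$.
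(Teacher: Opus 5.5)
Your proposal is correct and follows the paper's proof: the same \textsc{Subset} first batch, the same Bernoulli second batch with $p=\frac{\gamma_{\max}}{2k}$ repeated over the thresholds $\lfloor\gamma^-\rfloor,\dotsc,\lceil\gamma^+\rceil$, and the same decoder that declares non-defective every item outside $\widehat{S}_1$ that appears in a kept residually negative test. The only difference is bookkeeping: the paper first shows by binomial concentration that at least $T_2'/4$ tests are both kept and residually negative, then conditions on this, whereas you bound directly the probability (at least $p/2$) that a single placement contains $i$, is kept, and is residually negative, and then use independence across rows, which lets you skip the concentration step.
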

The detailed proof of this result can be found in Appendix \ref{app:large_proof}.

\subsection{The Finite $\gamma_{\max}$ Regime} \label{sec:finite_gamma}

We now look at attaining a closer understanding when $\gamma_{\max}$ is a fixed constant, starting with $\gamma_{\max} = 2$.  We outline the relevant analysis below and defer the full details to Appendix \ref{app:fixed_proof}.

\subsubsection{The Case $\gamma_{\max}=2$}

We start with the case $\gamma_{\max} = 2$, and again adopt a two-batch approach.  This means that each test result can reveal one of four things: $\psi_t = 0$, $\psi_t \ge 1$, $\psi_t \in \{0,1\}$, or $\psi_t \ge 2$ (only the former two if $\gamma = 1$, and only the latter two if $\gamma = 2$). 

Different from before, we only use one threshold in each batch: $\gamma = 1$ (i.e., standard GT) in batch 1, and $\gamma = 2$ in batch 2. Throughout, we will adopt separate NCC designs (Definition \ref{def:NCC}) in each batch; in the first batch we set $L_1 = \frac{T_1 \log 2}{k}$, and in the second we set $L_2 = \frac{T_2\nu}{k}$ for some parameter $\nu > 0$ that can be optimized.

\noindent \textbf{Batch~1:} In the first batch, we use the threshold $\gamma^{(t)} = 1$ and adopt a design achieving rate $\log 2$, so that $T_1 = (\frac{1}{\log2} k\log_2(n/k)) (1 + o(1))$. By applying the approximate recovery versions of \textsc{COMP} and \textsc{DD} decoding, we can attain (with $1-o(1)$ probability) a subset $S_{\mathrm{DD}}\subseteq S$ of size $k(1-o(1))$ and a superset $S_{\mathrm{COMP}}\supseteq S$ of size $k(1+ o(1))$ \cite[App. D]{mcmorrow2026optimal}.

\noindent\textbf{Batch~2:} We use the threshold $\gamma^{(t)} = 2$. For each $i\in S_{\mathrm{COMP}}\setminus S_{\mathrm{DD}}$, we say that a test \emph{resolves $i$} if it contains $i$, exactly one element of $S_{\mathrm{DD}}$, and no other element of $S_{\mathrm{COMP}}\setminus S_{\mathrm{DD}}$. Such a test definitively classifies $i$ from its outcome. Thus, to successfully recover the remaining defectives in $S \setminus S_{\rm DD}$, it suffices to resolve the status of all items in $S_{\rm COMP} \setminus S_{\rm DD}$. For a given test, the choice of test design implies that probability of exactly 1 item from $S_{\rm DD}$ being included in a given test is $\nu e^{-\nu}(1+o(1))$. Thus, the expected number of such tests is $T_2 \, \nu e^{-\nu}(1+o(1))$. It can then be shown that the actual number of such tests is tightly concentrated around its mean with high probability due to McDiarmid's inequality (Lemma \ref{lem:mcdiarmid}).

Conditioned on this concentration event, and fixing an item $i \in S_{\rm COMP} \setminus S_{\rm DD}$, it follows that each test drawn for $i$ to be included in contains exactly one item from $S_{\rm DD}$ with probability at least $\nu e^{-\nu}(1-o(1))$. Thus, the probability that it is not resolved by a test in \emph{any} of the $L_2$ draws with replacement is at most $(1 - \nu e^{-\nu}(1-o(1)))^{L_2}$, which can be expressed as $\exp\big(\frac{T_2 \nu}{k} \log(1 - \nu e^{-\nu})(1-o(1))\big)$. 
    Following a union bound over at most $k$ uncertain items, it follows that $T_2 =  \frac{k\log k}{\nu\,\log \frac{1}{1-\nu e^{-\nu}}}(1+o(1))$ tests suffice to ensure every $i \in S_{\rm COMP} \setminus S_{\rm DD}$ is resolved with high probability. Numerically optimizing $\nu$ gives a denominator of roughly $0.636$, improving on $(\ln 2)^2\approx 0.48$ (e.g., appearing in the analysis of DD for standard GT in \cite{johnson2018performance}). Summing the contribution from batch 1 and batch 2, we obtain the following (see Theorem \ref{thm:fixed} for a general and formal version):
\begin{equation}
    T \approx \frac{1}{\log2} k\log_2(n/k) + \frac{k\log k}{0.636}. \label{eq:T_NCC_gamma2}
\end{equation}
Writing $\log k=\frac{\theta \log 2}{1-\theta}\log_2\frac{n}{k}$, the normalized rate becomes roughly $\big(\frac{1}{\log 2} +\frac{\theta\log 2}{0.636(1-\theta)}\big)^{-1}$. Note that the second term dominates for $\theta$ close to one, which will be important when comparing to the converse in Section \ref{sec:converse}.

\subsubsection{Extension to general $\gamma_{\max} = r$}

For the general case, we adopt the same two-batch approach, alongside the same test designs.  Batch 1 proceeds in exactly the same manner as before, with the only changes arising in the second batch. 

In the second batch, each item $i \in S_{\rm COMP} \setminus S_{\rm DD}$ is now resolved in a test if it contains exactly $r-1$ items from $S_{\rm DD}$, and $i$ is the only item included from $S_{\rm COMP} \setminus S_{\rm DD}$. The probability that a test contains exactly $r-1$ items from $S_{\mathrm{DD}}$ can be determined via a Poisson approximation to the binomial (along with $|S_{\rm DD}| = k(1-o(1))$): 
\begin{equation}
    \label{eq:S_DD_poisson} \Pr\left({\rm Bin}\left(\lvert S_{\rm DD}\rvert, \frac{\nu}{k}(1+o(1))\right) = r-1\right) = \frac{\nu^{r-1}e^{-\nu}}{(r-1)!}(1+o(1)).
\end{equation}
As in the previous case, the number of tests that contain exactly $r-1$ items from $S_{\rm DD}$ is tightly concentrated around its mean $T_2 \,\nu^{r-1}e^{-\nu}/(r-1)!$ with high probability due to McDiarmid's inequality. Conditioned on this, each uncertain item has $L_2=\frac{T_2 \nu}{k}$ placements that each ``succeed'' with probability $\nu^{r-1}e^{-\nu}/(r-1)! \cdot (1+o(1))$, meaning the ``overall failure'' probability is given by $\exp\big(\tfrac{T_2\nu}{k}\log(1-\nu^{r-1}e^{-\nu}/(r-1)!)(1-o(1))\big)$. A union bound then implies that
\begin{equation}
    T_2 = \frac{k\log k}{\nu\,\log\Big(\frac{1}{1-\nu^{r-1}e^{-\nu}/(r-1)!}\Big)}(1+o(1))
\end{equation}
tests suffice for all uncertain items to be resolved with high probability. 

The above findings are summarized as follows.

\begin{theorem} \label{thm:fixed}
    {\em (The Fixed-$\gamma_{\max}$ Regime)} For the GT-ST problem with maximum threshold $\gamma_{\max}$, there exists a choice of non-adaptive test design and a polynomial-time decoder satisfying $P_e \to 0$ with a number of tests satisfying
    \begin{equation}
        T = \bigg( \frac{1}{\log 2} k \log_2\frac{n}{k} + \min_{\nu > 0}\frac{k\log k}{\nu\,\log\Big(\frac{1}{1-\nu^{\gamma_{\max}-1}e^{-\nu}/(\gamma_{\max}-1)!}\Big)} \bigg)(1+o(1)).
    \end{equation}
In particular, for $\gamma_{\max} = 2$, this becomes
    \begin{equation}
        T = \bigg( \frac{1}{\log 2} k \log_2\frac{n}{k} + \min_{\nu > 0}\frac{k\log k}{\nu\,\log\Big(\frac{1}{1-\nu e^{-\nu}}\Big)} \bigg)(1+o(1)).
    \end{equation}
\end{theorem}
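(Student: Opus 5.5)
We follow the two-batch template described before the statement, with $r=\gamma_{\max}$ and the constant $\nu$ in batch 2 chosen (as a fixed constant independent of $n$) to approximately minimize the second term. Since the first term and the $\min_\nu$ term are separately attained up to $(1+o(1))$ factors, it suffices to fix an arbitrary $\nu>0$ and show that the corresponding $T_1+T_2$ gives $P_e\to 0$.

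\textbf{Batch 1.} We use threshold $1$ and an NCC design with $L_1 = T_1\log 2/k$ and $T_1 = \frac{1}{\log 2}k\log_2\frac{n}{k}(1+\epsilon)$. We invoke \cite[App.~D]{mcmorrow2026optimal} to obtain, with probability $1-o(1)$, sets $S_{\rm DD}\subseteq S\subseteq S_{\rm COMP}$ with $|S_{\rm DD}|\ge k(1-o(1))$ and $|S_{\rm COMP}|\le k(1+o(1))$. Both sets are computable in polynomial time. We then condition on this event and on the realization of batch 1. Batch 2 is drawn independently of batch 1 and of $S$, so this conditioning does not bias batch 2.

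\textbf{Batch 2.} We use threshold $r$ and an NCC design with $L_2=\nu T_2/k$. The decoder declares $i\in U:=S_{\rm COMP}\setminus S_{\rm DD}$ defective iff some test that contains $i$, exactly $r-1$ items of $S_{\rm DD}$, and no other item of $U$ is positive. In such a test $\psi_t = r-1+\mathbbm{1}\{i\in S\}$, because every other item of the test lies outside $S_{\rm COMP}$ and is therefore non-defective. We output $S_{\rm DD}$ together with the items of $U$ declared defective. This decoder is polynomial time, and it is correct whenever every $i\in U$ has at least one resolving test.

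Fix $i\in U$ and consider one of its $L_2$ test draws. That test is uniform over $[T_2]$. The remaining $|S_{\rm COMP}|-1$ items of $S_{\rm COMP}$ are placed independently of $i$. Each lands in a fixed test with probability $1-(1-1/T_2)^{L_2}=\frac{\nu}{k}(1+o(1))$, and distinct items are independent.

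Hence the probability that the drawn test contains exactly $r-1$ items of $S_{\rm DD}$ and none of $U\setminus\{i\}$ is
\[
\Pr\Big(\mathrm{Bin}\big(|S_{\rm DD}|,\tfrac{\nu}{k}(1+o(1))\big)=r-1\Big)\cdot\Big(1-\tfrac{\nu}{k}\Big)^{o(k)} = q(1+o(1)), \qquad q:=\frac{\nu^{r-1}e^{-\nu}}{(r-1)!},
\]
using the Poisson approximation \eqref{eq:S_DD_poisson}.

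The main obstacle is that the $L_2$ draws of item $i$ are not independent trials. They share the configuration of the other items, so one cannot simply write $(1-q)^{L_2}$. We handle this as the sketch suggests. First, we reveal the placements of all items of $S_{\rm COMP}\setminus\{i\}$. Let $G$ be the set of \emph{good} tests: those with exactly $r-1$ items of $S_{\rm DD}$ and no item of $U\setminus\{i\}$. Then $\mathbb{E}|G| = qT_2(1+o(1))$. Changing one item's $L_2$ placements alters $|G|$ by at most $2L_2 = O(\log k)$, so McDiarmid's inequality (Lemma~\ref{lem:mcdiarmid}) gives $|G|\ge qT_2(1-o(1))$ with probability $1-\exp(-\Omega(T_2^2/(kL_2^2)))$. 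Since $T_2=\Theta(k\log k)$ and $L_2=\Theta(\log k)$, this probability is $1-\exp(-\Omega(k))$. This is small enough to union bound over the at most $o(k)$ items $i$, or it can be applied once to a single good set computed with respect to $S_{\rm COMP}$, with an $o(1)$ correction for the items of $U$ involved.

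Conditioned on this, the $L_2$ draws of $i$ are i.i.d.\ uniform and independent of $G$. Hence
\[
\Pr(i\text{ unresolved})\le (1-q(1-o(1)))^{L_2}=\exp\Big(-\tfrac{\nu T_2}{k}\log\tfrac{1}{1-q}(1-o(1))\Big).
\]
Choosing $T_2=\frac{k\log k}{\nu\log(1/(1-q))}(1+\epsilon)$ makes this $k^{-1-\Omega(\epsilon)}$. A union bound over $|U|\le k$ items then yields $P_e\to 0$. Summing $T_1+T_2$ and letting $\epsilon\to0$ slowly gives the claimed bound, with the minimizing $\nu$ plugged in. Setting $r=2$ gives the special case.
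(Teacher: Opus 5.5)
Your proposal is correct and follows essentially the same route as the paper: the same two-batch design (COMP/DD on an NCC design at threshold $1$, then an NCC design at threshold $\gamma_{\max}$) and the same resolving-test decoder, with McDiarmid used to lower bound the number of good tests before treating $i$'s $L_2$ draws as independent trials and union bounding over $i$. The only cosmetic difference is the concentration step. You apply McDiarmid at the item level directly to the set of tests good for $i$, with bounded differences $2L_2$. The paper instead applies it at the draw level, with bounded differences $2$, to the count $V$ of tests containing exactly $r-1$ items of $S_{\rm DD}$, and then deterministically subtracts the $U_i = o(T_2)$ tests touching $U\setminus\{i\}$; this is the alternative you mention at the end.
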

The detailed proof of this result can be found in Appendix \ref{app:fixed_proof}.

\section{Converse Results} \label{sec:converse}

\subsection{Useful Definitions}

We introduce some useful notation in addition to that of Section \ref{sec:setup}.  For each test $t \in [T]$, we let $w_t+1$ denote its \emph{weight},\footnote{We define the number of items in a test as $w_t + 1$ purely for notational convenience in the proof of Theorem \ref{thm: sing}, as it avoids the extensive inclusion of $-1$ terms if $w_t$ were the weight.} i.e., the number of items included in test $t$, and for each item $i \in [n]$, we denote by $d_i$ its \emph{degree}, i.e., the number of tests in which item $i$ participates.  We let $\Gamma$ denote the set of thresholds that the algorithm is allowed to choose; in Section \ref{sec:setup} we adopted $\Gamma = \{1,2,\dotsc,\gamma_{\max}\}$, but we will consider the more general case of an \emph{arbitrary set of thresholds}:
\begin{equation}
\Gamma = \{\gamma_1, \dots, \gamma_\ell\}, \label{eq:Gamma}
\end{equation}
where $\ell$ is the number of unique thresholds.  
We do this because this generalization comes ``for free''.

As with converse bounds for standard group testing \cite{coja2020information,bay2022optimal}, we will consider two prior models for the defective set $S \subset [n]$:
\begin{itemize}
    \item All of our main results will be stated for the \emph{combinatorial prior} in which $S$ is uniform on the $n \choose k$ subsets of $[n]$ of size $k$, as we assumed in Section \ref{sec:setup}. 
    \item As a stepping stone in the analysis, we will also consider the \emph{i.i.d.~prior}, in which each item is independently defective with some probability $q$.
\end{itemize}



Lastly, we introduce a notion of items being \emph{masked} (or \emph{disguised}) that generalizes such notions from standard GT  \cite{aldridge2018individual,coja2020optimal,bay2022optimal}. Let $i \in [n]$ and let $t \in [T]$ be a test with threshold $\gamma^{(t)}$.  We say that the test $t$ is \emph{informative} for item $i$ if $i$ participates in $t$ and, among the remaining items in the test (excluding $i$), there are exactly $\gamma^{(t)} - 1$ defectives.  This is significant because it means that the outcome of test $t$ changes when $i$ is changed from defective to non-defective or vice versa. An item $i$ is said to be \emph{masked} if there are no tests that are informative for $i$.  This is a random event due to the randomness in $S$, and we let $D_i$ denote the event that $i$ is masked (or $i$ is \emph{disguised}, hence the symbol $D$).

\subsection{The Single-Threshold Setting}

We begin with the simple setting in which the threshold set contains only a single value, i.e., $\Gamma = \{\gamma\}$.  This recovers the TGT problem, and it captures most of the ideas for the GT-ST problem while simplifying notation and proofs.

\subsubsection{Discussion: Failure of Standard GT Techniques}

It is tempting to seek a converse via the same techniques as standard GT \cite{coja2020optimal}, but (selectable) threshold group testing faces major technical barriers when doing so, which we now proceed to outline.  

If we momentarily consider an i.i.d.~defectivity prior in which each item is defective independently with some probability $q$ (as is done in \cite{coja2020optimal}), then a simple calculation reveals that for a given test $t$ containing $i$, the probability of the test being uninformative for $i$ is $\Phi_{t,i} \coloneqq 1 - \Pr(\mathrm{Bin}(w_t,q)=\gamma-1)$, where $w_t = \sum_{j : j \ne i} x_{tj}$ is the number of items in test $t$ excluding $i$.  Moreover, the binomial probability can readily be characterized tightly using a Poisson approximation (when $q = o(1)$ and $\gamma$ is fixed).

However, the major technical obstacle is that we \emph{cannot directly conclude that} $\Pr(D_i) \ge \prod_{t : x_{ti} = 1} \Phi_{t,i}$.  When $\gamma = 1$, such a claim follows from the FKG inequality and the fact that the ``uninformative'' events are \emph{increasing} -- changing any item from non-defective to defective can only expand (or keep unchanged) the set of tests that are uninformative for $i$.  For threshold GT with $\gamma \ge 2$, this is no longer the case; for instance, a given test may have contained item $i$ and a total of $\gamma - 2$ other defectives, and changing a non-defective to defective could increase that to $\gamma - 1$, thus changing the test from uninformative to informative.

Our results turn out to match what \emph{would have been obtained} if the FKG inequality were to remain valid, but we do so via a distinct approach that separates the tests whose number of defectives is ``at most $\gamma - 2$'' vs.~``at least $\gamma$''.  In order to control the dependencies between the relevant events, we turn out (at least for our proof) to require some additional assumptions, but we argue that these are reasonable, and show that they are indeed satisfied by the NCC random design.

\subsubsection{Assumptions}

The regularity assumptions that we adopt are formally stated as follows.

\begin{assumption} \label{asm:regularity}
    The number of tests scales as $T = \Theta(k \log n)$, each item $i \in [n]$ appears in $d_i = \Theta(\log n)$ tests, and each test $t \in [T]$ contains $w_t + 1 = \Theta(n/k)$ items. Moreover, the following \emph{near-constant degree and weight} conditions hold:
    \begin{align}
    &\frac{\max_{i \in [n]}d_{i}}{\min_{i \in [n]}d_{i}} = 1+o(1), \quad \frac{\max_{t \in [T]} w_t}{\min_{t \in [T]} w_t} = 1+o(1).
    \end{align}
    For convenience, we define $\zeta = \frac{\max_{i \in [n]}d_{i}}{\min_{i \in [n]}d_{i}}$ (and thus $\zeta = 1+o(1)$), and let $\Delta$ denote the constant such that such that $w_t = \frac{\Delta n}{k} (1 \pm o(1))$ (for all $t \in [T]$).
\end{assumption} 

\begin{assumption} \label{asm:sparsity}
    There exists $c = o\big( \frac{n}{k \log n} \big)$ such that any two distinct tests intersect in at most $c$ items.  Combined with Assumption \ref{asm:regularity}, this implies that 
    \begin{equation}
     \frac{c \cdot \max_{i \in [n]}d_{i}}{\min_{t \in [T]} w_{t}} =  o(1).
    \end{equation}
\end{assumption} 

Assumption \ref{asm:regularity} states that all items have the same degree to within a $1+o(1)$ factor, and all tests have the same weight to within a $1+o(1)$ factor.  The assumption $T = \Theta(k \log n)$ is without loss of generality for proving a converse, since the counting bound prevents any smaller scaling (when $k = \Theta(n^{\theta})$), and all of our achievability results match this scaling.  Given the near-constant weight condition, the assumption of row weight $\Theta(n/k)$ is also easily justified, because any other scaling would yield $o(1)$ entropy per test, thus requiring $T = \omega(k \log n)$ via Fano's inequality (e.g., see \cite{atia2012boolean}).  Moreover, given a near-constant row weight of $\Theta(n/k)$ and $T = \Theta(k \log n)$, a simple double counting argument reveals that the column weight must be $\Theta(\log n)$.

The interpretation of Assumption \ref{asm:sparsity} is less immediately obvious.  It ensures that the tests do not overlap too much, so that the information contributed by two different tests remains largely ``independent''.  (Specifically, the fraction of overlap is constrained to scale as $o\big( \frac{1}{\log n} \big)$.)  Intuitively, this would be a preferable property to have anyway, but dropping it as a formal assumption here still appears to be challenging.

In Appendix \ref{app:verify}, we show that the NCC design (Definition \ref{def:NCC}) with $\nu = \Theta(1)$ and $T = \Theta(k \log n)$ satisfies both assumptions with probability $1-o(1)$.  Moreover, the analysis shows that the above $o(1)$ terms come out to be much smaller than required (e.g., polynomially small in $n$), meaning that the assumptions hold with a ``significant margin'' for this design.  

For the Bernoulli design (Definition \ref{def:Bernoulli}), the degrees $d_i$ turn out to fail to be near-constant, since they do not concentrate sharply enough.  While this is actually the cause of the design being \emph{weaker} than NCC design (e.g., see the discussion in \cite[Sec.~2.7]{aldridge2019group}), this observation may still limit ``how mild'' Assumption \ref{asm:regularity} can be considered to be.  While our analysis can be adapted to handle $\zeta > 1$, the resulting lower bound on $T$ becomes worse as $\zeta$ increases.

\subsubsection{Statement of Converse Result}

We are now ready to state our converse result for the single-threshold setting, which we prove in Appendix \ref{app:pf_conv_single}.

\begin{theorem}\label{thm: sing}
    {\em (Converse for the Single-Threshold Setting)} 
   In the TGT problem with $k = \Theta(n^\theta)$ defectives and threshold $\gamma$, in order to have $\Pr(\hat{S}\neq S) \nrightarrow 1$, it is necessary that
   \begin{equation}
       \label{eq:sing_T_lb} T \geq \min_{\Delta > 0} \frac{k \log k}{-\displaystyle\Delta \log \left(1 - \frac{\Delta^{\gamma - 1}e^{-\Delta}}{(\gamma - 1)!}\right)}(1-o(1))
   \end{equation}
   for any test design $\bX$ satisfying Assumptions \ref{asm:regularity} and \ref{asm:sparsity}, and any decoding algorithm.
\end{theorem}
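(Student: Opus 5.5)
The plan follows the standard masked-defective converse route of \cite{coja2020optimal,bay2022optimal}. Suppose some defective $i\in S$ and some non-defective $j\notin S$ are both masked. Then $S$ and $S' = S\setminus\{i\}\cup\{j\}$ produce identical test outcomes, provided no test is informative for both swaps at once. Using Assumption~\ref{asm:sparsity}, the tests shared by $i$ and $j$ form a negligible fraction, and these can be handled separately. Under the uniform prior, any decoder then errs with probability at least roughly $1/2$, conditioned on the event that masked defectives and masked non-defectives both exist.

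Masked non-defectives are plentiful for any reasonable design, since there are $n-k\gg k$ of them. The key step is therefore to show that the number $M$ of masked defectives satisfies $M\ge 1$ with probability $1-o(1)$ whenever $T$ falls below the bound \eqref{eq:sing_T_lb}. I will work first under the i.i.d.\ prior with $q=\frac{k}{n}(1-o(1))$, and then transfer to the combinatorial prior by the usual conditioning on $|S|$, using that $\Pr(|S|=k)=\Omega(k^{-1/2})$ and monotone-coupling arguments.

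The core step is lower bounding $\Pr(D_i\mid i\in S)$. A test containing $i$ is uninformative iff its other $w_t$ items contain either $\le\gamma-2$ defectives or $\ge\gamma$ defectives. Since FKG fails here, I will instead use the sparsity assumption directly. The tests containing $i$ pairwise overlap in at most $c$ items. So I restrict to the set $U_i$ of items lying in exactly one of $i$'s tests. This set covers all but a $c d_i/\min_t w_t=o(1)$ fraction of each test, so the counts in $i$'s tests restricted to $U_i$ are independent binomials $\mathrm{Bin}(w_t(1-o(1)),q)$.

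The removed items are a small set $R_i$ of size $O(c d_i^2)$. With probability $1-o(1)$ they contain no defectives, since $q|R_i|=O(c\log^2 n\cdot k/n)=o(1)$ by the assumption $c=o(n/(k\log n))$ (squared degree may need care; I may need a per-test argument instead). On that event, the tests behave independently, giving
\[
\Pr(D_i\mid i\in S)\ge\prod_{t\ni i}\Big(1-\tfrac{\Delta^{\gamma-1}e^{-\Delta}}{(\gamma-1)!}(1+o(1))\Big)\,(1-o(1))=\exp\Big(d_i\log\big(1-\tfrac{\Delta^{\gamma-1}e^{-\Delta}}{(\gamma-1)!}\big)(1+o(1))\Big).
\]
This uses the Poisson approximation with mean $w_tq=\Delta(1+o(1))$, where $\Delta$ is defined as in Assumption~\ref{asm:regularity}. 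Double counting gives $\sum_i d_i=\sum_t(w_t+1)$, so $d_i=\frac{T\Delta}{k}(1+o(1))$ by the near-constant conditions ($\zeta=1+o(1)$). Hence $\mathbb{E}[M]\ge k\cdot k^{-(1-\eta)}\to\infty$ when $T\le(1-\eta)\frac{k\log k}{-\Delta\log(1-\cdots)}$, which is at most the minimum over $\Delta$ in \eqref{eq:sing_T_lb}.

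The main obstacle is concentration: showing $M\ge1$ w.h.p., not just $\mathbb{E}[M]\to\infty$. I plan to use the second moment method, bounding $\Pr(D_i\cap D_{i'})\le\Pr(D_i)\Pr(D_{i'})(1+o(1))$ for most pairs $i,i'$. Pairs sharing no test, and sharing no item of their tests' neighborhoods beyond $c$-overlaps, are nearly independent by the same restriction argument. The number of pairs sharing a test is only $O(k\cdot d\cdot \frac{n}{k}q)=O(k\log n)$ in expectation, which is negligible compared with $\mathbb{E}[M]^2$. I expect the delicate part to be controlling correlation through common defectives in overlapping tests, which is again where Assumption~\ref{asm:sparsity} enters. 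Chebyshev's inequality then finishes the argument.
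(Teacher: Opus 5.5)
Your bound on $\Pr(D_i)$ is sound, and simpler than the paper's route through $L_{s,i}$, FKG and Jensen. Declaring the items of $R_i$ non-defective decouples $i$'s tests. Two slips need fixing. First, Assumption~\ref{asm:sparsity} only gives $c=o(n/(k\log n))$, so $q|R_i|\le qc\,d_i^2/2=o(d_i)=o(\log n)$, not $o(1)$. The step still works, because $(1-q)^{|R_i|}=e^{-o(d_i)}$ is absorbed into the $(1+o(1))$ multiplying $d_i\log(1-\cdot)$ in the exponent. Second, the swap needs no proviso: a test containing both $i$ and $j$ keeps its defective count.

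The genuine gap is everything after $\mathbb{E}[M]\to\infty$.

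(a) The theorem is a strong converse, so below the bound you must show $\Pr(\hat{S}\neq S)\to 1$. An error probability of about $1/2$ does not suffice. You need $M_1\ge 1$ and $M_0=\omega(1)$ with high probability; then the uniform posterior over the at least $M_0M_1$ satisfying sets gives success probability $o(1)$. Showing $M_0=\omega(1)$ needs the same concentration you have not proved.

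(b) The second-moment plan fails as sketched. Chebyshev needs upper bounds $\Pr(D_i\cap D_{i'})\le(1+o(1))\Pr(D_i)\Pr(D_{i'})$ for almost all pairs. Your restriction trick only ever produces lower bounds, and only up to $e^{o(\log n)}$ factors. For $\theta<1/2$ under, e.g., the NCC design, any two tests share about $n/k^2\gg 1$ items, so every pair of items is within bipartite distance $4$ and no pair is decoupled. Also, the $\Theta(k\log n)$ defective pairs sharing a test are not negligible against $\mathbb{E}[M]^2\approx k^{2\eta}$ for small $\eta$. What must be controlled is how many such pairs are jointly masked, which again needs an upper bound on a joint masking probability.

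(c) Even a successful Chebyshev bound makes $\Pr(M=0)$ at best polynomially small; the diagonal term alone is about $1/\mathbb{E}[M]\approx k^{-\eta}$. That cannot absorb the factor $1/\Pr_{\rm i.i.d.}(|S|=k)=\Theta(\sqrt{k})$ in your prior transfer. A monotone coupling does not rescue this either, because masking is not monotone in $S$ for $\gamma\ge 2$.

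The missing idea is the paper's reduction to the dense limit, which avoids correlation analysis entirely. For $k=n^{1-\xi}$, a greedy packing gives a set $\mathcal{I}$ of at least $n^{1-3\xi}$ items at pairwise bipartite distance greater than $4$. Their masking events depend on disjoint sets of items, so under the i.i.d.\ prior they are exactly independent, and independent of the statuses of the items in $\mathcal{I}$ themselves. With $\Pr(D_i)\ge k^{-(1-\varepsilon)}$ and $\xi\ll\varepsilon$, Chernoff gives at least $\frac12 n^{2\xi}$ masked items in $\mathcal{I}$ except with probability $e^{-\Omega(n^{2\xi})}$. This failure probability easily survives the $\sqrt{k}$ transfer factor. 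The i.i.d.\ Bernoulli$(q)$ statuses of these masked items then give $M_1\ge 1$ and $M_0\ge\frac14 n^{2\xi}$, hence $P_e\to 1$. General $\theta$ is handled by the contrapositive: keep a uniformly random set of $n'$ columns with $k=(n')^{1-\xi}$, treating the discarded items as known non-defectives. This turns a good design for $(n,k)$ into one for $(n',k)$ with the same $T$ and the same $k\log k$. Hypergeometric concentration then shows the reduced design still satisfies Assumptions~\ref{asm:regularity} and~\ref{asm:sparsity}. Without this reduction, or a fully worked second-moment argument carried out directly under the combinatorial prior for every $\theta\in(0,1)$, your proposal establishes only $\mathbb{E}[M]\to\infty$, not the theorem.
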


\subsubsection{Discussion}

Theorem \ref{thm: sing} matches the ``non-entropy term'' for the exact threshold of the near-constant column weight design established in concurrent work \cite{van2026tgt}.  As discussed in the introduction, our converse holds in greater generality, since it is not restricted to the random design of Definition \ref{def:NCC}, but instead holds for general deterministic designs satisfying Assumptions \ref{asm:regularity} and \ref{asm:sparsity}.  Under these assumptions, the entropy term from \cite{van2026tgt} can also readily be established via Fano's inequality (e.g., similar to \cite{atia2012boolean}), at least when considering the requirement $\Pr(\hat{S} \ne S) \nrightarrow 0$ rather than $\Pr(\hat{S} \ne S) \to 1$ (i.e., a ``weak converse'' rather than a ``strong converse'').  We omit the details, since it was observed in \cite{van2026tgt} that the tighter of the counting bound and Theorem \ref{thm: sing} is correct for all $\theta$ except a very narrow ``transition region''.  For instance, when $\gamma = 2$, this region occurs at $\theta \approx 0.4785$ and has a width of less than $0.005$; see \cite[Fig.~2]{van2026tgt}.  (The transition region for $\gamma = 10$ therein is also imperceptible in a zoomed out plot.)


\subsection{The Selectable-Threshold Setting}

We now turn to the general setting in which tests may have different thresholds, i.e., $\Gamma = \{\gamma_1, \dots, \gamma_\ell\}$, where $\ell$ is the number of distinct thresholds.  The near-constant row weight part of Assumption \ref{asm:regularity} may be overly restrictive in this setting, since it may be highly suboptimal to keep the same row weight across different thresholds.  Accordingly, we adopt a natural generalization (stated below) allowing each threshold to have a different weight (but still near-constant among all tests using that threshold).  

\subsubsection{Assumptions}

We keep Assumption \ref{asm:sparsity} unchanged, but generalize Assumption \ref{asm:regularity} to the following.

\begin{assumption} \label{asm:sub-matrix_regularity}
    For each $r \in [\ell]$, the number of tests with threshold $\gamma_r$ scales as $\Theta(k \log n)$, each item $i \in [n]$ appears in $d_{i,r} = \Theta(\log n)$ such tests, and the tests $\mathcal{T}_r \subseteq [T]$ with threshold $\gamma_r$ each contain $w_t + 1 = \Theta(n/k)$ items.  Moreover, the following \emph{near-constant degree and weight} conditions hold for all $r \in [\ell]$:
    \begin{align}
    &\frac{\max_{i \in [n]}d_{i,r}}{\min_{i \in [n]}d_{i,r}} = 1+o(1), \quad \frac{\max_{t_r \in \mathcal{T}_r} w_{t_r}}{\min_{t_r \in \mathcal{T}_r} w_{t_r}} = 1+o(1),
    \end{align}
    where $\mathcal{T}_r \subseteq [T]$ is the set of test indices with threshold $\gamma_r$.  
    For convenience, we define $\zeta_r = \frac{\max_{i \in [n]}d_{i,r}}{\min_{i \in [n]}d_{i,r}}$ (and thus $\zeta_r = 1+o(1)$), and let $\Delta_r$ denote the constant such that $w_{t_r} = \frac{\Delta_r n}{k} (1 \pm o(1))$ for all $t_r \in \mathcal{T}_r$.
\end{assumption} 

In short, this means that up to re-ordering tests, $\bX$ is the vertical concatenation of $\ell$ matrices satisfying Assumption \ref{asm:regularity}, each with a different threshold $\gamma_r$ and possibly different degrees and weights.  While not every sequence of tests designs necessarily consists of $\Theta(k \log n)$ tests for \emph{every} threshold, assuming this is convenient and very mild; intuitively, for the purpose of proving a converse, each threshold could be given $\epsilon k \log n$ additional tests ``for free'', with an arbitrarily small impact on the rate as $\epsilon$ decreases (since $\ell$ is constant).



\subsubsection{Statement of Converse Results}

Given that the test design satisfies Assumptions \ref{asm:sparsity} and \ref{asm:sub-matrix_regularity} above, we proceed to state our main converse bound for the GT-ST problem.  We initially suppose that the fraction of tests with threshold $\gamma_r$ is some fixed quantity $\alpha_r \in (0,1)$ (i.e., $\lvert \mathcal{T}_r\rvert = \alpha_r T$), for each $r \in [\ell]$, and then provide a corollary in which we optimize over $\alpha_r$.  The main theorem is proved in Appendix \ref{app:pf_conv_multi}, and the subsequent two lemmas are proved in Appendices \ref{app:pf_max_f} and \ref{app:pf_large_conv}.

\begin{theorem}\label{thm: multi}
    {\em (Converse for the Selectable-Threshold Setting)} 
    In the GT-ST problem with $k=\Theta(n^\theta)$ defectives and thresholds $\Gamma = \{\gamma_1, \dots, \gamma_\ell\}$, when tests with threshold $\gamma_r$ comprise an $\alpha_r \in (0,1)$ fraction of the $T$ tests for each $r \in [\ell]$ (up to rounding), in order to have $\Pr(\hat{S}\neq S) \nrightarrow 1$, it is necessary that
    \begin{align} \label{eq:selec-thres}
    T \geq \min_{(\Delta_1,\dots \Delta_\ell) \in \bbR_+^\ell} \frac{k \log k}{\displaystyle-\sum_{r \in [\ell]}\alpha_r \Delta_r \log \left(1 - \frac{\Delta_r^{\gamma_r - 1}e^{-\Delta_r}}{(\gamma_r - 1)!}\right)}(1-o(1))
    \end{align}
    for any test design $\bX$ satisfying Assumptions \ref{asm:sparsity} and \ref{asm:sub-matrix_regularity}, and any decoding algorithm.
\end{theorem}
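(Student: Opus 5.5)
The plan is to reuse the argument behind Theorem \ref{thm: sing}, showing that the failure probability is driven by \emph{masked defectives}, and to track the contributions of the $\ell$ threshold classes separately.

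\textbf{Reduction to masked defectives.} If a defective $i\in S$ and a non-defective $j\notin S$ are both masked, and no test contains both, then swapping $i$ for $j$ leaves every outcome unchanged. This holds because removing $i$ changes no outcome (no test is informative for $i$), and adding $j$ then changes none either. Assumption \ref{asm:sparsity} makes the ``no common test'' condition cheap to enforce. So if, with probability bounded away from zero, there are many masked defectives and many masked non-defectives, then $P_e\not\to 0$. If moreover the number of such swappable pairs diverges, then by uniformity of the posterior $P_e\to 1$. As in the standard GT converse \cite{coja2020optimal,bay2022optimal}, I would work under the i.i.d.\ prior with $q=\frac{k}{n}(1+o(1))$ and transfer to the combinatorial prior at the end by conditioning on $|S|=k$, which happens with probability $\Omega(k^{-1/2})$. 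Masked non-defectives are abundant since $n-k\gg k$, so the core task is to show that the expected number of masked defectives is $k\Pr(D_i)\to\infty$ whenever $T$ is below the stated bound, and to add a second-moment argument showing concentration.

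\textbf{Per-item masking probability.} Fix $i$ and the tests $t\ni i$. A test $t\in\mathcal{T}_r$ is informative for $i$ exactly when the other $w_t=\frac{\Delta_r n}{k}(1\pm o(1))$ items of $t$ contain exactly $\gamma_r-1$ defectives. By the Poisson approximation, this has probability $p_r:=\frac{\Delta_r^{\gamma_r-1}e^{-\Delta_r}}{(\gamma_r-1)!}(1+o(1))$. If the informative events for distinct tests were independent, we would get
\[
\Pr(D_i)\approx\prod_r(1-p_r)^{d_{i,r}},
\]
with $d_{i,r}=\frac{\alpha_r T\Delta_r}{k}(1+o(1))$ by double counting under Assumption \ref{asm:sub-matrix_regularity}. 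Then $k\Pr(D_i)\to\infty$ whenever
\[
T\sum_r\frac{\alpha_r\Delta_r}{k}\log\frac{1}{1-p_r}<(1-\epsilon)\log k,
\]
which is exactly \eqref{eq:selec-thres} after taking the worst case over $(\Delta_r)$. The $\Delta_r$ are fixed by the design, so the bound uses the minimum.

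The main obstacle is justifying this product without the FKG inequality, since the informative events are not monotone. I would use the structure sketched earlier in the paper. The tests containing $i$ pairwise share at most $c$ items besides $i$, and $c\,d_i/w_t=o(1)$. So I would remove from each test the $o(w_t)$ items shared with other tests through $i$. The remaining parts are disjoint, and under the i.i.d.\ prior the counts of defectives in these disjoint parts are independent. Hence ``exactly $\gamma_r-1$ defectives in the private part'' are independent events across tests. Shared items are defective with total probability $o(1)$ per test. I would bound the total effect by restricting to the event that the shared region of all of $i$'s tests contains no defective. That event has probability $(1-q)^{O(c d_i^2)}$, which I must check is $1-o(1)$, or at least $k^{-o(1)}$, which is all that is needed against the $\log k$ exponent. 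On that event, masking of $i$ reduces to masking via the private parts, giving $\Pr(D_i)\ge k^{-o(1)}\prod_r(1-p_r)^{d_{i,r}}$, with $\zeta_r=1+o(1)$ absorbing degree variation.

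\textbf{Concentration and conclusion.} For the second moment, consider $i,i'$ that share no test. Their neighbourhoods overlap in few items by Assumption \ref{asm:sparsity}, so the same decoupling gives $\Pr(D_i\cap D_{i'})\le(1+o(1))\Pr(D_i)\Pr(D_{i'})$. Pairs that do share a test are a vanishing fraction. Chebyshev's inequality then yields $\omega(1)$ masked defectives with high probability. Pairing these with the $\Omega(n)$ masked non-defectives that are test-disjoint from them gives many outcome-equivalent sets, so any decoder fails with probability $1-o(1)$. Finally, I would transfer this from the i.i.d.\ prior back to the combinatorial prior.
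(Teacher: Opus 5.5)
Your lower bound on $\Pr(D_i)$ is correct, and it is a genuinely simpler alternative to the paper's Steps~2--3. Conditioning on the at most $c\binom{d_i}{2}$ items lying in two or more of $i$'s tests being non-defective costs only $\exp(-O(qcd_i^2))=n^{-o(1)}$, which is exactly where $c=o(\frac{n}{k\log n})$ enters. After that the private parts are independent, so you need neither FKG nor the $L_{s,i}$/Jensen decomposition.

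The gap is in upgrading $k\Pr(D_i)\to\infty$ to ``$\omega(1)$ masked defectives with probability $1-o(1)$ under the combinatorial prior''. There are two problems.

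First, Chebyshev is too weak for your prior transfer. The diagonal terms alone contribute about $\mathbb{E}X$ to the variance, so the resulting failure bound is at best of order $1/\mathbb{E}X\approx k^{-\epsilon'}$, with $\epsilon'\to0$ as $T$ approaches the bound. Your transfer divides by $\Pr_{\mathrm{i.i.d.}}(|S|=k)=\Theta(k^{-1/2})$, so it needs an i.i.d.\ failure probability of $o(k^{-1/2})$.

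Second, the covariance claim does not follow from your decoupling, because that decoupling is one-sided. To upper bound $\Pr(D_i\cap D_{i'})$ you cannot discard configurations in which $N(i)\cap N(i')$ (items sharing a test with both $i$ and $i'$) contains a defective. That event typically has probability of order $q|N(i)\cap N(i')|$, which is far larger than $\Pr(D_i)\Pr(D_{i'})\approx k^{-2+2\epsilon'}$. You would have to quantify how such defectives shift both masking probabilities, and the assumptions do not give you enough to do so:
\begin{itemize}
\item Assumption~\ref{asm:sparsity} only yields $|N(i)\cap N(i')|\le c\,d_id_{i'}$. That allows up to $o(\log n)$, not $o(1)$, expected defectives in the overlap for an individual pair.
\item For pairs sharing tests, $\Pr(D_i\mid i'\in S)/\Pr(D_i)$ grows roughly like $C^{m_{ii'}}$ in the number $m_{ii'}$ of common tests, and nothing in the assumptions bounds $m_{ii'}$.
\end{itemize}

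The paper never uses a second moment. It first works at $k=n^{1-\xi}$, where a greedy distance-$4$ packing $\mathcal{I}$ with $|\mathcal{I}|\ge n^{1-3\xi}$ has neighbourhoods that are pairwise disjoint. The masking events on $\mathcal{I}$ are therefore exactly independent, and a Chernoff bound gives failure probability $\exp(-\Omega(n^{\xi}))$, which easily absorbs the $\sqrt{k}$ factor. It then reaches arbitrary $\theta$ through the contrapositive column-subsampling reduction (Step~7), checking that Assumptions~\ref{asm:sparsity} and~\ref{asm:sub-matrix_regularity} survive.

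That reduction is essential. For general $\theta$, such a packing has only about $k^2/(n\log^2 n)$ items, hence about $k^{2+\epsilon'}/(n^2\log^2 n)$ expected masked defectives, which diverges only for $\theta$ near $1$. Your proposal has no substitute for this step. Your per-item bound plugs directly into the paper's framework, so adopting the dense-regime packing plus the reduction would close the gap. Alternatives would be proving an exponential-type concentration bound, or carrying out the second moment rigorously and directly under the combinatorial prior.
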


We now consider the optimization over the threshold proportions $(\alpha_1, \dots, \alpha_\ell)$ and the threshold-dependent weight parameters $(\Delta_1,\dotsc,\Delta_r)$, both of which may be chosen to minimize the number of tests.  This optimization leads to the bound in Theorem~\ref{thm: sing} with $\gamma = \gamma_{\max}$, as we will discuss further below.  Note that while Theorem \ref{thm: multi} assumes $\alpha_r > 0$ for all $r$, we may safely include $\alpha_r \in \{0,1\}$ in the optimization since this simply amounts to removing one or more thresholds, i.e., shrinking the set $\Gamma$ in \eqref{eq:Gamma} (Theorem \ref{thm: multi} holds for any such set).


\begin{lemma}\label{cor: bestdenom}
    {\em (Optimization of Parameters)} 
   Let $\boldsymbol{\alpha} = (\alpha_1, \dots, \alpha_\ell)$, $\boldsymbol{\Delta} = (\Delta_1, \dots, \Delta_\ell)$, and $\boldsymbol{\gamma} = (\gamma_1, \dots, \gamma_\ell)$. Define
    \begin{equation}
    f(\boldsymbol{\alpha}, \boldsymbol{\boldsymbol{\Delta}}, \boldsymbol{\gamma})
    = \sum_{r \in [\ell]} \alpha_r \, h(\Delta_r, \gamma_r),
    \end{equation}
    where 
    \begin{equation}
        h(\Delta, \gamma) = -\Delta \log\left(1 - \dfrac{\Delta^{\gamma-1}}{(\gamma-1)!} e^{-\Delta}\right).
    \end{equation}
    For each $r \in [\ell]$, let $\Delta_r^* = \arg\max_{\Delta > 0} h(\Delta, \gamma_r)$, and define $r^* = \arg\max_{r \in [\ell]} \gamma_r$ (and thus $\gamma_{r^*} = \gamma_{\max}$). Then, the maximum value of $f(\boldsymbol{\alpha}, \boldsymbol{\Delta}, \boldsymbol{\gamma})$ over all choices of $\boldsymbol{\alpha}$ and $\boldsymbol{\Delta}$ is given by
    \begin{equation}
    \max_{\boldsymbol{\alpha}, \boldsymbol{\Delta}}
    f(\boldsymbol{\alpha}, \boldsymbol{\Delta}, \boldsymbol{\gamma})
    = h(\Delta_{r^*}^*, \gamma_{r^*}),
    \end{equation}
    where $\boldsymbol{\alpha}$ is constrained to the probability simplex, and $\boldsymbol{\Delta}$ is constrained to be entry-wise non-negative.
\end{lemma}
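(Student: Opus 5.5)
Since $f$ is linear in $\boldsymbol{\alpha}$ and separable in $\boldsymbol{\Delta}$, the plan is a two-step reduction followed by a monotonicity claim in $\gamma$. First, for fixed $\boldsymbol{\alpha}$ on the simplex, each $\Delta_r$ appears only in the term $\alpha_r h(\Delta_r,\gamma_r)$ with $\alpha_r \ge 0$. Hence $\max_{\boldsymbol{\Delta}} f = \sum_r \alpha_r h(\Delta_r^*,\gamma_r)$.

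Some housekeeping is needed at this step. I will check that $\Delta_r^*$ exists by verifying $h(\cdot,\gamma)$ is continuous on $(0,\infty)$, nonnegative (the argument of the log lies in $(0,1]$, since $\Delta^{\gamma-1}e^{-\Delta}/(\gamma-1)!$ is a Poisson probability and hence $<1$), and tends to $0$ as $\Delta \to 0^+$ and as $\Delta\to\infty$. The latter holds because $-\log(1-x)\sim x$ and $\Delta\cdot \Delta^{\gamma-1}e^{-\Delta}\to 0$, while near $0$ one has $h \approx \Delta\cdot\Delta^{\gamma-1}/(\gamma-1)!\to 0$ for $\gamma\ge 1$. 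I will treat $\Delta=0$ by continuity.

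Second, a linear function on the simplex is maximized at a vertex. So $\max_{\boldsymbol{\alpha},\boldsymbol{\Delta}} f = \max_{r} H(\gamma_r)$, where $H(\gamma) := \max_{\Delta>0} h(\Delta,\gamma)$. It then remains to show that $H$ is nondecreasing in the integer $\gamma$, so that the maximum over $r$ is attained at $r^*$ with $\gamma_{r^*}=\gamma_{\max}$.

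This monotonicity is the main obstacle. It is not pointwise: $h(\Delta,\gamma+1)\ge h(\Delta,\gamma)$ fails for small $\Delta$. I would compare maxima via a substitution $\Delta \mapsto \Delta'$. Write $p_\gamma(\Delta)=\Delta^{\gamma-1}e^{-\Delta}/(\gamma-1)!$, so that $h(\Delta,\gamma)=-\Delta\log(1-p_\gamma(\Delta))$.

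My first attempt is to take $\Delta'=\Delta+1$, or more naturally to use that the peak Poisson probability $\max_\Delta p_\gamma(\Delta)$ (attained at $\Delta=\gamma-1$) decreases like $1/\sqrt{2\pi\gamma}$, while the multiplier $\Delta$ grows linearly. The aim is to find, for each $\Delta$, some $\Delta'\ge\Delta$ with $\Delta' \cdot(-\log(1-p_{\gamma+1}(\Delta')))\ge \Delta\cdot(-\log(1-p_\gamma(\Delta)))$. A cleaner route uses $p_{\gamma+1}(\Delta')=p_\gamma(\Delta')\,\Delta'/\gamma$. Choosing $\Delta' = c\Delta$ with a scaling $c=(\gamma/(\gamma-1))$-type factor, I would show $\Delta' p_{\gamma+1}(\Delta') \ge \Delta p_\gamma(\Delta)$ near the optimizer. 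Then I would use that $x\mapsto -\log(1-x)$ is increasing and superlinear, so $-\Delta\log(1-x)$ with $\Delta x$ held fixed increases as $x$ increases, to control the log.

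If a uniform analytic argument proves messy, I would fall back on an explicit evaluation. For small $\gamma$ (say $\gamma\le \gamma_0$) I would verify $H(\gamma)<H(\gamma+1)$ numerically with rigorous bounds. For large $\gamma$ I would use Stirling and $-\log(1-x)\ge x$, which gives $H(\gamma)\ge (\gamma-1)/\sqrt{2\pi(\gamma-1)}\cdot(1-o(1))$, growing like $\sqrt{\gamma}$. I would pair this with an upper bound $H(\gamma)\le \max_\Delta \Delta p_\gamma(\Delta)(1+O(p))$, which also behaves like $\sqrt{\gamma/2\pi}$. A careful second-order comparison would then establish monotonicity in the tail.
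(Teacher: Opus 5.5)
Your reduction is correct and matches the paper. Since $f$ is a convex combination of the values $h(\Delta_r,\gamma_r)$, it is at most $\max_r H(\gamma_r)$ with $H(\gamma)=\max_{\Delta>0}h(\Delta,\gamma)$, so the lemma reduces to showing that $H$ is nondecreasing in $\gamma$. That step, however, is not proved, and neither of your sketched routes closes it as written.

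In the substitution route, you assert $\Delta' p_{\gamma+1}(\Delta')\ge \Delta p_\gamma(\Delta)$ only ``near the optimizer'', which cannot be used without first locating $\Delta^*(\gamma)$. The monotonicity you invoke to handle the logarithm also points the wrong way. Near the optimum, $p_{\gamma+1}(\Delta')\approx 1/\sqrt{2\pi(\gamma+1)}$ is \emph{smaller} than $p_\gamma(\Delta)\approx 1/\sqrt{2\pi\gamma}$, so the factor $-\log(1-x)/x$ decreases in passing from $(\Delta,\gamma)$ to $(\Delta',\gamma+1)$. You would need a quantified margin in the ratio $\Delta'p_{\gamma+1}(\Delta')/(\Delta p_\gamma(\Delta))$ that provably beats this loss for every $\gamma\ge 1$. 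For $\Delta$ near $\gamma$ and $\Delta'=\frac{\gamma}{\gamma-1}\Delta$, that ratio is only about $1+\frac{1}{2\gamma}$, and this $c$ is undefined at $\gamma=1$.

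In the asymptotic fallback, the increments are small: $H(\gamma+1)-H(\gamma)\approx\frac{1}{2\sqrt{2\pi\gamma}}$. The bounds $-\log(1-x)\ge x$ and $H(\gamma)\le\max_\Delta \Delta p_\gamma(\Delta)(1+O(p))$ pin down $H(\gamma)$ only up to an additive $O(1)$ at best. A tail argument would need two-sided bounds with explicit absolute error $o(\gamma^{-1/2})$, an explicit $\gamma_0$, and a rigorous finite check below it. As it stands, the core of the lemma is missing.

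The paper makes this tractable by treating $\gamma$ as a real variable, replacing $(\gamma-1)!$ by $\Gamma(\gamma)$, and differentiating. By the envelope (Danskin) theorem, $\frac{d}{d\gamma}H(\gamma)=\partial_\gamma h(\Delta^*,\gamma)=\frac{\Delta^*\bar A}{1-\bar A}\bigl(\log\Delta^*-\Psi(\gamma)\bigr)$ with $\bar A=A(\Delta^*,\gamma)$. So it suffices to show $\log\Delta^*>\Psi(\gamma)$. The first-order condition reads $\Delta^*-\gamma+1=-\frac{1-\bar A}{\bar A}\log(1-\bar A)$. For $\gamma\ge 2$ one has $\bar A\le(\gamma-1)^{\gamma-1}e^{-(\gamma-1)}/\Gamma(\gamma)\le 1/\sqrt{2\pi}$, so the right-hand side lies in $[2/3,1]$ and $\Delta^*\in(\gamma-\frac13,\gamma]$. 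Then $\log\Delta^*>\log(\gamma-\frac13)\ge\log\gamma-\frac{1}{2\gamma}>\Psi(\gamma)$. The step from $\gamma=1$ to $\gamma=2$ is checked directly: $H(1)=(\log 2)^2<0.6<h(3/2,2)\le H(2)$.

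A minor slip as well: your small-$\Delta$ expansion $h\approx\Delta^\gamma/(\gamma-1)!$ fails for $\gamma=1$, where $p_1(\Delta)\to 1$. The limit still holds there, since $h(\Delta,1)\approx-\Delta\log\Delta\to 0$.
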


Lastly, we investigate the asymptotic behavior of $\max_{\Delta > 0} h(\Delta,\gamma)$ in the large-threshold limit $\gamma \to \infty$, as characterized in the following lemma. 
\begin{lemma}\label{cor: denomconv}
    {\em (The Large-$\gamma$ Limit)} 
    The following holds as $\gamma \to \infty$, uniformly over all $\Delta=\Delta(\gamma)$ satisfying $\Delta \in [\gamma-1/3,\gamma]$:
    \begin{align}
    - \Delta \log\left(1 - \frac{\Delta^{\gamma-1}}{(\gamma-1)!} e^{-\Delta}
    \right)
    = \sqrt{\frac{\gamma}{2\pi}} (1+o(1)).
    \end{align}
    Moreover, the condition $\Delta \in [\gamma - 1/3, \gamma]$ holds for the optimal $\Delta$, i.e., for $\Delta^* = \Delta^*(\gamma)$ chosen according to $\Delta^* \in \underset{\Delta >0}{\arg\max} \big(-\Delta\log\big(1-\frac{\Delta^{\gamma-1}}{(\gamma-1)!}e^{-\Delta}\big)\big)$.
\end{lemma}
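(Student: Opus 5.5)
Write $p(\Delta) \coloneqq \frac{\Delta^{\gamma-1}}{(\gamma-1)!}e^{-\Delta}$, which is the Poisson$(\Delta)$ probability of the value $\gamma-1$. The proof then splits into two parts.

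\emph{Step 1: the uniform asymptotic on the window.} For $\Delta \in [\gamma-1/3,\gamma]$ we first show $p(\Delta) = \frac{1}{\sqrt{2\pi\gamma}}(1+o(1))$ uniformly. Apply Stirling's formula $(\gamma-1)! = \sqrt{2\pi(\gamma-1)}\,(\gamma-1)^{\gamma-1}e^{-(\gamma-1)}(1+o(1))$ and write $\Delta = \gamma-1+s$ with $s \in [2/3,1]$. This gives
\begin{equation}
p(\Delta) = \frac{1}{\sqrt{2\pi(\gamma-1)}} \exp\Big( (\gamma-1)\log\big(1+\tfrac{s}{\gamma-1}\big) - s\Big)(1+o(1)).
\end{equation}
Expanding the logarithm, the exponent is $-\frac{s^2}{2(\gamma-1)} + O(\gamma^{-2}) = O(1/\gamma)$, uniformly in $s$. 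Since $p(\Delta)\to 0$, we have $-\log(1-p) = p(1+O(p))$. Hence
\begin{equation}
-\Delta\log(1-p(\Delta)) = \gamma(1+o(1))\cdot\frac{1+o(1)}{\sqrt{2\pi\gamma}} = \sqrt{\frac{\gamma}{2\pi}}\,(1+o(1)),
\end{equation}
uniformly over the window.

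\emph{Step 2: the maximizer lies in the window.} This is the main obstacle. Set $g(\Delta) = \log h(\Delta,\gamma)$ and show that $g'(\gamma-1/3) > 0$ and $g'(\gamma) < 0$ for large $\gamma$. To conclude that this locates a global maximizer, we also need $h$ to be unimodal, or at least $g'>0$ on $(0,\gamma-1/3)$ and $g'<0$ on $(\gamma,\infty)$.

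Compute $p'(\Delta)/p(\Delta) = \frac{\gamma-1}{\Delta} - 1$ and define $\phi(p) = -\log(1-p)$. Then
\begin{equation}
g'(\Delta) = \frac{1}{\Delta} + \frac{p}{(1-p)\phi(p)}\Big(\frac{\gamma-1}{\Delta}-1\Big).
\end{equation}
The prefactor $u(p) \coloneqq \frac{p}{(1-p)\phi(p)}$ satisfies $u \ge 1$, and $u = 1+p/2+O(p^2)$ for small $p$. Multiplying by $\Delta$, the sign of $g'$ equals the sign of $1 + u(p)(\gamma-1-\Delta)$.

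For $\Delta \le \gamma-1$ this is clearly positive. For $\Delta \ge \gamma$ we have $\gamma-1-\Delta \le -1$, so it is at most $1-u \le 0$, with strict inequality because $u>1$ whenever $p>0$. Hence $g'<0$ for $\Delta\ge\gamma$.

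It remains to handle $\Delta\in(\gamma-1,\gamma)$. Write $\Delta=\gamma-1+s$, so the condition becomes $1 - u(p)s$. Since $u(p) = 1+O(\gamma^{-1/2})$ there, this is positive for $s \le 2/3$ once $\gamma$ is large, and it is negative near $s=1$. Therefore any maximizer satisfies $s\in(2/3,1)$, i.e.\ $\Delta^*\in(\gamma-1/3,\gamma)$. (In fact $s\to 1$, so the maximizer approaches $\gamma$.)

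Existence of a maximizer follows because $h\to 0$ as $\Delta\to 0$ and as $\Delta\to\infty$. Finally, combining Step 2 with Step 1 gives $\max_\Delta h = \sqrt{\gamma/(2\pi)}\,(1+o(1))$.
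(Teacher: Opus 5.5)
Your proof is correct and follows essentially the same route as the paper. For the window asymptotics you both use Stirling plus $-\log(1-u)=u+O(u^2)$. For the location of $\Delta^*$, your sign condition $1-u(p)s$ is exactly the paper's first-order condition $\Delta^*-\gamma+1=-\frac{1-\bar A}{\bar A}\log(1-\bar A)=1/u(\bar A)$.

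The only difference is how that condition is bounded. The paper uses the explicit bound $\bar A\le(2\pi)^{-1/2}$ to place the maximizer in the window for every $\gamma\ge 2$, and that non-asymptotic version is also used in Lemma~\ref{cor: bestdenom}. Your estimate $u=1+O(\gamma^{-1/2})$ gives the location only for large $\gamma$, which suffices for this statement.
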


Combining these lemmas, we find that the maximum value of $f(\boldsymbol{\alpha},\boldsymbol{\Delta},\boldsymbol{\gamma}) =\sum_{r\in[\ell]}\alpha_r h(\Delta_r,\gamma_r)$
also satisfies
    \begin{equation}
        \max_{\boldsymbol{\alpha},\boldsymbol{\Delta}}
    f(\boldsymbol{\alpha},\boldsymbol{\Delta},\boldsymbol{\gamma})
    \sim \sqrt{\frac{\gamma_{r^*}}{2\pi}}
    \qquad \text{as } \gamma_{r^*}\to\infty,
    \end{equation}
where $r^*=\arg\max_{r\in[\ell]}\gamma_r$ and thus $\gamma_{r^*} = \gamma_{\max}$.  Combined with the counting bound from Theorem \ref{thm:counting} and the fact that $k \log k = \frac{\theta\log 2}{1-\theta} k\log_2\frac{n}{k}$, this gives the following corollary.

\begin{corollary} \label{cor:large_conv}
    For the GT-ST problem, any test design satisfying Assumptions \ref{asm:sparsity} and \ref{asm:sub-matrix_regularity} achieves a rate of at most $\min\big\{1, \frac{1-\theta}{\theta \log 2} \sqrt{\frac{\gamma_{\max}}{2 \pi}} (1+o(1)) \big\}$ as $\gamma_{\max} \to \infty$, where the limit is after already having taken $n \to \infty$ with $k = \Theta(n^{\theta})$ in accordance with the definition of an achievable rate.
\end{corollary}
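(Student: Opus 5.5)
The plan is to assemble the corollary from the pieces already in place: the counting bound gives the cap of $1$, and Theorem \ref{thm: multi} combined with Lemmas \ref{cor: bestdenom} and \ref{cor: denomconv} gives the second term. The first term is immediate from Theorem \ref{thm:counting}: any sequence of strategies with $P_e \to 0$ (hence $P_e \nrightarrow 1$) must have $T \ge (1-\eta)\log_2\binom{n}{k}$ for every fixed $\eta>0$. This forces $\liminf \log_2\binom{n}{k}/T \le 1$, and it holds regardless of the test design.

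For the second term, fix a design satisfying Assumptions \ref{asm:sparsity} and \ref{asm:sub-matrix_regularity}, with threshold proportions $\boldsymbol{\alpha}$. I would first handle the proportions. Thresholds whose fraction of tests tends to zero can be dealt with by the ``free $\epsilon k\log n$ tests'' remark after Assumption \ref{asm:sub-matrix_regularity}, or by passing to subsequences along which $\boldsymbol{\alpha}$ converges. Thresholds whose fraction is exactly zero are removed from $\Gamma$, as the text notes. Theorem \ref{thm: multi} then gives
\[
T \ge \frac{k\log k}{f(\boldsymbol{\alpha},\boldsymbol{\Delta},\boldsymbol{\gamma})}(1-o(1)) \ge \frac{k\log k}{\max_{\boldsymbol{\alpha},\boldsymbol{\Delta}} f}(1-o(1)).
\]
By Lemma \ref{cor: bestdenom}, the maximum of $f$ equals $h(\Delta^*_{r^*},\gamma_{\max})$, which is the quantity independent of $\boldsymbol{\alpha}$. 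Next I substitute $k\log k = \frac{\theta\log 2}{1-\theta}k\log_2\frac{n}{k}$, which holds up to $1+o(1)$ factors since $k=\Theta(n^\theta)$, together with $\log_2\binom{n}{k} = k\log_2\frac{n}{k}(1+o(1))$. This yields
\[
\frac{\log_2\binom{n}{k}}{T} \le \frac{1-\theta}{\theta\log 2}\, h(\Delta^*_{r^*},\gamma_{\max})(1+o(1)) \quad \text{as } n\to\infty.
\]

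The final step takes $\gamma_{\max}\to\infty$. Lemma \ref{cor: denomconv} states that the optimal $\Delta^*(\gamma)$ lies in $[\gamma-1/3,\gamma]$, and that the asymptotic formula holds uniformly on that window. Together these give $h(\Delta^*,\gamma_{\max}) = \sqrt{\gamma_{\max}/(2\pi)}\,(1+o(1))$. Taking the minimum with the counting-bound cap then gives the stated rate bound.

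The only delicate point is the order of limits. The $o(1)$ coming from $n\to\infty$ must be absorbed first, with $\gamma_{\max}$ held fixed, and only afterwards is $\gamma_{\max}\to\infty$ applied to the constant $h(\Delta^*,\gamma_{\max})$. This works because, for fixed $\gamma_{\max}$, the bound from Theorem \ref{thm: multi} is a constant multiple of $k\log k$ up to $1-o(1)$. Justifying that the proportions $\boldsymbol{\alpha}$ may be treated as fixed along subsequences is the step I expect to need the most care. I would handle it by compactness of the simplex, combined with the continuity of $f$ in $\boldsymbol{\alpha}$.
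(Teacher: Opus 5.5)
Your proposal is correct and follows the paper's own argument: the counting bound (Theorem~\ref{thm:counting}) supplies the cap of $1$, and the second term comes from Theorem~\ref{thm: multi} combined with Lemma~\ref{cor: bestdenom} (optimal denominator $h(\Delta^*,\gamma_{\max})$), Lemma~\ref{cor: denomconv} ($h(\Delta^*,\gamma)\sim\sqrt{\gamma/(2\pi)}$), and the conversion $k\log k=\frac{\theta\log 2}{1-\theta}k\log_2\frac{n}{k}$. Your extra care about taking $n\to\infty$ before $\gamma_{\max}\to\infty$ and about proportions $\boldsymbol{\alpha}$ that may depend on $n$ (handled via subsequences) goes beyond what the paper writes out, but it is consistent with the paper's argument.
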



\subsubsection{Discussion}

Lemma \ref{cor: bestdenom} reveals that the largest denominator in \eqref{eq:selec-thres} is obtained when all of the tests use a threshold of $\gamma_{\max}$, thus simplifying Theorem \ref{thm: multi} to the single-threshold result of Theorem \ref{thm: sing}.  This suggests that the flexibility of selectable thresholds may have limited (if any) benefit when it comes to the information-theoretic limits.  
However, as we saw in Section \ref{sec:achievability}, the notion of selectable thresholds can lead to distinct algorithmic ideas and the ability to incorporate methods from the standard GT setting.

We observe that Theorem \ref{thm: sing} (with $\gamma_{\max}$) exactly matches the second term in Theorem \ref{thm:fixed}.  When $\theta$ is very close to 1, the remaining $O\big(k \log\frac{n}{k} \big)$ becomes insignificant, so when expressing the results as a coefficient to $k \log k$ (rather than the usual $k \log \frac{n}{k}$), we find that the achievability and converse bounds match as $\theta \to 1$.  However, for fixed $\theta \in (0,1)$ a gap remains, as seen in Figure \ref{fig:Gamma10}.

Finally, one interpretation of Corollary \ref{cor:large_conv} is that the number of tests must always have a $k \log k$ term whose coefficient decays no faster than $\Theta\big(\frac{1}{\sqrt{\gamma_{\max}}}\big)$.  This implies that in our achievable number of tests in \eqref{eq:T_large}, the $\frac{1}{\sqrt{\gamma_{\max}}}$ dependence in the second term cannot be improved.  (While it uses $k \log n$ instead of $k \log k$, the two are equivalent to within a constant factor except when $\theta \to 0$, and they are equivalent to within $1+o(1)$ as $\theta \to 1$.)  However, this does not rule out the possibility of improving \eqref{eq:T_large} in other ways; in particular, the sum would ideally be replaced by a maximum, and the precise hidden constant in $O(\cdot)$ could be sought.

\section{Conclusion}

We have introduced the problem of Group Testing with Selectable Thresholds (GT-ST), and established several achievability and converse bounds summarized as follows depending on the maximum threshold $\gamma_{\max}$:
\begin{itemize}
    \item For unbounded $\gamma_{\max}$, the optimal rate is 1.
    \item For large $\gamma_{\max}$, a rate of the form $1 - O\big(\frac{1}{\sqrt{\gamma_{\max}}}\big)$ can be achieved, and under the regularity conditions in our converse, the coefficient to $k \log k$ cannot decay any faster than $\frac{1}{\sqrt{\gamma_{\max}}}$.
    \item For fixed $\gamma_{\max}$, our achievability and converse results coincide in the limit $\theta \to 1$, but there is a gap between the two for fixed $\theta \in (0,1)$. 
\end{itemize}
Some immediate directions for future research include the following:
\begin{itemize}
    \item {\bf Other algorithmic benefits of selectable thresholds.}  We have seen that allowing selectable thresholds does not improve the optimal rate compared to only allowing the highest threshold.  However, it could still be of interest to further explore the extent to which the added flexibility can simplify algorithm design, possibly using more than two thresholds (unlike our approach that only uses the smallest and largest thresholds).
    \item {\bf Converse for arbitrary designs.} We expect that our main converse results should hold for arbitrary test designs, but formally proving this appears to be difficult.
\end{itemize}

\appendices

\section{Technical Lemmas}
Here we present some useful technical lemmas that are used throughout the paper.
\begin{lemma}[Multiplicative Chernoff Bound {\cite[Corollary 4.6]{mitzenmacher2017probability}}] \label{lem:chernoff}
    Let $X_1,\dots, X_n$ be a sequence of i.i.d.~${\rm Bernoulli}(p)$ random variables, and define $\mu = np$. Then, for any $\delta \in (0,1)$ it holds that
    \begin{equation}
        \label{eq:chernoff} \Pr\left( \left \lvert \sum_{i=1}^n X_i -  \mu \right\rvert \geq \delta \mu\right) \leq 2\exp\left(-\frac{\delta^2 \mu}{3}\right).
    \end{equation}
\end{lemma}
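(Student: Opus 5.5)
We prove the two one-sided tails separately with the exponential-moment (Chernoff) method and then combine them with a union bound, which accounts for the factor $2$ in \eqref{eq:chernoff}. Write $X=\sum_{i=1}^n X_i$.

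\textbf{MGF bound.} For any real $\lambda$, independence gives $\mathbb{E}[e^{\lambda X}]=\prod_i\big(1+p(e^{\lambda}-1)\big)$. Applying $1+x\le e^{x}$ to each factor yields $\mathbb{E}[e^{\lambda X}]\le \exp\big(\mu(e^{\lambda}-1)\big)$.

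\textbf{Upper tail.} For $\lambda>0$, Markov's inequality applied to $e^{\lambda X}$ gives
\[
\Pr\big(X\ge(1+\delta)\mu\big)\le \exp\big(\mu(e^\lambda-1)-\lambda(1+\delta)\mu\big).
\]
Choosing $\lambda=\log(1+\delta)$ gives the bound $\big(e^{\delta}/(1+\delta)^{1+\delta}\big)^{\mu}$. It then suffices to show
\[
g(\delta):=\delta-(1+\delta)\log(1+\delta)+\delta^2/3\le 0 \quad \text{for } \delta\in(0,1).
\]
We have $g(0)=0$ and $g'(\delta)=-\log(1+\delta)+2\delta/3$, with $g'(0)=0$. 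Also $g''(\delta)=-1/(1+\delta)+2/3$, which is negative on $[0,1/2)$ and positive on $(1/2,1]$. Hence $g'$ first decreases below $0$ and then increases. Since $g'(1)=2/3-\log 2<0$, we get $g'\le 0$ on $[0,1]$, and therefore $g\le 0$ there.

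\textbf{Lower tail.} For $\lambda>0$, the same argument applied to $e^{-\lambda X}$ gives $\mathbb{E}[e^{-\lambda X}]\le \exp\big(\mu(e^{-\lambda}-1)\big)$. Markov's inequality then gives
\[
\Pr\big(X\le(1-\delta)\mu\big)\le \exp\big(\mu(e^{-\lambda}-1)+\lambda(1-\delta)\mu\big).
\]
Choosing $\lambda=-\log(1-\delta)$ gives the bound $\big(e^{-\delta}/(1-\delta)^{1-\delta}\big)^{\mu}$. We show that this is at most $e^{-\delta^2\mu/2}$, i.e., that
\[
k(\delta):=-\delta-(1-\delta)\log(1-\delta)+\delta^2/2\le 0 \quad \text{for } \delta\in(0,1).
\]
Indeed, $k(0)=0$, $k'(\delta)=\log(1-\delta)+\delta\le 0$ by $\log(1-\delta)\le-\delta$, and so $k$ is nonincreasing. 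Finally, $e^{-\delta^2\mu/2}\le e^{-\delta^2\mu/3}$.

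\textbf{Combining.} The event $|X-\mu|\ge\delta\mu$ is the union of $\{X\ge(1+\delta)\mu\}$ and $\{X\le(1-\delta)\mu\}$. A union bound over the two tails gives \eqref{eq:chernoff}.

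The only nontrivial step is the elementary calculus inequality for the upper tail. The second-derivative sign analysis above handles it; the endpoint check $g'(1)=2/3-\log 2<0$ is where the constant $1/3$ and the restriction $\delta<1$ are used.
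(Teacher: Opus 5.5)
Your proof is correct and is essentially the argument behind the cited result. The paper gives no proof of its own and defers to Corollary 4.6 of Mitzenmacher--Upfal, which is proved the same way you do it: the MGF bound $\mathbb{E}[e^{\lambda X}]\le e^{\mu(e^{\lambda}-1)}$ with Markov's inequality, the same second-derivative check that $\delta-(1+\delta)\log(1+\delta)+\delta^2/3\le 0$ on $(0,1]$, the $e^{-\delta^2\mu/2}$ lower tail, and a union bound over the two tails.
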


\begin{lemma}[McDiarmid's Inequality {\cite{mcdiarmid1989on}}] \label{lem:mcdiarmid}
    Fix an arbitrary set $\mathcal{X}$, and suppose that $X_1,\dots,X_n$ is a sequence of independent random variables supported on $\mathcal{X}$. Fix any function $f: \mathcal{X}^n \to \mathbb{R}$ satisfying the \emph{bounded differences property}, in the sense that there exist constants $c_1,\dots, c_n$ such that the following holds for all $i =1,\dots,n$ and all $(x_1,\dotsc,x_n) \in \mathcal{X}^n$:
    \begin{equation}
        \label{eq:bounded_diff} \sup_{x'_i \in \mathcal{X}} \, \lvert f(x_1,\dots, x_{i-1}, x_i, x_{i+1},\dots x_n) - f(x_1,\dots, x_{i-1}, x_i', x_{i+1},\dots x_n) \rvert \leq c_i.
    \end{equation}
    Then, for any $\delta > 0$, it holds that
    \begin{equation}
        \label{eq:mcdiarmid} \Pr\big(\big\lvert f(X_1,\dots,X_n) - \mathbb{E}[f(X_1,\dots,X_n)] \big\rvert \geq \delta \big) \leq 2\exp \left(-\frac{2\delta^2}{\sum_{i=1}^nc_i^2}\right).
    \end{equation}
\end{lemma}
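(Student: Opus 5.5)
We would prove Lemma~\ref{lem:mcdiarmid} by the standard martingale method: form the Doob martingale of $f(X_1,\dots,X_n)$, bound each increment within an interval of length $c_i$, and apply Hoeffding's lemma to each increment conditionally before multiplying the bounds together (the Azuma--Hoeffding argument). Write $Z = f(X_1,\dots,X_n)$, and let $\mathcal{F}_i$ be the $\sigma$-algebra generated by $X_1,\dots,X_i$. Define $Z_i = \mathbb{E}[Z \mid \mathcal{F}_i]$, so that $Z_0 = \mathbb{E}[Z]$ and $Z_n = Z$. Set $D_i = Z_i - Z_{i-1}$. Then $Z - \mathbb{E}[Z] = \sum_{i=1}^n D_i$ and $\mathbb{E}[D_i \mid \mathcal{F}_{i-1}] = 0$.

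The key structural step uses independence. Because the $X_j$ are independent, we can write
\begin{equation}
Z_i = g_i(X_1,\dots,X_i), \qquad g_i(x_1,\dots,x_i) \coloneqq \mathbb{E}\big[f(x_1,\dots,x_i,X_{i+1},\dots,X_n)\big].
\end{equation}
Conditionally on $\mathcal{F}_{i-1}$, define
\begin{equation}
A_i = \inf_{x} g_i(X_1,\dots,X_{i-1},x) - Z_{i-1}, \qquad B_i = \sup_{x} g_i(X_1,\dots,X_{i-1},x) - Z_{i-1}.
\end{equation}
Both $A_i$ and $B_i$ are $\mathcal{F}_{i-1}$-measurable, and $A_i \le D_i \le B_i$. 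Moreover, the bounded differences property \eqref{eq:bounded_diff} holds pointwise, and integrating it over the same independent tail $(X_{i+1},\dots,X_n)$ gives $B_i - A_i \le c_i$. This is the step we expect to be the main (if modest) obstacle. We must keep the interval length equal to $c_i$ rather than the naive bound $|D_i| \le c_i$, which would give an interval of length $2c_i$ and lose a factor of $4$ in the exponent. Achieving length $c_i$ relies crucially on the tail variables having the same distribution regardless of $X_i$, i.e., on independence. Minor measurability issues with the infimum and supremum can be avoided by working with $\sup_{x,x'}\big(g_i(\dots,x) - g_i(\dots,x')\big)$ directly.

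Next, we apply Hoeffding's lemma conditionally. Since $\mathbb{E}[D_i \mid \mathcal{F}_{i-1}]=0$ and $D_i$ lies in an $\mathcal{F}_{i-1}$-measurable interval of length at most $c_i$, we have $\mathbb{E}[e^{sD_i} \mid \mathcal{F}_{i-1}] \le e^{s^2 c_i^2/8}$ for all $s>0$. Peeling off one conditional expectation at a time, from $i=n$ down to $i=1$, yields
\begin{equation}
\mathbb{E}\big[e^{s(Z-\mathbb{E}[Z])}\big] \le \exp\Big(\tfrac{s^2}{8}\textstyle\sum_i c_i^2\Big).
\end{equation}
Markov's inequality then gives
\begin{equation}
\Pr(Z - \mathbb{E}[Z] \ge \delta) \le \exp\Big(-s\delta + \tfrac{s^2}{8}\textstyle\sum_i c_i^2\Big),
\end{equation}
and choosing $s = 4\delta/\sum_i c_i^2$ gives $\exp\big(-2\delta^2/\sum_i c_i^2\big)$. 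Applying the same argument to $-f$, which satisfies the same bounded differences property, bounds the lower tail, and a union bound over the two tails gives the factor of $2$ in \eqref{eq:mcdiarmid}.
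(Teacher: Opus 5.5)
Your argument is correct. It is the standard Doob-martingale proof: Hoeffding's lemma is applied conditionally, and independence of the tail keeps each increment's range at length $c_i$ rather than $2c_i$, which yields the constant $2$ in the exponent. (Your measurability worry about $A_i,B_i$ disappears if you fix $x_1,\dots,x_{i-1}$ and apply the unconditional Hoeffding lemma to the mean-zero function of $X_i$ alone; it is moot anyway in the paper's finite setting.) The paper gives no proof of this lemma and simply quotes it from McDiarmid's survey, whose proof is this same martingale argument, so your route is essentially the one behind the citation.
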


\begin{lemma}[Poisson Approximation to the Binomial {\cite[Theorem 5.5]{mitzenmacher2017probability}}] \label{lem:poisson_approx}
    Let $n > 0$ and $p \in (0,1)$ be such that $\lim_{n \to \infty} np = \lambda$ for some fixed $\lambda > 0$. Then, for any fixed positive integer $x$, it holds that
    \begin{equation}
        \label{eq:poisson_approx} \lim_{n \to \infty} \Pr({\rm Bin}(n,p) = x)  = \frac{\lambda^xe^{-\lambda}}{x!}.
    \end{equation}
\end{lemma}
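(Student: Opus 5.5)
The statement at the end is Lemma 3, the Poisson approximation to the binomial. It is a standard textbook fact, and I would prove it directly by writing out the binomial probability mass function and taking limits factor by factor.

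Fix the integer $x$ and write $\lambda_n = np$, so that $\lambda_n \to \lambda$. For $n \ge x$,
\begin{equation}
\Pr({\rm Bin}(n,p)=x) = \binom{n}{x}p^x(1-p)^{n-x} = \frac{n(n-1)\cdots(n-x+1)}{n^x}\cdot\frac{\lambda_n^x}{x!}\cdot\Big(1-\frac{\lambda_n}{n}\Big)^{n}\Big(1-\frac{\lambda_n}{n}\Big)^{-x}.
\end{equation}
The limits of the four factors are as follows.
\begin{itemize}
\item The first factor is a product of $x$ terms $(1-j/n)$ with $x$ fixed, so it tends to $1$.
\item The second factor tends to $\lambda^x/x!$ because $\lambda_n \to \lambda$.
\item In the last factor, $p=\lambda_n/n\to 0$ and the exponent $x$ is fixed, so it tends to $1$.
\end{itemize}

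The only step needing care is the third factor, $(1-\lambda_n/n)^n \to e^{-\lambda}$, since both the base and the exponent vary with $n$. I would take logarithms and use the bound
\begin{equation}
-u-u^2 \le \log(1-u) \le -u \qquad \text{for } u\in[0,1/2].
\end{equation}
This gives
\begin{equation}
n\log(1-\lambda_n/n) \in \big[-\lambda_n-\lambda_n^2/n,\; -\lambda_n\big],
\end{equation}
and both endpoints tend to $-\lambda$. Continuity of $\exp$ then gives the limit $e^{-\lambda}$.

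Multiplying the four limits yields $\lambda^x e^{-\lambda}/x!$, as claimed. The same argument also covers $x=0$.
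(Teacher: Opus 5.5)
Your proof is correct and is essentially the standard argument. The paper gives no proof of its own and only cites Theorem 5.5 of Mitzenmacher--Upfal, which likewise expands the binomial pmf and controls $(1-p)^{n-x}$ with elementary exponential bounds; the only difference is that it sandwiches the whole expression between two explicit bounds rather than taking limits factor by factor. Your remark that $x=0$ is also covered is worth keeping, since the paper's converse applies the lemma to the $j=0$ terms even though the statement says ``positive integer.''
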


To state the next lemma, we require the following definition:
\begin{definition} \label{def:inc_dec_rvs}
     Let $\omega, \omega' \in \{0,1\}^n$, and equip $\{0,1\}^n$ with the coordinate-wise partial order defined by $\omega \leq \omega'$ whenever $\omega_i \leq \omega_i'$ for all $i \in [n]$. An event $A \subseteq \{0,1\}^n$ is called \emph{increasing} if $\omega \in A$ and $\omega \leq \omega'$ jointly imply that $\omega' \in A$. 
    Under the same setup, $A$ is said to be \emph{decreasing} if $\omega \in A$ and $\omega \geq \omega'$ jointly imply that $\omega' \in A$.
\end{definition}

\begin{lemma}[FKG Inequality {\cite[Theorem 2.4]{grimmett1999percolation}}] \label{lem:FKG}
    Let $A_1,\dots, A_m \subseteq \{0,1\}^n$ be a collection of events that are either all increasing, or all decreasing, and let $\Pr_\nu$ be the product measure on $\{0,1\}^n$ induced by the ${\rm Bernoulli}(\nu)$ measure on $\{0,1\}$. Then, for any $\nu \in [0,1]$, it holds that
    \begin{equation}
        \label{eq:FKG} \Pr\nolimits_\nu\left(\bigcap_{i=1}^m A_i \right) \geq \prod_{i=1}^m \Pr \nolimits_\nu(A_i).
    \end{equation}
\end{lemma}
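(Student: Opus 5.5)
The statement is Harris's inequality for product measures, and I would prove it by proving the stronger functional version and then specializing. I would show that for any two functions $f,g:\{0,1\}^n\to\mathbb{R}$ that are both increasing in the coordinate-wise order of Definition~\ref{def:inc_dec_rvs},
\begin{equation}
\mathbb{E}_\nu[fg]\ge \mathbb{E}_\nu[f]\,\mathbb{E}_\nu[g],
\end{equation}
where $\mathbb{E}_\nu$ is expectation under the product Bernoulli$(\nu)$ measure. Taking $f=\mathbbm{1}_{A}$ and $g=\mathbbm{1}_{B}$ for increasing events $A,B$ then gives $\Pr_\nu(A\cap B)\ge \Pr_\nu(A)\Pr_\nu(B)$.

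The functional inequality is proved by induction on $n$.
\begin{itemize}
\item \textbf{Base case $n=1$.} A direct expansion gives
\begin{equation}
\mathbb{E}[fg]-\mathbb{E}[f]\mathbb{E}[g]=\nu(1-\nu)\big(f(1)-f(0)\big)\big(g(1)-g(0)\big).
\end{equation}
This is nonnegative because both differences are $\ge 0$ when $f$ and $g$ are increasing.
\item \textbf{Inductive step.} Write $\omega=(\omega',\omega_n)$ with $\omega'\in\{0,1\}^{n-1}$, and use that the measure factorizes as a product. For each fixed $\omega_n$, the sections $f(\cdot,\omega_n)$ and $g(\cdot,\omega_n)$ are increasing on $\{0,1\}^{n-1}$. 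The induction hypothesis therefore gives
\begin{equation}
\mathbb{E}[fg\mid\omega_n]\ge \mathbb{E}[f\mid\omega_n]\,\mathbb{E}[g\mid\omega_n].
\end{equation}
\item \textbf{Closing the step.} Define $F(\omega_n)=\mathbb{E}[f\mid\omega_n]$ and $G(\omega_n)=\mathbb{E}[g\mid\omega_n]$. These are averages over $\omega'$ of $f(\omega',\omega_n)$ and $g(\omega',\omega_n)$ against a measure that does not depend on $\omega_n$, by independence of the coordinates. Since $f(\omega',1)\ge f(\omega',0)$ pointwise, $F$ and $G$ are increasing functions of the single bit $\omega_n$. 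Applying the base case to $F,G$ yields
\begin{equation}
\mathbb{E}[fg]\ge \mathbb{E}[FG]\ge \mathbb{E}[F]\,\mathbb{E}[G]=\mathbb{E}[f]\,\mathbb{E}[g].
\end{equation}
\end{itemize}

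To pass to $m$ events, I would induct on $m$. The intersection $A_1\cap\dots\cap A_{m-1}$ of increasing events is itself increasing, so the two-event case gives
\begin{equation}
\Pr_\nu\Big(\bigcap_{i\le m}A_i\Big)\ge \Pr_\nu\Big(\bigcap_{i<m}A_i\Big)\Pr_\nu(A_m),
\end{equation}
and iterating produces the product bound. For the all-decreasing case there are two equivalent routes. One is to note that if $f,g$ are both decreasing then $-f,-g$ are both increasing, and $\mathbb{E}[(-f)(-g)]=\mathbb{E}[fg]$. The other is to apply the bit-flip map $\omega\mapsto \mathbf{1}-\omega$, which turns decreasing events into increasing ones and carries the Bernoulli$(\nu)$ product measure to the Bernoulli$(1-\nu)$ product measure; since the increasing case holds for every parameter in $[0,1]$, the conclusion transfers.

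The main point needing care is the inductive step, specifically the claim that conditional expectations preserve monotonicity. This relies on the product structure, since the law of $\omega'$ is unaffected by $\omega_n$, which is exactly why the lemma is stated for product measures. The degenerate parameters $\nu\in\{0,1\}$ are trivial, because the measure is then a point mass and equality holds.
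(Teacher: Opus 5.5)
Your proof is correct and follows the standard argument for the cited result: the paper gives no proof of its own but defers to Grimmett's Theorem~2.4, which proves the functional inequality $\mathbb{E}_\nu[fg]\ge\mathbb{E}_\nu[f]\,\mathbb{E}_\nu[g]$ in the same way, by induction on the number of coordinates with the one-coordinate case as the base (at most you condition on the other side of the split). Your extensions to $m$ events (closure of increasing events under intersection) and to the all-decreasing case (via $f\mapsto -f$ or the bit-flip map) are what is needed to reach the lemma as the paper states it, since the cited theorem is phrased for two increasing events.
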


\begin{lemma}[Bounds on the Binomial Coefficient {\cite[Lemma 4.7.1]{ash1990information}}] \label{lem:binom_bounds}
    Fix a positive integer $n \in \mathbb{N}$. For any $p \in (0,1)$ such that $np$ is an integer, it holds that
    \begin{equation}
        \label{eq:binom_bounds} \frac{\exp({nH(p)})}{2\sqrt{2np(1-p)}}  \leq {n \choose np} \leq \frac{\exp(nH(p))}{\sqrt{2\pi np(1-p)}},
    \end{equation}
    where $H(p) \coloneqq -p\log p - (1-p) \log (1-p)$ is the binary entropy function measured in nats.
\end{lemma}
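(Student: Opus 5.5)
The plan is to obtain both inequalities from Stirling's formula with explicit error terms. The tool is Robbins' refinement, valid for every integer $N\ge 1$:
\[
\sqrt{2\pi N}\,(N/e)^N e^{1/(12N+1)} \le N! \le \sqrt{2\pi N}\,(N/e)^N e^{1/(12N)} .
\]
Write $m = np$, an integer with $1 \le m \le n-1$ because $p\in(0,1)$. Apply the bounds to $n!$, $m!$ and $(n-m)!$ in ${n \choose m} = \frac{n!}{m!(n-m)!}$. The powers of $e$ cancel exactly. The remaining main factor is
\[
\frac{n^n}{m^m (n-m)^{n-m}} = \exp\big(-m\log\tfrac{m}{n} - (n-m)\log\tfrac{n-m}{n}\big) = \exp(nH(p)).
\]
The square-root factors combine to $\sqrt{\frac{n}{2\pi m(n-m)}} = \frac{1}{\sqrt{2\pi np(1-p)}}$. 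Hence
\[
{n\choose m} = \frac{\exp(nH(p))}{\sqrt{2\pi np(1-p)}}\, e^{\epsilon},
\]
where the correction $\epsilon$ is the difference of the $1/(12\cdot)$ terms.

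\textbf{Upper bound.} Using the upper Robbins bound for $n!$ and the lower bounds for $m!$ and $(n-m)!$ gives
\[
\epsilon \le \frac{1}{12n} - \frac{1}{12m+1} - \frac{1}{12(n-m)+1}.
\]
Since $m \le n-1$, we have $12m+1 \le 12n$, so $\frac{1}{12m+1} \ge \frac{1}{12n}$ and therefore $\epsilon \le 0$. This yields the claimed upper bound with constant $\frac{1}{\sqrt{2\pi}}$.

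\textbf{Lower bound.} Here the roles of the Robbins bounds are swapped, giving
\[
\epsilon \ge \frac{1}{12n+1} - \frac{1}{12m} - \frac{1}{12(n-m)} .
\]
We need $\frac{e^{\epsilon}}{\sqrt{2\pi}} \ge \frac{1}{2\sqrt 2}$, i.e.\ $e^{\epsilon} \ge \sqrt{\pi}/2 \approx 0.886$. I would split into two cases.
\begin{itemize}
\item If $\min(m,n-m)\ge 2$, then $\epsilon \ge -\frac{1}{24}-\frac{1}{24} = -\frac{1}{12}$. Since $e^{-1/12}\approx 0.920 > 0.886$, the bound holds.
\item If $\min(m,n-m)=1$, the crude estimate $\epsilon \gtrsim -1/6$ is too weak, because $e^{-1/6}\approx 0.846 < 0.886$. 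This case is the main obstacle, and I would handle it by direct computation. By symmetry take $m=1$, so ${n\choose 1}=n$ and $np(1-p) = (n-1)/n$. Also $\exp(nH(p)) = n\,(n/(n-1))^{n-1}$. The required inequality then reduces to
\[
2\sqrt 2\,\sqrt{(n-1)/n} \;\ge\; \big(n/(n-1)\big)^{n-1}.
\]
\end{itemize}

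\textbf{Checking the reduced inequality.} Call the left side $A_n$ and the right side $B_n$. For $n=2$ both sides equal $2$. For small $n$ a direct check works; for example, at $n=3$ we get $A_3 \approx 2.31 \ge 2.25 = B_3$, and at $n=4$ we get $A_4\approx 2.45 \ge 2.37 = B_4$. For larger $n$, use that $A_n$ is increasing in $n$ and $B_n \le e$. So it suffices to find $n_0$ with $A_{n_0} \ge e$ and to check the finitely many $n$ below $n_0$ directly. Since $A_n \ge e$ once $(n-1)/n \ge e^2/8 \approx 0.924$, we can take $n_0 = 14$, leaving $n=2,\dots,13$ for direct verification.

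Combining the two cases gives the lower bound. The only delicate point is this boundary case $\min(m,n-m)=1$; everything else is routine bookkeeping of the Robbins error terms.
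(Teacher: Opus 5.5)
Your proof is correct. Robbins' two-sided Stirling bounds give the upper bound at once (since $12m+1\le 12n$) and the lower bound whenever $\min(m,n-m)\ge 2$ (since $e^{-1/12}>\sqrt{\pi}/2$). The boundary case $m=1$ reduces exactly to $(n/(n-1))^{n-1}\le 2\sqrt{2(n-1)/n}$, which holds for all $n\ge 2$, with equality at $n=2$; the finite range $2\le n\le 13$ that you left for direct checking does verify, though you should write it out. The paper gives no proof of its own and only cites Ash's Lemma 4.7.1, which is proved by this same Stirling-with-explicit-error-terms argument, with the small cases treated separately.
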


\section{Achievability Proofs} \label{app:ach_proofs}

\subsection{Proof of Theorem \ref{thm:unbounded} (Achievability for Unbounded $\gamma_{\max})$} \label{app:unbounded_proof}
As discussed in Section \ref{sec:achievability}, our algorithm consists of two batches of tests: the first batch uses $T_1 = (k \log_2 \frac{n}{k})(1+o(1))$ tests to run the \textsc{Subset} algorithm from \cite{mcmorrow2026optimal}, which yields an estimate $\widehat{S}_1 \subseteq S$ of size $\lvert \widehat{S}_1 \rvert \geq k - k /\log^4k$ with probability $1-o(1)$. Throughout the rest of the proof, we will implicitly condition on this first batch being successful. In the second batch, we aim to resolve the status of the remaining $\lvert S \setminus \widehat{S}_1 \rvert \leq k/\log^4k$ tests by doing the following:
\begin{enumerate}
    \item Design $T_2'= (\frac{ek}{\log^4k} \log n)(1+o(1))$ tests for a standard group testing problem with $k/\log^4 k$ defectives by using a Bernoulli design (Definition \ref{def:Bernoulli}) with inclusion probability $p = \frac{\log^4k}{k}$.
    \item Repeat each of the $T_2'$ tests $R$ times (for a total of $T_2 = R \cdot T_2'$ tests) for some $R > 0$ to be determined later. In the $r$-th repetition of each test, the chosen defectivity threshold is $\gamma_r = \tau + r-1$ 
    for some $\tau > 0$ also to be determined later.
    \item For each test $t \in [T_2']$, do the following: (i) if all of its repetitions return the same outcome, then declare $\psi_t$ as undetermined and declare an error; (ii) otherwise, identify $\psi_t$ as the highest threshold $\gamma_r$ for which the test outcome is 1.
    \item For each test $t \in [T_2']$, subtract the number of items from $\widehat{S}_1$ in the test from $\psi_t$. Convert this to a standard group testing observation by setting $Y_t = 1$ if the residual number of defectives is non-zero (otherwise $Y_t = 0$).
     \item Using the test design for the $T_2'$ tests and the standard group testing outcomes $Y_1,\dots, Y_{T_2'}$, run the COMP algorithm to recover the remaining $\lvert S \setminus \widehat{S}_1 \rvert \leq k/\log^4k$ defectives. \label{enum:step4}
\end{enumerate}
Suppose that we can show (with high probability) that: 
\begin{itemize}
    \item[(i)] There exist values of $R,\tau$ such that all values of $\psi_t$ lie within the set
    $\{\tau + 1, \tau + 2,\dots, \tau + R-2\}$, and $T_2 = R\cdot T_2' = o(k \log n)$;
    \item[(ii)] $T_2'$ tests suffice to recover the remaining $\lvert S \setminus \widehat{S}_1 \rvert \leq k/\log^4k$ defectives from the standard GT outcomes.
\end{itemize} 
Then, a rate of $1$ is achievable in the GT-ST problem with unbounded $\gamma_{\max}$, since all defectives are recovered, and the number of tests across both stages is
\begin{equation}
    \label{eq:unbounded_both_stages_tests} T = \Big(k \log_2 \frac{n}{k}\Big)(1+o(1)) + o(k \log n) = \Big(k \log_2 \frac{n}{k}\Big)(1+o(1)).
\end{equation}
We now proceed to show each of these statements hold:

{\bf Establishing Statement (i).} To find appropriate values of $R,\tau$, observe that if $\psi_t$ is known to lie (with high probability) in an interval $(a,b)$ for all $t$, the choices $\tau = \lfloor a \rfloor , \; R = \lceil b\rceil  - \lfloor a \rfloor +1$ suffice to ensure that $\psi_t \in \{\tau + 1, \tau +2,\dots, \tau + R-2\}$, since $\tau \leq a$ and $\tau + R-1 \geq b$. Therefore, determining $R,\tau$ reduces to proving the existence of such an interval of size $o(\log^4 k)$, which can be shown via a concentration argument. Specifically, we wish to show that the deviation of $\psi_t$ from its mean is $o(\log^4k)$ for all $t$ with high probability, which will give us the desired interval. Since the tests follow a Bernoulli design with parameter $p = \frac{\log^4k}{k}$, the total number of defectives $\psi_t$ in a test is a ${\rm Bin}(k, p)$ random variable with mean $\mu_\psi \coloneqq \mathbb{E}[\psi_t] = \log^4 k$. Thus, the multiplicative Chernoff bound (Lemma \ref{lem:chernoff}) and a union bound over all $T_2'$ tests imply that
\begin{equation}
    \label{eq:pos_test_conc} \Pr(\exists t \in \{1,\dots, T_2'\} :\lvert \psi_t - \mu_\psi \rvert \geq \delta \mu_\psi) \leq 2T_2' \exp\Big(-\frac{\delta^2\mu_\psi}{3}\Big)
\end{equation}
for any $\delta \in (0,1)$. We then choose $\delta = \sqrt{\frac{3}{\mu_\psi}\log (2kT_2')}$, which ensures that 
\begin{equation}
    \label{eq:unbounded_conc_interval} \psi_t \in  \big((1-\delta)\mu_\psi, (1+\delta)\mu_\psi\big), \ \forall t \in \{1,\dots, T_2'\}
\end{equation}
with probability $1 - o(1)$. In accordance with our previous discussion, we can then set 
\begin{equation}
    \label{eq:choice_R_phi} \tau = \lfloor (1-\delta)\mu_\psi\rfloor , \quad R = \lceil (1+\delta)\mu_\psi\rceil - \lfloor (1-\delta)\mu_\psi\rfloor + 1 = O(\delta\mu_\psi)
\end{equation}
to ensure that $\psi_t \in \{\tau + 1,\tau + 2, \dots, \tau + R - 2\}$ for all $t$ with probability $1-o(1)$. Finally, to obtain a bound on the scaling of $R$, we observe from the choice of $\delta$ that
\begin{align}
    \label{eq:delta_mu_scaling_1} \delta\mu_\psi &= \sqrt{3\mu_\psi \log (2kT_2')} \\
    \label{eq:delta_mu_scaling_2} &= \log^2k \sqrt{3 \log(2kT_2')} \\
    \label{eq:delta_mu_scaling_3} &= O(\log^{5/2} k),
\end{align}
where \eqref{eq:delta_mu_scaling_3} uses the fact that $T_2' = O(k\log n/\log^4 k)$. Hence, $R = O(\log^{5/2}k)$, implying that the number of tests required to learn the thresholds $\psi_t$ is 
\begin{equation}
    \label{eq:T2} T_2 = R \cdot T_2' = O(\log^{5/2} k ) \times O\left(\frac{k}{\log^4 k}\log n\right) = o(k \log n),
\end{equation}
as desired. This implies that statement (i) holds with probability $1-o(1)$.

{\bf Establishing Statement (ii).} Conditioned on statement (i) holding, we now show that $T_2' = (\frac{ek}{\log^4k}\log n)(1+o(1))$ tests suffice to recover the remaining $\lvert S \setminus \widehat{S}_1 \rvert \leq k/\log^4k$ defectives using the COMP algorithm. While this claim follows directly from standard results for COMP \cite[Thm. 4]{chan2011non}, we formally prove it here for completeness, and since the argument is relatively simple. 

Recall that the COMP algorithm fails if there is at least one non-defective item that does not appear in any negative tests. Further recalling that the tests follow a Bernoulli design, the probability of a non-defective item appearing in a given test, alongside the test being negative is $p(1-p)^{\lvert S \setminus \widehat{S}_1\rvert}$. Therefore, the probability that any given non-defective appears in no negative tests is $(1 - p(1-p)^{\lvert S \setminus \widehat{S}_1\rvert})^{T_2'}$. Using the fact that $\lvert S \setminus \widehat{S}_1 \rvert \leq k/\log^4k$ and recalling that $p = \frac{\log^4k}{k}$, this probability can be upper bounded by $(1-p(1-p)^{1/p})^{T_2'}$, which we can further upper bound by $\exp(-T_2' \, p(1-p)^{1/p})$. A union bound over all non-defective items then implies that 
\begin{equation}
    \label{eq:comp_error} \Pr \left( \bigcup_{i \in [n] \setminus S} \{i \text{ is not in any negative tests} \}\right) \leq n\exp\left(-T_2' \,p(1-p)^{1/p}\right),
\end{equation}
where we have upper bounded the number of non-defectives by $n$. Choosing
\begin{equation}
    \label{eq:T_2'_choice} T_2' = \frac{\log n}{p(1-p)^{1/p}}\Big(1 + \frac{1}{\log \log n}\Big)
\end{equation}
ensures that the error probability is upper bounded by $n^{-1/\log \log n} = o(1)$. Finally, applying $(1-p)^{1/p} = e^{-1}(1-o(1))$ and substituting in the value of $p$ implies that the number of tests sufficient for COMP to succeed with probability $1-o(1)$ is $T_2' = (\frac{ek}{\log^4k} \log n)(1+o(1))$, which implies that statement (ii) holds with probability $1-o(1)$.

The proof of Theorem \ref{thm:unbounded} is completed by combining \eqref{eq:unbounded_both_stages_tests} with the fact that both statements (i) and (ii) hold with probability $1-o(1)$.

\subsection{Proof of Theorem \ref{thm:large} (Achievability for Large $\gamma_{\max}$)} \label{app:large_proof}

In the case that $\gamma_{\max}$ is ``bounded but large'', we use a similar two-batch template to the one used in the proof of Theorem \ref{thm:unbounded}. The first batch proceeds in the exact same way as in Theorem \ref{thm:unbounded}, so we do not repeat the details. For the second batch (conditioned on batch 1 succeeding), we do the following: 
\begin{enumerate}
    \item Design $T_2' = (\frac{8k}{\gamma_{\max}}\log n)(1+o(1))$ tests using a Bernoulli design with inclusion probability $p = \frac{\gamma_{\max}}{2k}$, and calculate $\psi_{t,1}$, the number of \textsc{Subset}-identified defectives in test $t$ for each $t \in [T_2']$.
    \item Define the interval $(\gamma^-, \gamma^+) \coloneqq (\frac{\gamma_{\max}}{2} - 2\sqrt{\gamma_{\max}}, \,\frac{\gamma_{\max}}{2} + 2\sqrt{\gamma_{\max}})$, and repeat each of the $T_2'$ tests with incrementing thresholds $\lfloor \gamma^-\rfloor, \lfloor\gamma^-\rfloor + 1, \dots, \lceil \gamma^+ \rceil$, for a total of $T_2 = \gamma_{{\rm dif}}\cdot T_2'$ tests, where we define $\gamma_{{\rm dif}} \coloneqq \lceil \gamma^+\rceil - \lfloor\gamma^-\rfloor + 1$.
    \item For each test $t \in [T_2']$, observe the test outcomes $\{Y_{t,r}\}_{r=0}^{\gamma_{{\rm dif}}-1}$, with the $r$-th test outcome being given by $Y_{t,r} =\mathbbm{1} \{\psi_t \geq \lfloor \gamma^-\rfloor + r\}$, and discard those whose test outcomes $\{Y_{t,r}\}_{r=0}^{\gamma_{\rm dif}-1}$ are all identical (as this implies these tests' values of $\psi_t$ lie outside of $(\gamma^-, \gamma^+)$).
    \item For the remaining $\widetilde{T}_2 \leq T_2'$ tests, use the test outcomes $\{Y_{t,r}\}_{r=0}^{\gamma_{\rm dif} -1 }$ to find $\psi_t$, and calculate the number of non \textsc{Subset}-identified defectives $\widetilde{\psi}_t = \psi_t - \psi_{t,1}$ in the test.
    \item Defining a test $t \in[\widetilde{T}_2]$ to be \emph{residually negative} if $\widetilde{\psi}_t = 0$, return the subset of items from $[n] \setminus \widehat{S}_1$ that do not appear in any residually negative tests (i.e., declare them to be defective).
\end{enumerate}
To prove Theorem \ref{thm:large}, it suffices to show that out of the non-discarded $\widetilde{T}_2$ tests, there are sufficiently many residually negative tests, so that all non-defectives are included in at least one with high probability. To this end, we first derive a (high probability) lower bound on $\widetilde{T}_2$.

In any given test $t \in [T_2']$, each item is included in it independently with probability $p = \frac{\gamma_{\max}}{2k}$. Thus, $\psi_t$ has a ${\rm Bin}(k, p)$ distribution with mean $\mu_\psi \coloneqq \mathbb{E}[\psi_t] =\frac{\gamma_{\max}}{2}$. The multiplicative Chernoff bound (Lemma \ref{lem:chernoff}) thus implies that
\begin{equation}
    \label{eq:ell_t_conc} \Pr(\lvert \psi_t - \mu_\psi \rvert \geq 2\sqrt{\gamma_{\max}}) \leq 2 \exp\left(-\frac{8}{3}\right) \approx 0.139.
\end{equation}
Thus, the number of defectives $\psi_t$ lies within $(\gamma^-, \gamma^+)$ with probability at least $1 - 2\exp(-8/3) \approx 0.861$. Since the number of defectives is independent across tests, it follows that the number of non-discarded tests $\widetilde{T}_2$ stochastically dominates a ${\rm Bin}(T_2', 1 - 2\exp(-8/3))$ random variable. Using this and the multiplicative Chernoff bound, it follows that
\begin{equation}
    \label{eq:T2_tilde_conc} \Pr\left(\widetilde{T}_2 \leq \frac{1}{2}T_2'\right) \leq \Pr\left({\rm Bin}(T_2', 1 - 2\exp(-8/3)) \leq \frac{1}{2}T_2'\right) = o(1).
\end{equation}
Thus, $\widetilde{T}_2 \geq \frac{1}{2}T_2'$ with probability $1-o(1)$.

We now show that a constant fraction of the non-discarded tests are \emph{residually negative}, i.e., contain no item from $S \setminus \widehat{S}_1$.  To do so, we momentarily return to considering the entire sequence of tests indexed by $t \in [T'_2]$, and let $B$ denote the number of such tests containing at least one item from $S \setminus \widehat{S}_1$.  Using $\lvert S \setminus \widehat{S}_1 \rvert \leq k/\log^4k$ and Bernoulli's inequality, a given test contains an item from $S \setminus \widehat{S}_1$ with probability at most
\begin{equation}
    \label{eq:residual_negative_lb} 1 - \left(1 - \frac{\gamma_{\max}}{2k}\right)^{k/\log^4k} \leq \frac{\gamma_{\max}}{2\log^4k}.
\end{equation}
Since the tests are independent, this implies that $B \sd{\rm Bin}(\widetilde{T}_2, \frac{\gamma_{\max}}{2\log^4 k})$. Thus, the Chernoff bound and the fact that $\frac{\gamma_{\max}}{2\log^4 k} = o(1)$ imply that $B \le \frac{1}{4}T_2'$ with probability $1-o(1)$.  Combined with the previous paragraph, this implies that the number of non-discarded residually negative tests is at least $\widetilde{T}_2 - \frac{1}{4}T_2' \geq \frac{1}{4}T_2'$ with probability $1-o(1)$.

Fix a non-defective item $i \in [n] \setminus S$. Conditioned on there being at least $\frac{1}{4}T_2'$ residually negative tests, the probability that $i$ appears in none of them (thus being incorrectly classified as defective) is upper bounded by $(1 - \frac{\gamma_{\max}}{2k})^{T_2'/4}$, which can be further upper bounded by $\exp(-\frac{T_2'\gamma_{\max}}{8k})$. Taking a union bound over all non-defective items (of which there are at most $n$), the probability of \emph{any} non-defective not appearing in any residually negative test is upper bounded by $n\exp(- \frac{T_2'\gamma_{\max}}{8k})$. The choice $T_2' = (\frac{8k}{\gamma_{\max}}\log n)(1+\frac{1}{\log \log n})$ therefore implies that this probability is upper bounded by $n^{-1/\log \log n} = o(1)$. Since each of the designed $T_2'$ tests is conducted a total of $\gamma_{\rm dif} \coloneqq \lceil \gamma^+ \rceil - \lfloor \gamma^- \rfloor + 1 = O(\sqrt{\gamma_{\max}})$ times, the total number of tests conducted is
\begin{align}
    \label{eq:T_2_total_1} T_2 &= \gamma_{\rm dif} \cdot T_2' \\
    \label{eq:T_2_total_2} &= O(\sqrt{\gamma_{\max}}) \times \left(\frac{8k}{\gamma_{\max}}\log n \right)\left(1 + \frac{1}{\log \log n}\right) \\
    \label{eq:T_2_total_3} &= O\left(\frac{k}{\sqrt{\gamma_{\max}}}\log \frac{n}{k}\right),
\end{align}
where \eqref{eq:T_2_total_3} follows since $\log n = \Theta(\log \frac{n}{k})$, as $k = \Theta(n^\theta)$ for some $\theta \in (0,1)$. We conclude by noting that the number of tests across both batches is
\begin{equation}
    \label{eq:large_T_total} T = \Big(k \log_2 \frac{n}{k}\Big)(1+o(1)) + O\left(\frac{k}{\sqrt{\gamma_{\max}}}\log \frac{n}{k}\right),
\end{equation}
thus completing the proof.

\subsection{Proof of Theorem \ref{thm:fixed} (Achievability for Fixed $\gamma_{\max}$)} \label{app:fixed_proof}
In the case that $\gamma_{\max} = r$ for some fixed $r > 0$, we again adopt a two-batch approach. In the first batch, we apply a standard group testing algorithm with $T_1$ tests, similarly to the approaches taken in Theorems \ref{thm:unbounded} and \ref{thm:large}. Unlike these approaches, however, we incorporate a different test design and decoder, which we now outline: 
\begin{itemize}
    \item Rather than designing tests using a Bernoulli design (Definition \ref{def:Bernoulli}) with inclusion probability $p = \frac{\log 2}{k}$, the tests in batch 1 are now designed according to a near-constant column weight (NCC) design (Definition \ref{def:NCC}), with each item being independently placed in $L_1 = \frac{T_1 \log 2}{k}$ tests chosen uniformly at random with replacement.
    \item Rather than using $(k \log_2 \frac{n}{k})(1+o(1))$ tests in the first batch, we now let this number be $T_1 = (\frac{k}{\log 2}\log_2 \frac{n}{k})(1+o(1))$.
    \item Rather than using the \textsc{Subset} algorithm from \cite{mcmorrow2026optimal} as our decoder, we apply both COMP and DD decoders (Section \ref{sec:related_work}) to the test outcomes, and rely on their approximate recovery guarantees.
\end{itemize}
The reason for switching to the NCC design is that it has strictly better constants than the Bernoulli design when paired with COMP or DD decoding.  Moreover, the reason for switching to COMP and DD is (i) these are very simple and computationally efficient, and (ii) the SUBSET and SUPERSET problems are handled separately in \cite{mcmorrow2026optimal}, whereas the use of COMP and DD provides a simple way to obtain a subset and a superset simultaneously.

It is well known (see e.g., \cite[Ch. 2]{aldridge2019group}) that both COMP and DD can be run in polynomial time. Additionally, the analysis in \cite[App. D]{mcmorrow2026optimal} reveals that under our choice of $T_1$, the COMP and DD algorithms produce sets $S_{\rm COMP} \supseteq S$ and $S_{\rm DD} \subseteq S$ of size $k(1+o(1))$ and $k(1-o(1))$ respectively, with probability $1-o(1)$. 

Conditioned on batch 1 ``succeeding'' (i.e., the sets $S_{\rm COMP}$ and $S_{\rm DD}$ satisfying the above properties), we then do the following in batch 2:
\begin{enumerate}
    \item Design 
    \begin{equation}
        \label{eq:fixed_T_2} T_2 = \frac{k \log k}{\nu \log\Big(\frac{1}{1 - \nu^{r-1}e^{-\nu}/(r-1)!}\Big)}(1+o(1))
    \end{equation}
    tests according to the NCC design with parameter $L_2 = \frac{T_2 \nu}{k}$, where $\nu >0$ will be optimized later.
    \item Conduct each of the $T_2$ tests with defectivity threshold $\gamma = \gamma_{\max} = r$.
    \item For each $i \in S_{\rm COMP} \setminus S_{\rm DD}$, find a test $t \in [T_2]$ such that (i) there are exactly $r-1$ items from $S_{\rm DD}$ included in the test, and (ii) the only item from $S_{\rm COMP} \setminus S_{\rm DD}$ included in the test is $i$. If there exists some $i \in S_{\rm COMP} \setminus S_{\rm DD}$ having no such tests, then declare an error.
    \item Classify each $i \in S_{\rm COMP} \setminus S_{\rm DD}$ as defective if $Y_{t_i} = 1$, and non-defective otherwise, where $t_i \in [T_2]$ is a test satisfying conditions (i) and (ii) above for $i$.
    \item Declare the defective set as the union of $S_{\rm DD}$ and the set of classified defectives from $S_{\rm COMP} \setminus S_{\rm DD}$.
\end{enumerate}
It is straightforward to verify that this procedure runs in polynomial time, e.g., by first forming a length-$L_2$ list of which tests (of batch 2) each item is included in, one can readily verify $O(nk \log n)$ runtime.


If a test $t$ satisfies properties (i) and (ii) in step 3 above for an item $i \in S_{\rm COMP} \setminus S_{\rm DD}$, we say that test $t$ \emph{resolves} item $i$, since the status of $i$ can be determined from the test outcome. Therefore, if there is at least one test that resolves $i$ for all $i \in S_{\rm COMP} \setminus S_{\rm DD}$, successful recovery is possible, since all items in $S_{\rm COMP} \supseteq S$ can be correctly classified (meaning $S$ can be correctly recovered). Thus, to show that $S$ is successfully recovered (with high probability), it suffices to show that the choice of $T_2$ leads to all $i \in S_{\rm COMP} \setminus S_{\rm DD}$ being resolved by at least one test.

To do so, we first consider the probability that exactly $r-1$ items from $S_{\rm DD}$ are included in a test. The NCC design implies that each item is included in a test with probability $1 -(1 - \frac{1}{T_2})^{L_2}$. Using the definition of $L_2$, and the fact that $\frac{\nu}{k} = o(1)$, this probability can be expressed as 
\begin{align}
    \label{eq:i_in_t_1} \Pr({\rm item} \ i \ {\rm in \ test } \ t) &= 1 - \left(1 - \frac{1}{T_2}\right)^{L_2} \\
    \label{eq:i_in_t_2} &= 1 - \exp\left(- \frac{\nu}{k}\right)(1-o(1)) \\
    \label{eq:i_in_t_3} &= \frac{\nu}{k}(1+o(1)).
\end{align}
Denote the number of items from $S_{\rm DD}$ in test $t$ by $D_t$. The above calculations, and the fact that items are placed in tests independently of one another, imply that $D_t \sim {\rm Bin}\big(\lvert S_{\rm DD}\rvert, \frac{\nu}{k}(1+o(1)\big)$. Since $\lvert S_{\rm DD}\rvert \cdot \frac{\nu}{k}(1+o(1)) \to \nu$, the Poisson approximation to the binomial (Lemma \ref{lem:poisson_approx}) implies that 
\begin{equation}
    \label{eq:D_t_poisson_approx} q_{\nu, r} \coloneqq \Pr(D_t = r-1) = \frac{\nu^{r-1}e^{-\nu}}{(r-1)!}(1+o(1)).
\end{equation}
Thus, each test contains $r-1$ defectives from $S_{\rm DD}$ with probability approaching $\nu^{r-1}e^{-\nu}/(r-1)!$. Now, let
\begin{equation}
    \label{eq:defn_V} V = \sum_{t=1}^{T_2} \mathbbm{1}\{D_t = r-1\}
\end{equation}
be the number of tests that include exactly $r-1$ items from $S_{\rm DD}$. By \eqref{eq:D_t_poisson_approx} and the linearity of expectation, it follows that $\mu_V \coloneq \mathbb{E}[V] = q_{\nu, r} T_2$. To obtain a high probability lower bound on $V$, define $Z_{\ell i} \in [T_2]$ as the $\ell$-th test drawn for item $i \in S_{\rm DD}$ to be included in, and note that there exists a function $f:\mathbb{R}^{L_2\cdot \lvert S_{\rm DD}\rvert} \to \mathbb{R}$ such that $V = f(\{Z_{\ell i}\}_{(\ell , i) \in [L_2] \times S_{\rm DD}})$. Moreover, changing the value of any $Z_{\ell i}$ can lead to the value of $V$ changing by at most two, since in the worst case, it could be removed from a test containing exactly $r-1$ defectives from $S_{\rm DD}$ including $i$ (leading to $V$ decreasing by one), and be added to a different test containing exactly $r-1$ defectives from $S_{\rm DD}$ \emph{not} including $i$ (leading to $V$ further decreasing by one). Thus, $f$ satisfies the \emph{bounded differences property} \eqref{eq:bounded_diff} with $c_{\ell i} =2$ for all $(\ell, i ) \in [L_2] \times S_{\rm DD}$, meaning we can apply McDiarmid's inequality (Lemma \ref{lem:mcdiarmid}) to obtain 
\begin{align}
    \label{eq:V_conc_mcdiarmid_1} \Pr \left( \lvert V - \mu_V \rvert \geq \delta \right) &\leq 2\exp\left(- \frac{\delta^2}{2L_2 \lvert S_{\rm DD}\rvert}\right) \\
    \label{eq:V_conc_mcdiarmid_2} &= 2\exp\left(- \frac{\delta^2}{2\nu T_2} (1+o(1))\right)
\end{align}
for any $\delta > 0$, where we applied $L_2 = \frac{\nu T_2}{k}$ and $|S_{\rm DD}| = k(1-o(1))$ in \eqref{eq:V_conc_mcdiarmid_2}. Choosing $\delta = q_{\nu, r} T_2/\log n$ gives
\begin{equation}
    \label{eq:V_conc_delta_choice} \Pr \left(\lvert V - \mu_V \rvert \geq \frac{q_{\nu, r} T_2}{\log n}\right) \leq 2 \exp\left(- \frac{q_{\nu, r}^2T_2}{2\nu\log^2 n} (1+o(1))\right) = o(1),
\end{equation} 
since $T_2 = \Omega(k)$. In particular, this implies that 
\begin{align}
    \label{eq:V_lb_1} V &\geq q_{\nu, r}T_2\Big(1 - \frac{1}{\log n}\Big)  
\end{align}
with probability $1-o(1)$.  

Now, fix an item $i \in S_{\rm COMP} \setminus S_{\rm DD}$.  Let $U_i$ be the number of tests that include any $j \in S_{\rm COMP} \setminus S_{\rm DD}$ with $j \ne i$.  Since each item is placed in (at most) $L_2 = \Theta(\log n)$ tests, and since $|S_{\rm COMP}  \setminus S_{\rm DD}| = o(k)$, we have $U_i = o(k \log n) = o(T_2)$ (with probability one).  Hence, conditioned on \eqref{eq:V_lb_1} holding, each of the $L_2$ tests drawn for $i$ to be included in resolves it with probability at least
\begin{equation}
    \label{eq:prob_resolve_single_test} 
    \frac{V - U_i}{T_2} = q_{\nu, r} - o(1).
\end{equation}
Thus, the probability $i$ is not resolved in \emph{any} of the tests it is drawn to be included in is at most
\begin{align}
    \label{eq:prob_resolve_no_tests_1} \left(1 - q_{\nu, r} + o(1) \right)^{L_2} &= \exp\left(L_2 \log \left(1 - q_{\nu, r} + o(1)\right)\right) \\
    &= \exp\left(- L_2 \log  \left(\frac{1}{1 - q_{\nu, r} + o(1)} \right)\right).
\end{align}
Substituting $L_2 = \frac{\nu T_2}{k}$, taking a union bound over all $i \in  S_{\rm COMP} \setminus S_{\rm DD}$, and trivially upper bounding the size of $S_{\rm COMP} \setminus S_{\rm DD}$ by $k$, the probability of \emph{any} $i \in  S_{\rm COMP} \setminus S_{\rm DD}$ not being resolved by any test is upper bounded by
\begin{equation}
    \label{eq:prob_not_resolved_ub} k \exp\left(- \frac{\nu T_2}{k} \log  \left(\frac{1}{1 - q_{\nu, r} + o(1)}\right)\right). 
\end{equation}
Therefore, choosing
\begin{equation}
    \label{eq:fixed_gamma_T_choice} T_2 = \frac{k\log k}{\nu \log \big(\frac{1}{1 - q_{\nu, r} + o(1)}\big)}\left(1 + \frac{1}{\log \log k}\right)
\end{equation}
suffices to ensure that \eqref{eq:prob_not_resolved_ub} is upper bounded by $k^{-1/\log \log k} = o(1)$. Substituting the expression for $q_{\nu, r}$ in \eqref{eq:D_t_poisson_approx}, this simplifies to
\begin{align}
    \label{eq:T_2_q_sub_2} T_2 &= \frac{k \log k}{\nu \log \big(\frac{1}{1 - \nu^{r-1}e^{-\nu}/(r-1)!}\big)}(1+o(1)).
\end{align}
Optimizing over $\nu$, substituting $r = \gamma_{\max}$, and summing the number of tests across both batches, it follows that
\begin{equation}
    \label{eq:large_T_final} T = \Big(\frac{k}{\log 2}\log_2 \frac{n}{k}\Big)(1+o(1)) + \min_{\nu > 0} \frac{k \log k}{\nu \log \big(\frac{1}{1 - \nu^{\gamma_{\max}-1}e^{-\nu}/(\gamma_{\max}-1)!}\big)}(1+o(1))
\end{equation}
tests suffice to ensure $P_{\rm e} \to 0$, as claimed. 

\section{Converse Proofs} 
\subsection{Roadmap} \label{sec:roadmap}

As discussed in Section \ref{sec:converse}, the proofs of Theorems \ref{thm: sing} and \ref{thm: multi} revolve around the concepts of \emph{informativeness} and \emph{masking}, which are defined as follows:
\begin{definition} \label{def:informative_masked}
    A test $t \in [T]$ with threshold $\gamma^{(t)}$ is called \emph{informative} for an item $i \in [n]$ if $i$ is included in the test, and there are exactly $\gamma^{(t)} -1$ other defective items (not including $i$) in the test.
    If there are no tests $t \in [T]$ that are informative for $i \in [n]$, we say that $i$ is \emph{masked}.
\end{definition}
Such a concept has been used extensively in the classical group testing literature to prove non-adaptive converse bounds \cite{aldridge2018individual, coja2020optimal, bay2022optimal, mcmorrow2026optimal}, so it is natural to adopt such a strategy in the TGT and GT-ST settings we consider. Intuitively, an item being masked means that it is difficult for any decoder to correctly classify, since its defectivity status is conditionally independent of the test outcomes. Thus, if there are many masked defectives and non-defectives, then any decoder would effectively have to ``guess'' the status of each of these items, which will fail with high probability. While this is only an intuitive argument, it is formalized in Section \ref{sec:step6_single}. 

Thus, we seek to show that for values of $T$ below that stated in Theorems \ref{thm: sing} and \ref{thm: multi}, there are many masked defectives and non-defectives with high probability. In pursuit of this goal, we make the following simplifying assumptions on $S$ and $k$ (to be dropped later) that will be useful in the analysis:
\begin{itemize}
    \item We assume that $S$ is generated according to the \emph{i.i.d.~prior}, where each item is defective independently with probability $q = \frac{k}{n}$.
    \item Additionally, we assume that the sparsity parameter $\theta$ is \emph{arbitrarily close to $1$}; specifically, $k = n^{1-\xi}$, where $\xi > 0$ is arbitrarily small.
\end{itemize}

Under these assumptions, we demonstrate that there are many masked defectives and non-defectives with high probability via the following high-level roadmap. We will first state it for the simpler proof of Theorem \ref{thm: sing} (TGT), and outline the changes made for proving Theorem \ref{thm: multi} (GT-ST) in Section \ref{sec:modifications}.
\begin{enumerate}
    \item Establish the existence of a set $\mathcal{I} \subset [n]$ of size $\lvert \mathcal{I}\rvert \geq n^{1-3\xi}$ whose \emph{masking events are independent}. This can be found using a greedy construction analogous to that of the Gilbert-Varshamov construction \cite{gilbert1952comparison}. \label{enum:step1_roadmap}
    \item For a given $i \in \mathcal{I}$ and $s \in \{0,\dots,d_i\}$ (where we recall that $d_i$ is the number of tests $i$ is included in), we seek to obtain a lower bound on the probability of $L_{s,i}$, where $L_{s,i}$ is the event that the following both occur: (i) exactly $s$ of the tests $i$ is included in contain at most $\gamma - 2$ other defectives; and (ii) the remaining $d_i - s$ tests that $i$ is included in contain at least $\gamma$ other defectives. \label{enum:step2_roadmap}
    \item Let $D_i$ denote the event that $i \in \mathcal{I}$ is masked. The probability of $D_i$ is then equal to the probability that $L_{s,i}$ occurs for at least one $s \in \{0,\dots, d_i\}$. Additionally, the events $\{L_{s,i}\}_{s=0}^{d_i}$ are mutually disjoint, meaning $\Pr(D_i)$ can be written as $\sum_{s=0}^{d_i} \Pr(L_{s,i})$, which can subsequently be lower bounded by the aforementioned bounds on $\Pr(L_{s,i})$, for each $s \in \{0,\dots,d_i\}$. \label{enum:step3_roadmap}
    \item We are then able to use the lower bounds on $\Pr(D_i)$ and the fact that the masking events for each $i \in \mathcal{I}$ are independent (under the i.i.d.~prior) to show that there are many masked items with high probability via a concentration argument. Given this, we can also show that among these masked items, there is at least one masked defective and $\omega(1)$ masked non-defectives with high probability. \label{enum:step4_roadmap}
\end{enumerate}
Given these masked items, we will be in a position to show that this implies that successful recovery is not possible, while simultaneously dropping the simplifying assumptions:
\begin{enumerate}
    \item[5.] We remove the first assumption by proving, via a conditioning argument, that the high-probability bound on the number of masked items under the i.i.d.~prior extends to the combinatorial prior in the ``high-$\theta$ regime''. \label{enum:step5_roadmap}
    \item [6.] Given this, we can show that having at least one masked defective and $\omega(1)$ masked non-defective implies a success probability of $o(1)$ for the MAP decoder, which implies the same for \emph{any} decoder. \label{enum:step6_roadmap}
    \item[7.] Finally, we transfer this high-$\theta$ result to any $\theta \in (0,1)$ via the contrapositive statement that achievability for smaller $\theta$ implies achievability for larger $\theta$ under suitable conditions, importantly showing that this reduction preserves Assumptions \ref{asm:regularity} and \ref{asm:sparsity}. \label{enum:step7_roadmap}
\end{enumerate}
We now follow this roadmap to give a detailed proof of Theorem \ref{thm: sing}.

\subsection{Proof of Theorem \ref{thm: sing} (Converse for the Single-Threshold Setting)} \label{app:pf_conv_single}

\subsubsection{Step 1 (Constructing $\mathcal{I}$)} \label{sec:step1_single}
We begin by constructing a set $\mathcal{I} \subset [n]$ such that the following properties hold: (i) $\lvert \mathcal{I}\rvert \geq n^{1-3\xi}$; and (ii) the masking events for all $i \in \mathcal{I}$ are mutually independent. To this end, we fix an arbitrary non-adaptive test design $\bX$, and represent $\bX$ as a \emph{bipartite graph} $G = ([n]\cup [T], E)$, where the edge set $E \subseteq [n] \times [T]$ is the set consisting of all pairs $(i,t) \in [n] \times [T]$ such that $X_{ti} = 1$. That is, there is an edge between item $i \in [n]$ and test $t \in [T]$ if and only if $i$ is included in $t$. Additionally, define the (bipartite) \emph{distance} between $i,j \in [n]$ to be minimum number of edges that need to be traversed to reach $j$ from $i$. To construct the set $\mathcal{I}$ satisfying the desired properties, we adopt the following greedy iterative procedure, which follows the ones used in \cite{coja2020optimal, bay2022optimal} (but with different analysis details to follow):
\begin{enumerate}
    \item Initialize $\mathcal{U} = [n]$ and $\mathcal{I} = \emptyset$.
    \item Select an arbitrary item $i \in \mathcal{U}$, add it to $\mathcal{I}$, and remove all items from $\mathcal{U}$ whose distance from $i$ is at most $4$ (including $i$ itself).
    \item Repeat Step 2 until $\mathcal{U} = \emptyset$, and return the set $\mathcal{I}$.
\end{enumerate}
The following lemma gives a lower bound on the size of the set returned by this procedure, alongside the minimum bipartite distance between items:
\begin{lemma}\label{lem:distance4_packing}
Consider a test design satisfying Assumptions \ref{asm:regularity} and \ref{asm:sparsity}. Then, if $k = n^{1-\xi}$, there exists a subset $\mathcal{I} \subset [n]$ such that the following two statements hold:
\begin{enumerate}
    \item For any two distinct items $i, j \in \mathcal{I}$, the distance between $i$ and $j$ in the bipartite graph exceeds $4$.
    \item The cardinality of $\mathcal{I}$ satisfies $|\mathcal{I}| \ge n^{1-3 \xi}$.
\end{enumerate}
\end{lemma}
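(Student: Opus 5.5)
The plan is to analyze the greedy procedure just described, which yields both properties. Property~1 holds by construction. When an item $i$ is added to $\mathcal{I}$, every item within distance $4$ of $i$ is deleted from $\mathcal{U}$. Hence any item $j$ added later lies at distance greater than $4$ from $i$. Distance is symmetric, so this covers all pairs.

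For Property~2, we bound the number of items removed in each round; call this the size of the distance-$\le 4$ ball around $i$. In the bipartite graph, item-to-item distances are even. Distance $2$ means the two items share a test. Distance $4$ means there is a path item--test--item--test--item. Under Assumption~\ref{asm:regularity}, item $i$ lies in at most $\max_j d_j = O(\log n)$ tests, and each test contains at most $\max_t (w_t+1) = O(n/k)$ items. So the number of items within distance $2$ is at most $O(\log n)\cdot O(n/k) = O(\frac{n\log n}{k})$. Repeating once more, the number within distance $4$ is at most
\[
\Big(\max_j d_j \cdot \max_t (w_t+1)\Big)^2 = O\Big(\frac{n^2\log^2 n}{k^2}\Big).
\]
A crude bound suffices here, so Assumption~\ref{asm:sparsity} is not needed. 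Each round removes at most this many items from $\mathcal{U}$, and the procedure stops only when $\mathcal{U}=\emptyset$. Therefore
\[
|\mathcal{I}| \ge \frac{n}{O(n^2\log^2 n/k^2)} = \Omega\Big(\frac{k^2}{n\log^2 n}\Big).
\]
Substituting $k=n^{1-\xi}$ gives $|\mathcal{I}| = \Omega(n^{1-2\xi}/\log^2 n)$. This is at least $n^{1-3\xi}$ for large $n$, since $n^{\xi}$ dominates $\log^2 n$ for any fixed $\xi>0$.

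The only step needing care is to use the maximum degree and weight rather than averages. The near-constancy in Assumption~\ref{asm:regularity} guarantees $\max_t w_t = \Theta(n/k)$ and $\max_i d_i = \Theta(\log n)$, so the ball bound is uniform over all starting items $i$. No real obstacle is expected; the slack between $n^{-2\xi}/\log^2 n$ and $n^{-3\xi}$ absorbs all constants.
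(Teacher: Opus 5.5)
Your proposal is correct and follows essentially the same route as the paper: the greedy construction gives Property~1 by construction, and bounding the radius-$4$ ball by $O\big((d_{\max}w_{\max})^2\big) = O(n^2\log^2 n/k^2)$ yields $|\mathcal{I}| \ge n/B = \Omega(n^{1-2\xi}/\log^2 n) \ge n^{1-3\xi}$ for large $n$. As you observe, Assumption~\ref{asm:sparsity} is not needed for this lemma, and the paper's proof does not use it either.
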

\begin{proof}
The first statement follows immediately from the construction of $\mathcal{I}$, so it only remains to prove the second statement. To do so, fix an arbitrary item $i \in [n]$. We bound the number of items whose bipartite distance from $i$ is at most $4$:
\begin{itemize}
    \item \textbf{Distance $0$:} The item $i$ itself contributes $1$.
    \item \textbf{Distance $2$:}  
    The item $i$ participates in at most $d_{\max} \coloneqq \max_{i \in [n]} d_i$ tests. Each such test contains at most $w_{\max}$ additional items, where $w_{\max} \coloneqq \max_{t \in [T]} w_t$ (recalling our convention that $1+w_t$ is the test weight). Hence the number of items at distance $2$ from $i$ is at most $d_{\max} w_{\max}$.
    \item \textbf{Distance $4$:}  
    Each distance-$2$ item participates in at most $d_{\max}-1$ additional tests (excluding the test already used to reach it). Each such test contains at most $w_{\max}$ additional items. Therefore, the number of items at distance $4$ is bounded by
    $d_{\max}w_{\max}(d_{\max}-1)w_{\max} = d_{\max}(d_{\max}-1)w_{\max}^2$.
\end{itemize}
Combining these bounds, the total number of items within distance at most $4$ of any fixed item is bounded by
\begin{equation}
B = 1 + d_{\max}w_{\max} + d_{\max}(d_{\max}-1)w_{\max}^2 \le 2(d_{\max}w_{\max})^2.
\end{equation}
Thus at each iteration, at most $B$ items (including $i$ itself) are removed from $\mathcal{U}$. Since the removed sets are disjoint across iterations, the procedure selects at least
\begin{equation}
|\mathcal I| \ge \frac{n}{B} \ge \frac{n}{2(d_{\max} w_{\max})^2}
\end{equation}
items. Recalling that $d_{\max} = \Theta(\log n)$ and $w_{\max} = \Theta(n/k)$ due to Assumption \ref{asm:regularity}, it follows that $w_{\max} \leq \frac{n}{k}\log n$ for sufficiently large $n$. Using this and substituting in $k = n^{1-\xi}$, the size of $\mathcal{I}$ is lower bounded by 
\begin{align}
    \lvert \mathcal{I} \rvert &\geq \frac{n}{2(d_{\max} \cdot \frac{n}{k}\log n)^2} \\
    &= \frac{n^{1-2\xi}}{2 (d_{\max}\log n)^2} \\
    \label{eq:cardinality_I_ub} &\geq n^{1-3\xi},
\end{align}
where \eqref{eq:cardinality_I_ub} holds for sufficiently large $n$. This proves the second statement.
\end{proof}
Having established that property (i) holds for $\mathcal{I}$, it now remains to argue that the ``minimum distance property'' exhibited by the items in $\mathcal{I}$ implies that their masking events are independent. The key observation is that the masked status of any item $i \in [n]$ depends only on the set of items that share at least one test with $i$ (i.e., items with bipartite distance $2$ of $i$). Since any pair of distinct items $j, j' \in \mathcal{I}$ has distance exceeding $4$ by construction, the set of items impacting the masking event for $j$ (i.e., sharing at least one test with $j$) is disjoint from those impacting the masking event for $j'$.   This therefore implies that the masking events for all items in $\mathcal{I}$ are mutually independent, meaning property (ii) also holds for $\mathcal{I}$.

\subsubsection{Step 2 (Lower Bounding $\Pr(L_{s,i})$)} \label{sec:step2_single}

We first introduce some notation. For a given item $i \in [n]$, we define the \emph{co-occurrence matrix} $\bM^{(i)} \in \{0,1\}^{d_i\times(n-1)}$ to be the sub-matrix of $\bX$ containing the rows $t \in [T]$ such that $X_{ti} = 1$, and all columns except for the $i$-th one. Thus, the rows of $\bM^{(i)}$ correspond to the tests $i$ is included in, and the columns denote whether each of the remaining items are included in these tests. Additionally, we define the $t$-th \emph{defective weight} of $\bM^{(i)}$ as
\begin{equation}
    \label{eq:defective_weight} U_t \coloneqq \sum_{j \in [n] \setminus \{i\}} \mathbbm{1}\{M_{tj}^{(i)} = 1 \wedge j \in S\}
\end{equation}
for each $t \in [d_i]$, which gives the number of defective items (excluding $i$) that are in the $t$-th test of $\bM^{(i)}$. In accordance with Definition \ref{def:informative_masked}, an item is therefore masked if and only if $U_t \neq \gamma -1$ for all $t \in [d_i]$.

Fix an item $i \in [n]$ and integer $s \in \{0,\dots,d_i\}$, and recall that the event $L_{s,i}$ occurs if exactly $s$ of the tests that $i$ is included in contain at most $\gamma - 2$ defectives, and the remaining $d_i - s$ tests contain at least $\gamma$ defectives. Using the above definitions, we can equivalently express $L_{s,i}$ as the event that there is at least one partition $(A, A^c)$ of $[d_i]$ with $\lvert A \rvert = s$ such that $U_t \leq \gamma - 2$ for all $t \in A$ and $U_t \geq \gamma$ for all $t \in A^c$.  Letting $\mathcal{A}_{s,i}$ denote the collection of all subsets of $[d_i]$ of size $s$, and defining the events
\begin{equation}
    \label{eq:L_plus_minus} L_{A}^- \coloneqq \bigcap_{t \in A} \{U_t \leq \gamma - 2 \}, \qquad   L_A^+ \coloneqq \bigcap_{t \in A^c} \{U_t \geq \gamma \}
\end{equation}
for each $A \in \mathcal{A}_{s,i}$, we can thus write $\Pr(L_{s,i})$ as
\begin{align}
    \Pr(L_{s,i}) &= \Pr \left( \bigcup_{A \in \mathcal{A}_{s,i}} \{ L_{A}^- \cap L_{A}^+ \}\right) \\
    \label{eq:L_si_new_defn} &= \sum_{A \in \mathcal{A}_{s,i}} \Pr ( L_{A}^- \cap L_{A}^+ ),
\end{align}
where \eqref{eq:L_si_new_defn} follows since the events $\{  L_{A}^- \cap L_{A}^+  \}_{A \in \mathcal{A}_{s,i}}$ are mutually disjoint. Note that the test design is fixed, so the probability is only over the randomness in the defective set (under the i.i.d.~prior). To derive a lower bound on $\Pr(L_{s,i})$, we thus begin by lower bounding $\Pr(L_{A}^- \cap L_{A}^+)$ for each $A \in \mathcal{A}_{s,i}$. Writing $\Pr(L_A^- \cap L_A^+) = \Pr(L_A^-) \Pr(L_A^+ \mid L_A^-)$, we proceed to bound each term separately.

{\bf Lower Bounding $\Pr(L_A^-)$.} We begin by lower bounding $\Pr(L_A^-)$. To do so, we note that the events $\{U_t \leq \gamma - 2\}_{t \in A}$ are \emph{decreasing} with respect to the items' defectivity indicator random variables (see Definition \ref{def:inc_dec_rvs}). Thus, the FKG inequality (Lemma \ref{lem:FKG}) implies that 
\begin{equation}
    \label{eq:FKG_minus} \Pr(L_A^-) = \Pr\left(\bigcap_{t \in A} \{U_t \leq \gamma - 2 \}\right) \geq \prod_{t \in A} \Pr(U_t \leq \gamma - 2).
\end{equation}
Since $S$ is generated according to the i.i.d.~prior (for now), and since test $t$ includes $w_t +1$ items, it follows that $U_t \sim {\rm Bin}(q, w_t)$, where we recall that $q = \frac{k}{n}$. Thus, $\Pr(L_A^-)$ can be lower bounded by
\begin{align}
    \label{eq:L_minus_lb_1} \Pr(L_A^-) &\geq \prod_{t \in A} \Pr ({\rm Bin}(w_t,q) \leq \gamma - 2) \\
    \label{eq:L_minus_lb_2} &= \prod_{t \in A} \sum_{j=0}^{\gamma - 2} {w_t \choose j} q^j (1-q)^{w_t -j}.
\end{align}

{\bf Lower Bounding $\Pr(L_A^+ \mid L_A^-).$} We now turn to deriving a lower bound on $\Pr(L_A^+ \mid L_A^-)$. Doing so directly seems difficult, since the conditioning on $L_A^-$ means that items in each test may no longer be defective independently. The solution is to only consider the defectivity status of the items that \emph{only appear in tests} $t \in A^c$. Clearly, if there are at least $\gamma$ defectives among these items, the same applies for the entire test pool. Moreover, since any such item is only included in tests $t \in A^c$, its defectivity status is independent of $L_A^-$, which only depends on the status of items included in tests $t \in A$. Combining these assertions thus implies that
\begin{equation}
    \label{eq:L_plus_lb_init} \Pr( L_A^+ \mid L_A^- ) = \Pr \left(\bigcap_{t \in A^c} \{ U_t \geq \gamma \} \mid L_A^- \right) \geq \Pr\left(\bigcap_{t \in A^c} \{ \widetilde{U}_{t,A} \geq \gamma\}\right),
\end{equation}
where $\widetilde{U}_{t,A} \leq U_t$ is the number of defective items (excluding $i$) that are in test $t \in A^c$ of $\bM^{(i)}$, and that do not participate in any tests in $A$. We can now use a similar argument as in the previous case to further lower bound \eqref{eq:L_plus_lb_init}. The main difference here is that events $\{\widetilde{U}_{t,A} \geq \gamma \}_{t \in A^c}$ are now \emph{increasing} (Definition \ref{def:inc_dec_rvs}). Thus, the FKG inequality (Lemma \ref{lem:FKG}) can once again be applied to obtain
\begin{equation}
    \label{eq:FKG_plus} \Pr\left(\bigcap_{t \in A^c} \{\widetilde{U}_{t,A} \geq \gamma  \}\right) \geq \prod_{t \in A^c} \Pr(\widetilde{U}_{t,A} \geq \gamma).
\end{equation}
Denote the number of items in test $t \in A^c$ that do not appear in any tests in $A$ by $z_{t,A}$. Analogous arguments to those in previous section then imply that $\widetilde{U}_{t,A} \sim {\rm Bin} (z_{t,A}, q)$, which after combining \eqref{eq:L_plus_lb_init} and \eqref{eq:FKG_plus} gives
\begin{align}
    \label{eq:L_plus_lb_final_1} \Pr(L_A^+ \mid L_A^-) &\geq \prod_{t \in A^c} \Pr({\rm Bin}(z_{t,A}, q) \geq \gamma) \\
    \label{eq:L_plus_lb_final_2} &= \prod_{t \in A^c} \left(1- \sum_{j=0}^{\gamma-1} {z_{t,A} \choose j}q^{j}(1-q)^{z_{t,A} - j} \right).
\end{align}


{\bf Combining Terms.} We  obtain a lower bound on $\Pr(L_A^- \cap L_A^+)$ by combining \eqref{eq:L_minus_lb_2} and \eqref{eq:L_plus_lb_final_2}, which yields
\begin{equation}
    \label{eq:L_plus_minus_lb} \Pr(L_A^- \cap L_A^+) \geq \left[\prod_{t \in A} \sum_{j=0}^{\gamma - 2} {w_t \choose j} q^j (1-q)^{w_t -j} \right] \cdot \left[\prod_{t \in A^c} \left(1- \sum_{j=0}^{\gamma-1} {z_{t,A} \choose j}q^{j}(1-q)^{z_{t,A} - j} \right)\right].
\end{equation}
{\bf Summing Over $A \in \mathcal{A}_{s,i}$.} With this lower bound in place, we are now in a position to lower bound $\Pr(L_{s,i})$. Taking the sum over all $A \in \mathcal{A}_{s,i}$, we obtain
\begin{equation}
    \label{eq:Pr_L_si_lb_init} \Pr(L_{s,i}) \geq \sum_{A \in \mathcal{A}_{s,i}} \left[\prod_{t \in A} \sum_{j=0}^{\gamma - 2} {w_t \choose j} q^j (1-q)^{w_t -j} \right] \cdot \left[\prod_{t \in A^c} \left(1- \sum_{j=0}^{\gamma-1} {z_{t,A} \choose j}q^{j}(1-q)^{z_{t,A} - j} \right)\right],
\end{equation}
which can equivalently be written as\footnote{If $\gamma = 1$ then the first term should be omitted in order to avoid taking the logarithm of zero.  The subsequent equations from \eqref{eq:final_L_si_lb} onward still hold for $\gamma = 1$ by following the analysis with the first term omitted (or more simply relying on the standard GT analysis in \cite{coja2020optimal}). \label{foot:r1_note}}
\begin{align}
    \Pr(L_{s,i}) \geq \lvert \mathcal{A}_{s,i}\rvert\frac{1}{\lvert \mathcal{A}_{s,i}\rvert}\sum_{A \in \mathcal{A}_{s,i}} &\exp \Bigg[\sum_{t \in A} \log\Bigg(\sum_{j=0}^{\gamma - 2} {w_t \choose j} q^j (1-q)^{w_t -j}\Bigg) \nonumber \\
    \label{eq:equiv_Pr_L_si} &\qquad + \sum_{t \in A^c} \log \Bigg(1- \sum_{j=0}^{\gamma-1} {z_{t,A} \choose j}q^{j}(1-q)^{z_{t,A} - j}\Bigg)\Bigg].
\end{align}
Using the convexity of the exponential function and Jensen's inequality, this can be further lower bounded by
\begin{align}
    \Pr(L_{s,i}) \geq \lvert \mathcal{A}_{s,i} \rvert &\exp\Bigg[\frac{1}{\lvert \mathcal{A}_{s,i}\rvert} \sum_{A \in \mathcal{A}_{s,i}} \sum_{t \in A} \log\Bigg(\sum_{j=0}^{\gamma - 2} {w_t \choose j} q^j (1-q)^{w_t -j}\Bigg) \nonumber \\
    \label{eq:L_si_jensen}&\qquad + \frac{1}{\lvert \mathcal{A}_{s,i}\rvert}\sum_{A \in \mathcal{A}_{s,i}} \sum_{t \in A^c} \log \Bigg(1- \sum_{j=0}^{\gamma-1} {z_{t,A} \choose j}q^{j}(1-q)^{z_{t,A} - j}\Bigg)\Bigg].
\end{align}
To further lower bound \eqref{eq:L_si_jensen}, we define $z_t \coloneqq w_t - c(d_{\max}-1)$, and observe the following:
\begin{itemize}
    \item By Assumptions \ref{asm:regularity} and \ref{asm:sparsity}, each pair of tests shares at most $c$ items. Consequently, there are at least $w_t - c(d_i-1) \geq w_t - c(d_{\max} - 1) \eqqcolon z_t$ items (excluding $i$) in $t \in A^c$ that do not appear in any of the remaining $d_i-1$ tests $i$ is included in, which implies the same lower bound for $z_{t,A}$ for all $A \in \mathcal{A}_{s,i}$ and $s \in \{0,\dots, d_i\}$.
    \item The inner sum in the second line of \eqref{eq:L_si_jensen} equals $\Pr({\rm Bin}(z_{t,A}, q) \leq \gamma-1)$. This can therefore be upper bounded by $\Pr({\rm Bin}(z_t, q) \leq \gamma - 1)$, since the fact that $z_{t} \leq z_{t,A}$ implies that ${\rm Bin}(z_t, q) \sd {\rm Bin}(z_{t,A},q)$ (recall that $\sd$ means ``stochastically dominated by'').
\end{itemize}
Combining these assertions implies that \eqref{eq:L_si_jensen} can be lower bounded by
\begin{align}
     \Pr(L_{s,i}) \geq \lvert \mathcal{A}_{s,i} \rvert &\exp\Bigg[\frac{1}{\lvert \mathcal{A}_{s,i}\rvert} \sum_{A \in \mathcal{A}_{s,i}} \sum_{t \in A} \log\Bigg(\sum_{j=0}^{\gamma - 2} {w_t \choose j} q^j (1-q)^{w_t -j}\Bigg) \nonumber \\
    \label{eq:L_si_zt}&\qquad + \frac{1}{\lvert \mathcal{A}_{s,i}\rvert}\sum_{A \in \mathcal{A}_{s,i}} \sum_{t \in A^c} \log \Bigg(1- \sum_{j=0}^{\gamma-1} {z_{t} \choose j}q^{j}(1-q)^{z_{t} - j}\Bigg)\Bigg].
\end{align}

Now, to simplify each of the above double sums, we use the following counting argument. For the first double sum, fix a test $t \in [d_i]$, and consider the number of sets $A \in \mathcal{A}_{s,i}$ such that $t \in A$. Since all sets in $\mathcal{A}_{s,i}$ are of size $s$, and since $t \in A$ by assumption, there are ${d_i - 1 \choose s - 1}$ ways of choosing the remaining items. For the second double sum, a similar argument reveals that there are ${d_i -1 \choose s}$ sets $A \in \mathcal{A}_{s,i}$ such that $t \in A^c$ for each $t \in [d_i]$. Using this, we can therefore express \eqref{eq:L_si_zt} as
\begin{align}
    \Pr(L_{s,i}) &\geq \lvert \mathcal{A}_{s,i} \rvert \exp\Bigg[\frac{1}{\lvert \mathcal{A}_{s,i}\rvert} {d_i - 1 \choose s - 1} \sum_{t \in [d_i]} \log\Bigg(\sum_{j=0}^{\gamma - 2} {w_t \choose j} q^j (1-q)^{w_t -j}\Bigg) \nonumber \\
    \label{eq:L_si_double_count_1}&\qquad \qquad \qquad + \frac{1}{\lvert \mathcal{A}_{s,i}\rvert}{d_i -1 \choose s} \sum_{t \in [d_i]} \log \Bigg(1- \sum_{j=0}^{\gamma-1} {z_t \choose j}q^{j}(1-q)^{z_t - j}\Bigg)\Bigg] \\
     \label{eq:L_si_double_count_2} &=  {d_i \choose s} \exp \Bigg[\frac{s}{d_i} \sum_{t \in [d_i]} \log\Bigg(\sum_{j=0}^{\gamma - 2} {w_t \choose j} q^j (1-q)^{w_t -j}\Bigg) \nonumber \\
     &\hspace{105pt}+ \frac{d_i - s}{d_i} \sum_{t \in [d_i]}\log \Bigg(1- \sum_{j=0}^{\gamma-1} {z_t \choose j}q^{j}(1-q)^{z_t - j}\Bigg)\Bigg],
\end{align}
where \eqref{eq:L_si_double_count_2} uses $\lvert \mathcal{A}_{s,i}\rvert = {d_i \choose s}$. Finally, basic algebraic manipulations give a final lower bound of
\begin{equation}
    \label{eq:final_L_si_lb} \Pr(L_{s,i}) \geq {d_i \choose s} \left[\prod_{t \in [d_i]} \sum_{j=0}^{\gamma-2}{w_t \choose j} q^j (1-q)^{w_t -j}\right]^{\frac{s}{d_i}}  \left[\prod_{t \in [d_i]} \left(1 - \sum_{j=0}^{\gamma-1} {z_t \choose j}q^{j}(1-q)^{z_t - j} \right)\right]^{\frac{d_i - s}{d_i}}\hspace{-15pt}.
\end{equation}

\subsubsection{Step 3 (Lower Bounding $\Pr(D_i)$)} \label{sec:step3_single}
We are now in a position to lower bound $\Pr(D_i)$ (i.e., the probability that $i$ is masked). Recalling Definition \ref{def:informative_masked}, an item $i \in [n]$ is masked if there are no tests that $i$ is included in that contain exactly $\gamma -1$ defectives (excluding $i$). It follows that $i$ is masked if and only if $L_{s,i}$ occurs for at least one $s \in \{0,\dots, d_i\}$. Using this and the fact that the events $\{L_{s,i}\}_{s=0}^{d_i}$ are mutually disjoint, $\Pr(D_i)$ can be written as $\Pr(D_i) = \sum_{s=0}^{d_i} \Pr(L_{s,i})$. Substituting in the lower bound on $\Pr(L_{s,i})$ for each $s \in \{0,\dots,d_i\}$ and using the binomial formula, this can be lower bounded by
\begin{align}
    \label{eq:Pr_Di_lb_1} \Pr(D_i) &\geq \sum_{s=0}^{d_i} {d_i \choose s} \left[\prod_{t \in [d_i]} \sum_{j=0}^{\gamma-2}{w_t \choose j} q^j (1-q)^{w_t -j}\right]^{\frac{s}{d_i}}  \left[\prod_{t \in [d_i]} \left(1 - \sum_{j=0}^{\gamma-1} {z_t \choose j}q^{j}(1-q)^{z_t - j} \right)\right]^{\frac{d_i - s}{d_i}} \\
    \label{eq:Pr_Di_lb_2} &= \left[ \left(\prod_{t \in [d_i]} \sum_{j=0}^{\gamma-2}{w_t \choose j} q^j (1-q)^{w_t -j}\right)^{\frac{1}{d_i}}  + \left(\prod_{t \in [d_i]} \left(1 - \sum_{j=0}^{\gamma-1} {z_t \choose j}q^{j}(1-q)^{z_t - j} \right)\right)^{\frac{1}{d_i}}\right]^{d_i}\hspace{-5pt}.
\end{align}
To further lower bound this, we focus our attention on each of the sums in \eqref{eq:Pr_Di_lb_2}. Namely, we wish to lower bound the summation in the first term, and upper bound the summation in the second term, for each $t \in [d_i]$. We begin with the former. To do so, note that this sum equals $\Pr({\rm Bin}(w_t, q) \leq \gamma -2)$, which implies that it is lower bounded by $\Pr({\rm Bin}(w_{\max}, q )\leq \gamma -2 )$, since ${\rm Bin}(w_t, q) \sd {\rm Bin}(w_{\max}, q)$ for all $t \in [d_i]$. Thus, it follows that
\begin{equation}
    \label{eq:first_sum_lb} \sum_{j=0}^{\gamma - 2}{w_t \choose j}q^j(1-q)^{w_t - j} \geq \sum_{j=0}^{\gamma - 2} {w_{\max} \choose j}q^j(1 -q)^{w_{\max} - j}
\end{equation}
for all $t \in [d_i]$. Using an analogous argument for the second sum in \eqref{eq:Pr_Di_lb_2} gives an upper bound of
\begin{equation}
    \label{eq:second_sum_lb} \sum_{j=0}^{\gamma-1} {z_t \choose j}q^{j}(1-q)^{z_t - j} \leq \sum_{j=0}^{\gamma-1} {z_{\min} \choose j}q^{j}(1-q)^{z_{\min} - j},
\end{equation}
for all $t \in [d_i]$ where $z_{\min} := w_{\min} - c(d_{\max}-1)$ (with $w_{\min} \coloneqq \min_{t \in [T]}w_t$) is a uniform lower bound on $z_t$. Substituting these bounds back into \eqref{eq:Pr_Di_lb_2} and simplifying, alongside using $d_i \leq d_{\max}$, we obtain
\begin{equation}
    \label{eq:Pr_Di_lb_3} \Pr(D_i) \geq \left[\sum_{j=0}^{\gamma-2}{w_{\max} \choose j}q^j(1 -q)^{w_{\max} - j} + 1 - \sum_{j=0}^{\gamma-1} {z_{\min} \choose j}q^{j}(1-q)^{z_{\min}- j}\right]^{d_{\max}}\hspace{-15pt}.
\end{equation}
To obtain our final bound on $\Pr(D_i)$ for all $i \in [n]$, we derive an upper bound on $d_{\max}$ in terms of $T$. A double counting argument reveals that $\sum_{i=1}^n d_i = \sum_{t=1}^T (w_t+1)$, which in turn implies that $n d_{\min} \leq T (w_{\max}+1)$, with $d_{\min} \coloneqq \min_{i \in [n]}d_i$. Using Assumption \ref{asm:regularity}, we also have the bound $n d_{\min} \geq \frac{n d_{\max}}{\zeta}$. Combining these, we obtain a final lower bound for $\Pr(D_i)$ of
\begin{equation}
    \label{eq:Pr_Di_lb_final} \Pr(D_i) \geq \left[\sum_{j=0}^{\gamma-2}{w_{\max} \choose j}q^j(1 -q)^{w_{\max} - j} + 1 - \sum_{j=0}^{\gamma-1} {z_{\min} \choose j}q^{j}(1-q)^{z_{\min}- j}\right]^{\frac{\zeta T(w_{\max}+1)}{n}}\hspace{-25pt}.
\end{equation}

\subsubsection{Step 4 (Lower Bounding the Number of Masked Items, i.i.d.~Prior)} \label{sec:step4_single}
Given the set $\mathcal{I}$ and the lower bound on $\Pr(D_i)$ for all $i \in [n]$, we are now in a position to derive a high-probability lower bound on the number of masked items, which we denote by $M$. Recall that by construction, $\mathcal{I}$ has size at least $n^{1-3\xi}$, and the items in $\mathcal{I}$ have mutually independent masking events. Using this and the lower bound on $\Pr(D_i)$, it follows that ${\rm Bin}(n^{1-3\xi}, \Xi^{\frac{\zeta T (w_{\max}+1)}{n}}) \sd M$, where $\Xi \in [0,1]$ is the term inside the exponent in \eqref{eq:Pr_Di_lb_final}. The expected value of the binomial random variable is given by
\begin{equation}
    \label{eq:avg_num_binom} n^{1-3\xi} \Xi^{\frac{\zeta T (w_{\max}+1)}{n}} = n^{1-3\xi} \exp\left( \frac{\zeta T (w_{\max}+1)}{n} \log (\Xi)\right),
\end{equation}
which is at least $n^{2\xi}$ whenever 
\begin{equation}
    \label{eq:n_2xi_avg_T} T \leq \frac{n \log n}{-\zeta (w_{\max}+1) \log (\Xi)}(1 -  5\xi).
\end{equation}
Substituting the definition of $\Xi$ and writing $n = k^{\frac{1}{1-\xi}}$, this condition becomes
\begin{equation}
    \label{eq:Lambda_def_avg_T} T \leq \frac{n\log k}{\displaystyle-\zeta (w_{\max}+1) \log \left(\sum_{j=0}^{\gamma-2}{w_{\max} \choose j}q^j(1 -q)^{w_{\max} - j} + 1 - \sum_{j=0}^{\gamma-1} {z_{\min} \choose j}q^{j}(1-q)^{z_{\min}- j}\right)}(1-\varepsilon),
\end{equation}
where $\varepsilon > 0$ is an arbitrarily small constant (depending on arbitrarily small $\xi > 0$). To further simplify this, recall that $\frac{c \cdot d_{\max}}{w_{\min}} = o(1)$ via Assumption \ref{asm:sparsity}, which in turn implies that $z_{\min} = w_{\min}(1 - o(1))$. Additionally, we have due to Assumption \ref{asm:regularity} that $w_{\min}=\frac{\Delta n}{k}(1-o(1))$, $w_{\max}+1 = \frac{\Delta n}{k}(1+o(1))$ and $\zeta = 1+o(1)$, which combined with the fact that $q = \frac{k}{n}$ means we can apply a Poisson approximation (Lemma \ref{lem:poisson_approx}) to obtain
\begin{align}
    \label{eq:simplified_avg_T_1} T &\leq \frac{k \log k}{\displaystyle- \Delta\log \left(\sum_{j=0}^{\gamma - 2}\frac{\Delta^j e^{-\Delta}}{j!} + 1 - \sum_{j=0}^{\gamma - 1} \frac{\Delta^je^{-\Delta}}{j!}\right)}(1-\varepsilon)(1-o(1)) \\
    \label{eq:simplified_avg_T_2} &= \frac{k \log k}{\displaystyle- \Delta \log\left(1 - \frac{\Delta^{\gamma - 1}e^{-\Delta}}{(\gamma - 1)!}\right)}(1-\varepsilon)(1-o(1)),
\end{align}
since the sums in the denominator of \eqref{eq:Lambda_def_avg_T} each correspond to a binomial CDF. Absorbing the $1-o(1)$ term into $\varepsilon$ (when $n$ is large enough) and minimizing over $\Delta$, it follows that \eqref{eq:avg_num_binom} is at least $n^{2\xi}$ for \emph{all} $\Delta > 0$ as long as
\begin{equation}
    \label{eq:optimised_avg_T} T \leq \min_{\Delta > 0} \frac{k \log k}{\displaystyle- \Delta\log\left(1 - \frac{\Delta^{\gamma - 1}e^{-\Delta}}{(\gamma - 1)!}\right)}(1-\varepsilon).
\end{equation}

If \eqref{eq:optimised_avg_T} holds, then the Chernoff bound (Lemma \ref{lem:chernoff}) alongside the fact that $ {\rm Bin}(n^{1-3\xi}, \Xi^{\frac{\zeta T (w_{\max}+1)}{n}}) \sd M$ gives
\begin{equation}
    \label{eq:total_masked_lb_whp}\Pr\left(M \leq \frac{1}{2}n^{2\xi}\right) \leq  \Pr\left({\rm Bin}(n^{1-3\xi}, \Xi^{\frac{\zeta T (w_{\max}+1)}{n}}) \leq \frac{1}{2}n^{2\xi}\right) \leq \exp(-\Omega(n^{2\xi})).
\end{equation}

Conditioned on $M \geq \frac{1}{2}n^{2\xi}$, we now seek to obtain lower bounds on $M_0$ and $M_1$, the number of \emph{masked non-defective} and \emph{masked defective} items respectively. We begin with $M_0$. Since a given masked item's defectivity status is conditionally independent of the test outcomes (given the masking event), its posterior probability of being defective matches its prior (i.e., it remains defective with probability $q = \frac{k}{n}$), which can easily be seen via an application of Bayes' theorem (and is used in \cite{coja2020optimal,bay2022optimal}). Moreover, the independent masking events of each $i \in \mathcal{I}$ imply that each of the masked items in $\mathcal{I}$ are defective independently of one another. Thus, we obtain ${\rm Bin}(\frac{1}{2}n^{2\xi}, 1-q) \sd (M_0 \mid M \geq \frac{1}{2}n^{2\xi})$, which in turn implies via the Chernoff bound that
\begin{equation}
    \label{eq:M_0_lb} \Pr\left(M_0 \leq \frac{1}{4}n^{2\xi} \mid M \geq \frac{1}{2}n^{2\xi}\right) \leq \Pr\left({\rm Bin}\Big(\frac{1}{2}n^{2\xi}, 1 - q\Big) \leq \frac{1}{4}n^{2\xi}\right) \leq \exp(-\Omega(n^{2\xi})).
\end{equation}
Using a similar argument for $M_1$, the probability of there being no masked defectives can be upper bounded by
\begin{equation}
    \label{eq:Pr_no_masked_defs} \Pr\left(M_1 = 0 \mid M \geq \frac{1}{2}n^{2\xi}\right) \leq \left(1 - \frac{k}{n}\right)^{\frac{1}{2}n^{2\xi}} = \left(1 -\frac{1}{n^{\xi}}\right)^{\frac{1}{2}n^{2\xi}} \leq e^{-\frac{1}{2}n^{\xi}}.
\end{equation}
 Defining $\mathcal{E} = \{M_0 \leq \frac{1}{4}n^{2\xi}\} \cup \{M_1 = 0\}$, we can use the law of total probability and \eqref{eq:total_masked_lb_whp}, alongside a union bound, to obtain
\begin{align}
    \label{eq:Pr_E_lb_1} \Pr(\mathcal{E}) &\leq \Pr\left(\mathcal{E} \mid M \geq \frac{1}{2}n^{2\xi} \right) + \exp(-\Omega(n^{2\xi})) \\
    \label{eq:Pr_E_lb_2} &\leq \Pr\left(M_0 \leq \frac{1}{4}n^{2\xi} \mid M \geq \frac{1}{2}n^{2\xi}\right) + \Pr\left(M_1 = 0 \mid M \geq \frac{1}{2}n^{2\xi}\right) + \exp(-\Omega(n^{2\xi})) \\
    \label{eq:Pr_E_lb_3} &= \exp(-\Omega(n^\xi)).
\end{align} 

\subsubsection{Step 5 (Transferring from i.i.d.~Prior to Combinatorial Prior)} \label{sec:step5_single}
In the previous step, we showed that under the i.i.d.~prior, there were at least $\frac{1}{4}n^{2\xi}$ masked non-defectives and at least one masked defective with probability $1 - \exp(-\Omega(n^\xi))$. To obtain a similar guarantee under the combinatorial prior, we use the fact that for any event $\mathcal{F}$,
\begin{align}
    \label{eq:iid_comb_relation_1} \Pr\nolimits_{\rm c}(\mathcal{F}) &= \Pr\nolimits_{{{\rm i.i.d.}}}(\mathcal{F} \mid \{\lvert S \rvert = k\}) \\
    \label{eq:iid_comb_relation_2} &\leq \frac{\Pr_{\rm i.i.d.}(\mathcal{F})}{\Pr_{\rm i.i.d.}(\lvert S \rvert = k)},
\end{align}
where $\Pr_{\rm c}$ and $\Pr_{\rm i.i.d.}$ are the probability measures under the combinatorial and i.i.d.~priors respectively. Applying this with the event $\mathcal{E}$ and using its aforementioned upper bound under the i.i.d.~prior, its probability under the combinatorial prior is upper bounded by
\begin{equation}
    \label{eq:Pr_E_comb_ub_init} \Pr\nolimits_{\rm c}(\mathcal{E}) \leq \frac{\exp(-\Omega(n^\xi))}{\Pr_{\rm i.i.d.}(\lvert S \rvert = k)},
\end{equation}
which means that showing $\Pr_{\rm c}(\mathcal{E}) = o(1)$ amounts to showing that $\Pr_{\rm i.i.d.}(\lvert S \rvert = k)$ is ``not too small''. Under the i.i.d.~prior with $q=\frac{k}{n}$, this probability is given by
\begin{equation}
    \label{eq:Pr_iid_S_eq_k} \Pr\nolimits_{\rm i.i.d.}(\lvert S \rvert = k) = {n \choose k}\left(\frac{k}{n}\right)^{k}\left(1 - \frac{k}{n}\right)^{n-k} = {n \choose k} \exp(-nH(k/n)),
\end{equation}
where $H(\cdot)$ denotes the binary entropy function measured in nats. Using Lemma \ref{lem:binom_bounds}, this can be lower bounded by
\begin{align}
    \label{eq:S_eq_k_lb_1} \Pr\nolimits_{\rm i.i.d.}(\lvert S \rvert = k) &\geq \frac{\exp(n H(k/n))}{2\sqrt{2k(1-k/n)}} \exp(-nH(k/n)) \\
    \label{eq:S_eq_k_lb_2} &\geq \frac{1}{2 \sqrt{2k}}.
\end{align}
Thus, $\Pr_{\rm c}(\mathcal{E})$ can be upper bounded by
\begin{equation}
    \label{eq:Pr_E_comb_ub_final} \Pr\nolimits_{\rm c}(\mathcal{E}) \leq 2\sqrt{2k} \exp(-\Omega(n^\xi)) = o(1),
\end{equation}
which gives the desired result.

\subsubsection{Step 6 (Lower Bounding the Error Probability, High-$\theta$ Regime)} \label{sec:step6_single}
Having established that $\mathcal{E}^c$ holds with probability $1-o(1)$ under the combinatorial prior, we are now able to derive a lower bound on the error probability when $k = n^{1-\xi}$. To do so, we first introduce the notion of \emph{satisfying sets:}
\begin{definition}
    \label{def:satisfying} Fix a test design $\bX$ and its induced test outcome vector $\by$. A set $s \subset [n]$ of size $\lvert s \rvert = k$ is called \emph{satisfying} with respect to $(\bX, \by)$ if the observed test outcomes $\by$ match the test outcomes that would be generated by $\bX$ if $s$ were the true defective set.
\end{definition}

Before proceeding, we state some known facts:
\begin{itemize}
    \item Denoting $\mathcal{V}(\bX, \by)$ as the collection of all satisfying sets of size $k$, the posterior distribution of $S$ given $(\bX, \by)$ is uniform over $\mathcal{V}(\bX, \by)$ (e.g.,  see \cite[Fact 3.1]{coja2020optimal}).\footnote{While this is stated in \cite{coja2020optimal} for the classical group testing problem, the exact same argument applies for our setting, so we omit the details.}
    \item The \emph{maximum a posteriori} (MAP) decoder is optimal among all decoders; in view of the first dot point, it returns an arbitrary (or possibly randomly-chosen) element of $\mathcal{V}(\bX, \by)$, thus being correct with conditional probability $\frac{1}{|\mathcal{V}(\bX, \by)|}$ (given $\mathcal{V}(\bX, \by)$).
    \item It was also shown in \cite{coja2020optimal} that $\lvert \mathcal{V}(\bX, \by)\rvert \geq M_0 \cdot M_1$; this is because the true defective set $S$ is satisfying by definition, and a new satisfying set $S'$ can be formed by swapping out a single masked defective item with a masked non-defective item. 
    Since there are $M_0 \cdot M_1$ ways of performing such a swap, the claim follows.
\end{itemize}
Combining these observations, the success probability of the MAP decoder (and thus any decoder $\hat{S}$) is upper bounded by
\begin{equation}
    \label{eq:succ_prob_ub} \Pr(\hat{S} = S \mid (M_0,M_1) = (m_0,m_1)) \leq \frac{1}{\lvert \mathcal{V}(\bX, \by)\rvert } \leq \frac{1}{m_0 \cdot m_1}.
\end{equation}
Conditioned on $\mathcal{E}^c$, we have $M_1 \geq 1$ and $M_0 \geq \frac{1}{4}n^{2\xi}$ by definition. Combining this with law of total probability and the fact that $\Pr(\mathcal{E}) = o(1)$, the success probability of any decoder is upper bounded by
\begin{align}
    \label{eq:succ_prob_E_1} \Pr(\hat{S} = S) &= \Pr(\{\hat{S} = S\} \cap \mathcal{E}) + \Pr(\{\hat{S} = S\} \cap \mathcal{E}^c) \\
    \label{eq:succ_prob_E_2} &\leq  o(1) +\frac{4}{n^{2\xi}} \\
    \label{eq:succ_prob_E_3} &= o(1).
\end{align}
Thus, any decoder fails with probability approaching $1$ in the ``dense'' regime $k = n^{1-\xi}$ whenever $T$ satisfies \eqref{eq:optimised_avg_T}.

\subsubsection{Step 7 (Transferring from High $\theta$ to Low $\theta$)} \label{sec:step7_single}
In the previous step, we showed that under the combinatorial prior, any decoder fails with probability $1-o(1)$ when $k = n^{1-\xi}$ (i.e., when $\theta$ is arbitrarily close to one). To conclude the proof of Theorem \ref{thm: sing}, it remains to transfer this converse result from $\theta$ arbitrarily close to one to any $\theta \in (0,1)$. In what follows, we will denote quantities in the ``dense'' regime with a prime superscript (e.g., $n', \theta',\bX'$), and quantities in the ``sparse'' regime without this (e.g., $n, \theta, \bX$). Quantities that remain the same in both regimes (e.g., $k$) will retain their original notation. Thus, the test design $\bX$ in the previous steps is now written as $\bX'$, and the relationship between number of defectives $k$ and the population size in the dense regime is $k = (n')^{\theta'} = (n')^{1-\xi}$. To establish the reduction from the denser regime to the sparser regime, we follow the argument of \cite{coja2020optimal} (see also \cite[App. C]{mcmorrow2026optimal}), who proved the result via its contrapositive: an achievability result for a given $\theta$ implies the same for any larger $\theta' > \theta$, provided the two problems share the same $k$ but have different population sizes, and the number of tests scales as $T = c k \log k$ for some $c > 0$. This can be done by starting with a test design $\bX \in \{0,1\}^{T\times n}$ for the sparse regime with $n$ items and $k = \Theta(n^\theta)$ defectives, and constructing a test design $\bX' \in \{0,1\}^{T\times n'}$ for the dense regime with $n' \ll n$ items and $k = (n')^{\theta'}$ defectives for $\theta' > \theta$ as follows:
\begin{enumerate}
    \item Choose a subset $W \subset [n]$ of size $\lvert W \rvert = n'$ uniformly at random from the ${n \choose n'}$ possibilities, and declare the items in $[n] \setminus W$ to be (dummy) non-defectives.
    \item Map the items in $[n']$ to $W$ in an arbitrary manner (e.g., in numerical order).
    \item Construct $\bX' \in \{0,1\}^{T \times n'}$ by taking it as the sub-matrix of $\bX$ with the columns indexed by $W$.
    \item Perform the tests according to $\bX'$ and observe the induced test outcome result vector $\by'$.
\end{enumerate}
Observe that since the items in $[n] \setminus W$ are non-defective by construction, the test outcome vectors $\by$ and $\by'$ generated by $\bX$ and $\bX'$ respectively are identical. Additionally, the symmetry of the construction of $\bX'$ implies that if there are $k$ defectives in $[n']$ uniformly at random, the same applies to $[n]$. Thus, any decoder that achieves $\Pr(\hat{S} \neq S) \to 0$ using the test design $\bX$ also does so using $\bX'$, which gives the desired result. 

To conclude the proof, we show that constructing $\bX'$ in the manner described above leads to Assumptions \ref{asm:regularity} and \ref{asm:sparsity} still holding with probability $1-o(1)$ for $\theta'$ sufficiently close to $1$ (assuming $\bX$ satisfies them), as the previous steps relied on these assumptions. Since $\bX'$ is formed by taking a subset of the columns of $\bX$, the values of $d_i$ in the columns in $\bX'$ are equivalent to their corresponding columns in $\bX$, meaning that if the values of $d_i$ in $\bX$ satisfy Assumption \ref{asm:regularity}, the same applies for the corresponding values of $d_i$ in $\bX'$. Thus, to show that both assumptions still hold, we need to show (with high probability) that the test weights and pairwise test overlaps in $\bX'$ satisfy the conditions in Assumptions 
\ref{asm:regularity}-\ref{asm:sparsity}, which can be done via the following concentration arguments:

{\bf Concentration of Test Weights.} Fix a test $t \in \{1,\dots,T\}$ with test weight $w_t+1 = \frac{\Delta_t n}{k}$, where $\Delta_t \in [\Delta - o(1), \Delta + o(1)]$ due to Assumption \ref{asm:regularity}. Since $\bX'$ is formed by choosing $n'$ columns uniformly at random from $\bX$, the (random) test weight of the $t$-th test in $\bX'$ is distributed as $W_t' \sim {\rm HypGeo}(n, \frac{\Delta_t n}{k}, n')$, with mean $w_t' \coloneqq \mathbb{E}[W_t'] = \frac{\Delta_t n'}{k}$ (for notational convenience, we denote the test weight in the reduced test matrix as $W_t'$ rather than $W_t' + 1$ here; the shift by 1 is a lower-order asymptotic term). To obtain a high-probability bound on $W_t$, we use the fact that the Chernoff bound (Lemma \ref{lem:chernoff}) also holds for hypergeometric random variables to obtain
\begin{equation}
    \label{eq:w_t'_conc} \Pr(\lvert W_t' - w_t'\rvert \geq \delta w_t') \leq 2\exp\left(-\frac{\delta^2 w_t'}{3}\right)
\end{equation}
for any $\delta \in (0,1)$. Since $w_t' = n^{\Omega(1)}$, the choice $\delta = \frac{1}{\log n}$ ensures that $W_t' \notin (w_t'(1- o(1)), w_t'(1+o(1)))$ with probability $\exp(-n^{\Omega(1)})$. Since this holds for any $t \in [T]$, a union bound over all $T = O(k \log n)$ tests implies that the probability that $W_t' \notin (w_t'(1-o(1)), w_t'(1+o(1))$ for \emph{any} $t \in [T]$ is also $\exp(-n^{\Omega(1)})$. Substituting in the value of $w_t'$, we obtain
\begin{equation}
    \label{eq:W_t'_range_whp} W'_t \in \left(\frac{\Delta_t n'}{k}(1-o(1)), \frac{\Delta_t n'}{k}(1+o(1))\right), \ \forall t \in [T]
\end{equation}
with probability $1-o(1)$. Finally, the fact that $\Delta_t \in [\Delta  - o(1), \Delta + o(1)]$ for all $t$ implies that
\begin{equation}
    \label{eq:w_t'_min_max} \min_{t \in [T]} W_t'= \frac{\Delta n'}{k}(1-o(1)), \quad \max_{t \in [T]} W_t' = \frac{\Delta n'}{k}(1+o(1))
\end{equation}
with probability $1-o(1)$, meaning that Assumption \ref{asm:regularity} is satisfied for $\bX'$.

{\bf Concentration of Test Overlaps.} Fix a pair of tests $(i,j) \in [T]^2$, and denote by $c_{ij}$ the number of items appearing in both tests in $\bX$. By a similar argument to the previous case, the (random) number of items appearing in both tests in $\bX'$ is distributed as $C_{ij}' \sim {\rm HypGeo}(n, c_{ij}, n')$. Since Assumption \ref{asm:sparsity} gives $c_{ij} \leq c$ for all $(i,j) \in [T]^2$, it follows that $C_{ij}' \sd {\rm HypGeo}(n, c, n')$.  Recalling the assumption $c = o\big( \frac{n}{k \log n} \big)$, it follows that $\mu_{ij} := \mathbb{E}[C'_{ij}] \le \frac{c n'}{n} = o\big( \frac{n'}{k \log n} \big)$.  Since $k=(n')^{1-\xi}$ and $\log n' = \Theta(\log n)$, it follows that the quantity
\begin{equation}
    b_n:=\frac{n'}{k\log n'}=\frac{(n')^\xi}{\log n'}
\end{equation}
grows polynomially in \(n'\). We fix an arbitrary sequence \(a_n\) satisfying both $a_n\gg \log n$ and
\begin{equation}
    \max_{i,j}\mu_{ij}\ll a_n\ll b_n.
\end{equation}
Then, applying the Chernoff bound for the hypergeometric distribution, this time in the strengthened form
\begin{equation}
    \Pr(C'_{ij}\ge a_n)
    \le \left(\frac{e\mu_{ij}}{a_n}\right)^{a_n},
\end{equation} 
we obtain
\begin{equation}
    \Pr(C'_{ij}\ge a_n)\le \exp(-\omega(\log n)).
\end{equation}
Taking a union bound over \(O(T^2)=O(k^2\log^2 n)\) pairs of tests then gives the following for all $(i,j)$ with probability $1-o(1)$:
\begin{equation}
    \max_{i\neq j} C'_{ij}\le a_n = o\Big(\frac{n'}{k\log n'}\Big).
\end{equation}
Thus, Assumption \ref{asm:sparsity} is also satisfied for $\bX'$.

\subsubsection{Wrapping Up}
Combining the findings from Steps \hyperref[sec:step1_single]{1}-\hyperref[sec:step7_single]{7}, we can conclude that under the combinatorial prior, the error probability tends to $1$ for any decoder $\hat{S}$, sparsity parameter $\theta$, and test design $\bX$ satisfying Assumptions \ref{asm:regularity}-\ref{asm:sparsity}, as long as \eqref{eq:optimised_avg_T} holds. Thus, to ensure that $\Pr(\hat{S} \neq S) \nrightarrow 1$, we require that
\begin{equation}
    \label{eq:required_T_single} T \geq \min_{\Delta > 0} \frac{k \log k}{-\displaystyle\Delta \log \left(1 - \frac{\Delta^{\gamma - 1}e^{-\Delta}}{(\gamma - 1)!}\right)}(1-o(1)).
\end{equation}
This completes the proof of Theorem \ref{thm: sing}.

\subsection{Proof of Theorem~\ref{thm: multi} (Converse for the Selectable-Threshold Setting)} \label{app:pf_conv_multi}

\subsubsection{Outline of Modifications} \label{sec:modifications}
When handling the more general GT-ST problem, we can broadly use the same proof template, with many of the steps being exactly the same as in the single-threshold setting. In particular, we follow a similar 7-step roadmap to the one detailed in Section \ref{sec:roadmap}. Of these steps, we outline those that proceed in the same manner as before, and those that require modification:
\begin{itemize}
    \item Steps \hyperref[enum:step1_roadmap]{1} and \hyperref[enum:step5_roadmap]{5}-\hyperref[enum:step7_roadmap]{7} proceed identically as in the single-threshold converse, both in their description and analysis. We therefore omit the details here and refer the reader to Section \ref{sec:roadmap} for their descriptions, and Sections \ref{sec:step1_single} and \ref{sec:step5_single}-\ref{sec:step7_single} for their analyses.
    \item The description of Step \hyperref[enum:step4_roadmap]{4} is the same in both cases; however, the subsequent analyses differ. Thus, we include a detailed analysis of this step for the selectable-threshold case in Section \ref{sec:step4_multi}, while not repeating its description.
    \item Adapting Steps \hyperref[enum:step2_roadmap]{2} and \hyperref[enum:step3_roadmap]{3} to the selectable-threshold setting requires modifications of both their descriptions and subsequent analyses.
\end{itemize}
We now outline these modifications made to Steps \hyperref[enum:step2_roadmap]{2} and \hyperref[enum:step3_roadmap]{3}:
\begin{itemize}
    \item[2.] Fix $i \in \mathcal{I}$ and $\mathbf{s} = (s_{1}, \dots, s_{\ell})\in \mathcal{S}_i \coloneqq \{0,\dots, d_{i,1}\} \times \cdots \times \{0,\dots, d_{i,\ell}\}$, where for each $r \in [\ell]$, $d_{i,r}$ is the number of tests with threshold $\gamma_r$ that item $i$ appears in. We seek to lower bound $\Pr(L_{\mathbf{s},i})$, where $L_{\mathbf{s},i}$ is the event that the following both occur for all $r \in [\ell]$: (i) exactly $s_r$ of the tests with threshold $\gamma_r$ that $i$ is included in contain at most $\gamma_r - 2$ other defectives; and (ii) the remaining $d_{i,r} - s_r$ tests with threshold $\gamma_r$ that $i$ is included in contain at least $\gamma_r$ other defectives.
    \item[3.] In the selectable-threshold setting, the probability of $D_i$ (i.e., the event that $i$ is masked) can then be written as $\Pr(D_i) = \sum_{\mathbf{s} \in \mathcal{S}_i}\Pr(L_{\mathbf{s},i})$, analogous to its definition in Step \hyperref[enum:step3_roadmap]{3} in the single-threshold setting (which involved a sum of the probabilities of $L_{s,i}$ over $s \in \{0,\dots, d_i\}$). We can then use the previously derived lower bounds on $\Pr(L_{\mathbf{s},i})$ to lower bound $\Pr(D_i)$, similarly to the single-threshold setting.
\end{itemize}
The above changes can be seen as a direct generalization of Steps \hyperref[enum:step2_roadmap]{2} and \hyperref[enum:step3_roadmap]{3} to the selectable-threshold setting, so one would expect their respective analyses to exhibit a similar form to their single-threshold counterparts. Indeed, this is the case, with the details provided below.

\subsubsection{Step 2: Lower Bounding $\Pr(L_{\mathbf{s}, i})$} \label{sec:step2_multi}
As discussed above, the approach here is similar to that of Section \ref{sec:step2_single}, and can be seen as a direct generalization of the methods therein. Namely, for a given $i \in [n]$ and $\mathbf{s} = (s_1,\dots, s_\ell) \in \mathcal{S}_i$, we wish to derive a lower bound on the probability that among the tests with threshold $\gamma_r$ in which $i$ is included in, exactly $s_r$ tests contain at most $\gamma_r - 2$ other defectives, with the remaining $d_{i,r} -s_r$ tests containing at least $\gamma_r$ other defectives, for all $r \in [\ell]$. To formalize this, we first introduce some definitions, analogous to those given in Section \ref{sec:step2_single}:
\begin{itemize}
    \item We define $\bM^{(i)}_r \in \{0,1\}^{d_{i,r}\times (n-1)}$ similarly to $\bM^{(i)}$, except that $\bM_r^{(i)}$ only consists of tests that $i$ is included in \emph{with threshold} $\gamma_r$, rather than all tests $i$ is included in.  Thus, each row represents a test including item $i$ with threshold $\gamma_r$, with its $n-1$ entries indicating the presence/absence of the remaining items. 
    \item Similarly, the $t_r$-th defective weight of $\bM_r^{(i)}$, denoted by $U_{t_r}$, is the number of defective items (excluding $i$) that appear in test $t_r \in [d_{i,r}]$ with threshold $\gamma_r$.
    \item We define $\mathcal{A}_{s_r,i}$ as the collection of all subsets of $[d_{i,r}]$ of size $s_r$.
    \item For a given $r \in [\ell]$ and $A_r \in \mathcal{A}_{s_r,i}$, the events $L_{A_r}^-, L_{A_r}^+$ are defined analogously to \eqref{eq:L_plus_minus} as follows:
    \begin{equation}
        \label{eq:L_plus_minus_multi} L_{A_r}^- \coloneqq \bigcap_{t_r \in A_r} \{U_{t_r} \leq \gamma_r - 2 \}, \qquad   L_{A_r}^+ \coloneqq \bigcap_{t_r \in A_r^c} \{U_{t_r} \geq \gamma_r \},
    \end{equation}
    where $A_r^c \coloneqq [d_{i,r}] \setminus A_r$.  
    In particular, we note that, like their single-threshold counterparts, $L_{A_r}^-, L_{A_r}^+$ are intersections of decreasing and increasing events, respectively, for all $r \in [\ell]$.
    \item Furthermore, we define $\mathcal{A}_{\mathbf{s},i} = \mathcal{A}_{s_1,i} \times \cdots \times \mathcal{A}_{s_\ell,i}$, and let
    \begin{equation}
        \label{eq:L_bfA_plus_minus} L_{\mathbf{A}}^- \coloneqq \bigcap_{r \in [\ell]} L_{A_r}^-, \qquad L_{\mathbf{A}}^+ \coloneqq \bigcap_{r \in [\ell]} L_{A_r}^+
    \end{equation}
    for each $\mathbf{A} \in \mathcal{A}_{\mathbf{s}, i}$.
\end{itemize}
Given these definitions, we can thus express $\Pr(L_{\mathbf{s},i})$ as
\begin{align}
    \Pr(L_{\mathbf{s},i}) &= \Pr\left(\bigcup_{\mathbf{A} \in \mathcal{A}_{\mathbf{s},i}} \{L_{\mathbf{A}}^- \cap L_{\mathbf{A}}^+ \}\right) \\
    \label{eq:Pr_L_bfs_i} & =\sum_{\mathbf{A} \in \mathcal{A}_{\mathbf{s},i}} \Pr(L_{\mathbf{A}}^- \cap L_\mathbf{A}^+),
\end{align}
with \eqref{eq:Pr_L_bfs_i} following since the events $\{L_{\mathbf{A}}^- \cap L_{\mathbf{A}}^+\}_{\mathbf{A} \in \mathcal{A}_{\mathbf{s},i}}$ are mutually disjoint. As in Section \ref{sec:step2_single}, we first focus on bounding $\Pr(L_{\mathbf{A}}^- \cap L_{\mathbf{A}}^+) = \Pr(L_{\mathbf{A}}^-)\Pr(L_\mathbf{A}^+ \mid L_{\mathbf{A}}^-)$ for each $\mathbf{A} \in \mathcal{A}_{\mathbf{s},i}$.

{\bf Lower Bounding $\Pr(L_{\mathbf{A}}^-)$.} Since, as noted above, the events $L_{A_r}^-$ are intersections of decreasing events for all $r \in [\ell]$, the FKG inequality (Lemma \ref{lem:FKG}) gives
\begin{equation}
    \label{eq:L_bfA_minus_lb} \Pr(L_{\mathbf{A}}^-) \geq \prod_{r \in [\ell]} \prod_{t_r \in A_r} \Pr(U_{t_r} \leq \gamma_r - 2).
\end{equation}
Analogously to the single-threshold setting, each $U_{t_r}$ has distribution $U_{t_r} \sim {\rm Bin}(w_{t_r}, q)$, meaning \eqref{eq:L_bfA_minus_lb} can be written as
\begin{equation}
    \label{eq:Pr_L_bfA_minus_bino} \Pr(L_{\mathbf{A}}^-) \geq \prod_{r \in [\ell]} \prod_{t_r \in A_r} \sum_{j=0}^{\gamma_r - 2} {w_{t_r} \choose j} q^j(1-q)^{w_{t_r} - j}.
\end{equation}
Note that if $\gamma_r = 1$ for some $r \in [\ell]$, then we necessarily have $s_r = 0$, and the corresponding ``empty product'' $\prod_{t_r \in A_r} (\cdot)$ in \eqref{eq:Pr_L_bfA_minus_bino} is interpreted as equaling 1 (i.e., that $r$ value is effectively excluded).  See also Footnote \ref{foot:r1_note}.

{\bf Lower Bounding $\Pr(L_{\mathbf{A}}^+ \mid L_{\mathbf{A}}^-)$.} As in the single-threshold setting, the conditioning on $L_{\mathbf{A}}^-$ could lead to items no longer being defective independently, which complicates the analysis.  A similar approach can be adopted here, except that we need to avoid overlaps from \emph{all} sub-matrices indexed by $r' \in [\ell]$, rather than only the $r$ value matching the test under consideration.  Thus, when considering a test $t_r \in A_r^c$, we only consider the items that appear in none of the tests indexed by $A_{r'}$ across every $r' \in [\ell]$, where $\mathbf{A} = (A_1,\dotsc,A_{\ell})$.  We denote the number of such items as $z_{t_r, \mathbf{A}}$.  

In the single-threshold case, we lower bounded $z_{t_r, \mathbf{A}}$ in a later part of the analysis, but here we do so immediately to ensure that each $r \in [\ell]$ can be handled separately: We use Assumption \ref{asm:sparsity} to obtain $z_{t_r, \mathbf{A}} \ge z_{t_r} := w_{t_r} - c(d_{\max} - 1)$ with $d_{\max} = \max_i \sum_{r'} d_{i,r'}$ denoting the maximum \emph{total} degree.  We assume for convenience that $z_{t_r}$ is an integer; otherwise, rounding down suffices.

With the above modifications, we are now in a position to follow similar steps as the single-threshold case (i.e., argue the events are increasing, apply FKG, and substitute in the binomial probability), to obtain
\begin{align}
    \label{eq:Pr_L_bfA_plus_bino_0} \Pr(L_{\mathbf{A}}^+ \mid L_{\mathbf{A}}^-) &\geq \prod_{r \in [\ell]} \prod_{t_r \in A_r^c} \left(1 - \sum_{j=0}^{\gamma_r - 1} {z_{t_r,\mathbf{A}} \choose j}q^j(1-q)^{z_{t_r,\mathbf{A}} - j}\right) \\
    &\geq \prod_{r \in [\ell]} \prod_{t_r \in A_r^c} \left(1 - \sum_{j=0}^{\gamma_r - 1} {z_{t_r} \choose j}q^j(1-q)^{z_{t_r} - j}\right), \label{eq:Pr_L_bfA_plus_bino}
\end{align}
where the second step follows because the bracketed term is a binomial upper tail probability, and can only decrease when we reduce the first integer binomial parameter.

{\bf Combining Terms.} Combining \eqref{eq:Pr_L_bfA_minus_bino} and \eqref{eq:Pr_L_bfA_plus_bino} gives
\begin{align}
     \Pr(L_{\mathbf{A}}^- \cap L_{\mathbf{A}}^+) &\geq \left[\prod_{r \in [\ell]} \prod_{t_r \in A_r} \sum_{j=0}^{\gamma_r - 2} {w_{t_r} \choose j} q^j(1-q)^{w_{t_r} - j}\right] \nonumber \\
    \label{eq:Pr_L_plus_minus_bfA_1}& \qquad \times \left[\prod_{r \in [\ell]} \prod_{t_r \in A_r^c} \left(1 - \sum_{j=0}^{\gamma_r - 1} {z_{t_r} \choose j}q^j(1-q)^{z_{t_r} - j}\right)\right] \\
    &= \prod_{r \in [\ell]} \Bigg\{\Bigg[\prod_{t_r \in A_r} \sum_{j=0}^{\gamma_r - 2} {w_{t_r} \choose j} q^j(1-q)^{w_{t_r} - j}\Bigg] \nonumber\\
    \label{eq:Pr_L_plus_minus_bfA_2} &\hspace{40pt}\times \Bigg[\prod_{t_r \in A_r^c} \Bigg(1 - \sum_{j=0}^{\gamma_r - 1} {z_{t_r} \choose j}q^j(1-q)^{z_{t_r} - j}\Bigg)\Bigg]\Bigg\}.
\end{align}

{\bf Summing Over $\mathbf{A} \in \mathcal{A}_{\mathbf{s},i}$.} Summing over all $\mathbf{A} =(A_1,\dots, A_\ell) \in \mathcal{A}_{\mathbf{s},i}$ and using the above lower bound on $\Pr(L_{\mathbf{A}}^- \cap L_{\mathbf{A}}^+)$, we obtain 
\begin{align}
    \Pr(L_{\mathbf{s},i}) &\geq \sum_{\mathbf{A} \in \mathcal{A}_{\mathbf{s},i}} \Bigg\{\prod_{r \in [\ell]} \Bigg\{\Bigg[\prod_{t_r \in A_r} \sum_{j=0}^{\gamma_r - 2} {w_{t_r} \choose j} q^j(1-q)^{w_{t_r} - j}\Bigg] \\
    &\hspace{80pt}\times \Bigg[\prod_{t_r \in A_r^c} \Bigg(1 - \sum_{j=0}^{\gamma_r - 1} {z_{t_r} \choose j}q^j(1-q)^{z_{t_r} - j}\Bigg)\Bigg]\Bigg\}\Bigg\}.
\end{align}
To simplify this, note that since $\mathcal{A}_{\mathbf{s},i} = \mathcal{A}_{s_1,i} \times \cdots \times \mathcal{A}_{s_\ell, i}$, the sum $\sum_{\mathbf{A} \in \mathcal{A}_{\mathbf{s},i}}$ can be equivalently written as $\sum_{A_1 \in \mathcal{A}_{s_1,i}}\cdots \sum_{A_\ell \in \mathcal{A}_{s_\ell, i}}$. Substituting this in and swapping the order of the summations and product (which is justified since the $r$-th term inside the product $\prod_{r \in [\ell]}$ depends only on the $r$-th coordinate of $\mathbf{A}$), we thus obtain
\begin{align}
    \Pr(L_{\mathbf{s},i}) &\geq \prod_{r \in [\ell]} \Bigg\{ \sum_{A_r \in \mathcal{A}_{s_r,i}} \Bigg\{\Bigg[\prod_{t_r \in A_r} \sum_{j=0}^{\gamma_r - 2} {w_{t_r} \choose j} q^j(1-q)^{w_{t_r} - j}\Bigg] \nonumber \\
    \label{eq:Pr_L_bfs_i_rearranged}& \hspace{90pt}\times \Bigg[\prod_{t_r \in A_r^c} \Bigg(1 - \sum_{j=0}^{\gamma_r - 1} {z_{t_r} \choose j}q^j(1-q)^{z_{t_r} - j}\Bigg)\Bigg]\Bigg\}\Bigg\}.
\end{align}
Observe that the terms inside the product correspond exactly to those in the lower bound on $\Pr(L_{s,i})$ given in \eqref{eq:Pr_L_si_lb_init}, with $s$ renamed to $s_r$ (alongside the other relevant quantities $\gamma, A, \mathcal{A}_{s,i}, z_{t,A}, d_i, t, w_t, z_t$ also being renamed). Thus, we can directly use the final lower bound on $\Pr(L_{s,i})$ given in \eqref{eq:final_L_si_lb} to lower bound \eqref{eq:Pr_L_bfs_i_rearranged} by
\begin{align}
    \Pr(L_{\mathbf{s},i}) &\geq \prod_{r \in [\ell]} \Bigg\{ {d_{i,r} \choose s_r} \Bigg[\prod_{t_r \in [d_{i,r}]} \sum_{j=0}^{\gamma_r - 2} {w_{t_r}\choose j}q^j(1-q)^{w_{t_r} - j}\Bigg]^{\frac{s_r}{d_{i,r}}} \nonumber \\
    \label{eq:Pr_bfs_i_lb_final} &\hspace{70pt}\times \Bigg[\prod_{t_r \in [d_{i,r}]}\Bigg(1 - \sum_{j=0}^{\gamma_r -1}{z_{t_r}\choose j}q^j(1-q)^{z_{t_r} - j} \Bigg)\Bigg]^{\frac{d_{i,r}-s_r}{d_{i,r}}}\Bigg\},
\end{align}
where we recall that $z_{t_r} = w_{t_r} - c(d_{\max} - 1)$.

\subsubsection{Step 3: Lower Bounding $\Pr(D_{i})$}
Given the lower bound on $\Pr(L_{\mathbf{s},i})$ in \eqref{eq:Pr_bfs_i_lb_final}, we proceed to lower bound $\Pr(D_{i})$ in a similar manner as the single-threshold case. In particular, we use the relationship between the $D_i$ and $\{L_{\mathbf{s},i}\}_{\mathbf{s} \in \mathbf{S}_i}$ mentioned in Section \ref{sec:modifications} to lower bound $\Pr(D_i)$ by
\begin{align}
     \Pr(D_i) &\geq \sum_{\mathbf{s} \in \mathcal{S}_i}\Bigg\{\prod_{r \in [\ell]} \Bigg\{ {d_{i,r} \choose s_r} \Bigg[\prod_{t_r \in [d_{i,r}]} \sum_{j=0}^{\gamma_r - 2} {w_{t_r}\choose j}q^j(1-q)^{w_{t_r} - j}\Bigg]^{\frac{s_r}{d_{i,r}}} \nonumber \\
    \label{eq:Pr_Di_selectable_init} &\hspace{70pt}\times \Bigg[\prod_{t_r \in [d_{i,r}]}\Bigg(1 - \sum_{j=0}^{\gamma_r -1}{z_{t_r}\choose j}q^j(1-q)^{z_{t_r} - j} \Bigg)\Bigg]^{\frac{d_{i,r}-s_r}{d_{i,r}}}\Bigg\} \Bigg\}.
\end{align}
Recalling that $\mathcal{S}_i = \{0,\dots, d_{i,1}\}\times \cdots \times \{0,\dots, d_{i,\ell}\}$, we can write $\sum_{\mathbf{s} \in \mathcal{S}_i}$ as $\sum_{s_1 \in \{0,\dots, d_{i,1}\}} \cdots \sum_{s_\ell \in \{0,\dots, d_{i,\ell}\}}$.  By similar reasoning as the previous step, we can swap the order of the summations and the product to obtain
\begin{align}
    \Pr(D_i) &\geq \prod_{r \in [\ell]} \Bigg\{ \sum_{s_r  = 0}^{d_{i,r}} \Bigg\{{d_{i,r}\choose s_r} \Bigg[\prod_{t_r \in [d_{i,r}]} \sum_{j=0}^{\gamma_r - 2} {w_{t_r}\choose j}q^j(1-q)^{w_{t_r} - j}\Bigg]^{\frac{s_r}{d_{i,r}}} \nonumber \\
     \label{eq:Pr_Di_selectable_simplified_1}&\hspace{70pt}\times \Bigg[\prod_{t_r \in [d_{i,r}]}\Bigg(1 - \sum_{j=0}^{\gamma_r -1}{z_{t_r}\choose j}q^j(1-q)^{z_{t_r} - j} \Bigg)\Bigg]^{\frac{d_{i,r}-s_r}{d_{i,r}}}\Bigg\} \Bigg\} \\
    &= \prod_{r \in [\ell]}\Bigg\{ \Bigg[\Bigg( \prod_{t_r \in [d_{i,r}]}\sum_{j=0}^{\gamma_r-2}{w_{t_r} \choose j}q^j (1-q)^{w_{t_r} - j}\Bigg)^{\frac{1}{d_{i,r}}} \nonumber \\
    \label{eq:Pr_Di_selectable_simplified_2} &\hspace{70pt}+ \Bigg(\prod_{t_r \in [d_{i,r}]}\Bigg(1 - \sum_{j=0}^{\gamma_r - 1}{z_{t_r} \choose j} q^j(1 - q)^{z_{t_r} - j}\Bigg)\Bigg)^{\frac{1}{d_{i,r}}}\Bigg]^{d_{i,r}} \Bigg\},
\end{align}
with \eqref{eq:Pr_Di_selectable_simplified_2} following from the binomial formula. To further lower bound this, we can lower bound both sums in \eqref{eq:Pr_Di_selectable_simplified_2} analogously to \eqref{eq:first_sum_lb}--\eqref{eq:second_sum_lb} for each $r \in [\ell]$, which gives
\begin{align}
    \Pr(D_i) &\geq \prod_{r \in [\ell]} \Bigg\{\Bigg[\sum_{j=0}^{\gamma_r - 2}{w_{r}^{\max} \choose j}q^j (1-q)^{w_{r}^{\max}-j} + 1 - \sum_{j=0}^{\gamma_r-1} {z_{r}^{\min} \choose j} q^j (1-q)^{z_{r}^{\min} - j}\Bigg]^{d_{i,r}}\Bigg\} \\
    &\geq \prod_{r \in [\ell]} \Bigg\{\Bigg[\sum_{j=0}^{\gamma_r- 2}{w_{r}^{\max} \choose j}q^j (1-q)^{w_{r}^{\max}-j} + 1 - \sum_{j=0}^{\gamma_r-1} {z_{r}^{\min} \choose j} q^j (1-q)^{z_{r}^{\min} - j}\Bigg]^{d_{r}^{\max}}\Bigg\}, \label{eq:bound_by_max_d}
\end{align}
where $d_{r}^{\max} := \max_{i} d_{i,r}$ (and we later similarly use $d_{r}^{\min} := \min_{i} d_{i,r}$), $w_{r}^{\max} \coloneqq \max_{{t_r} \in \mathcal{T}_r} w_{t_r}$ (and similarly $w_{r}^{\min}$), and $z_{r}^{\min} \coloneqq \min_{t_r \in \mathcal{T}_r} z_{t_r} = w_{r}^{\min} - c(d_{\max} - 1)$, recalling that $\mathcal{T}_r \subseteq [T]$ is the set of test indices with threshold $\gamma_r$.  To obtain our final lower bound on $\Pr(D_i)$, we upper bound $d_{\max,r}$ in terms of $T$. Note that \eqref{eq:bound_by_max_d} uses that the argument to $[\cdot]$ is at most one, since it is a lower bound on a binomial probability.

An analogous double counting argument to that in the single-threshold case reveals that $\sum_{i=1}^nd_{i,r} = \sum_{t_r \in \mathcal{T}_r}w_{t_r}$ (recall that $d_{i,r}$ is the number of threshold-$\gamma_r$ tests $i$ is included in), which implies that $nd_{r}^{\min} \leq \lvert \mathcal{T}_r\rvert (w_{r}^{\max}+1)$. Assumption \ref{asm:sub-matrix_regularity} also implies that $nd_{r}^{\min} \geq \frac{n d_{r}^{\max}}{\zeta_r}$, which combined with the aforementioned double counting argument, implies that $d_{r}^{\max} \leq \frac{\zeta_r \lvert \mathcal{T}_r \rvert (w_{r}^{\max}+1)}{n}$ for each $r \in [\ell]$. Recalling that $\alpha_r \in [0,1]$ is defined such that $\lvert \mathcal{T}_r \rvert = \alpha_r T$ for each $r \in [\ell]$, and using the above bound on $d_{r}^{\max}$, we thus have that $\Pr(D_i)$ is lower bounded by
\begin{align}
    \Pr(D_i) &\geq \prod_{r \in [\ell]} \Bigg\{\Bigg[\sum_{j=0}^{\gamma_r - 2}{w_{r}^{\max} \choose j}q^j (1-q)^{w_{r}^{\max}-j} \nonumber \\
    \label{eq:Pr_Di_lb_selectable_final} &\hspace{70pt}+ 1 - \sum_{j=0}^{\gamma_r-1} {z_{r}^{\min} \choose j} q^j (1-q)^{z_{r}^{\min} - j}\Bigg]^{\frac{\zeta_r \alpha_r T (w_{r}^{\max}+1)}{n}}\Bigg\}.
\end{align}

\subsubsection{Lower Bounding the Number of Masked Items, i.i.d.~Prior} \label{sec:step4_multi}

Given the lower bound on $\Pr(D_i)$ derived above, we can follow the same steps as in Section \ref{sec:step4_single} to obtain a high probability lower bound on the number of masked items.  Specifically, we can use the lower bound on $\lvert \mathcal{I}\rvert$ and the fact that its masking events are mutually independent (Lemma \ref{lem:distance4_packing}), alongside \eqref{eq:Pr_Di_lb_selectable_final}, to argue that
\begin{equation}
   \label{eq:stochastic_dom_selectable} {\rm Bin}\left(n^{1-3\xi}, \prod_{r \in [\ell]}\Xi_r^{\frac{\zeta_r \alpha_r T (w_{r}^{\max}+1)}{n}}\right) = {\rm Bin} \left(n^{1-3\xi}, \exp\left(\frac{T}{n}\sum_{r \in [\ell]}\zeta_r \alpha_r (w_{r}^{\max}+1)\log (\Xi_r)\right)\right) \sd M,
\end{equation}
where $\Xi_r$ is the term inside the exponent in \eqref{eq:Pr_Di_lb_selectable_final} for each $r \in [\ell]$. The expected value of the binomial random variable is then
\begin{equation}
    \label{eq:ev_binom_selectable} n^{1-3\xi}\exp\left(\frac{T}{n}\sum_{r \in [\ell]}\zeta_r \alpha_r (w_{r}^{\max}+1)\log (\Xi_r)\right),
\end{equation}
which is at least $n^{2\xi}$ whenever
\begin{align}
    \label{eq:selectable_T_init} T& \leq \frac{(1-5\xi)n \log n}{\displaystyle-\sum_{r \in [\ell]}\zeta_r \alpha_r (w_{r}^{\max}+1) \log (\Xi_r)} \\
    &=  \frac{(1-5\xi)n \log n}{\displaystyle-\sum_{r \in [\ell]} \zeta_r \alpha_r (w_{r}^{\max}+1) \log \Bigg(\sum_{j=0}^{\gamma_r - 2}{w_{r}^{\max} \choose j}q^j (1-q)^{w_{r}^{\max}-j} + 1 - \sum_{j=0}^{\gamma_r-1} {z_{r}^{\min} \choose j} q^j (1-q)^{z_{r}^{\min} - j}\Bigg)}.
\end{align}
Under Assumption \ref{asm:sparsity}, $z_{r}^{\min}$ satisfies $z_{r} = w_{r}(1-o(1))$ for each $r \in [\ell]$.\footnote{Assumption \ref{asm:sub-matrix_regularity} ensures that $d_{i,r} = \Theta(\log n)$ for every $(i,r)$, and since the number of thresholds $\ell$ is constant, this implies that the maximum \emph{total} degree $d_{\max}$ in the definition of $z_{r}$ also scales as $\Theta(\log n)$.}  
Additionally, Assumption \ref{asm:sub-matrix_regularity} gives that $\zeta_r = 1+o(1)$ and $[w_{r}^{\min}, w_{r}^{\max}+1] = [\Delta_r (n/k)(1-o(1)), \Delta_r(n/k)(1+o(1))]$ for each $r \in [\ell]$. Using these and the Poisson approximation to the binomial (Lemma \ref{lem:poisson_approx}), the condition on $T$ can be written as
\begin{align}
    \label{eq:selectable_T_simplified} T &\leq \frac{n \log n}{\displaystyle-\sum_{r \in [\ell]}\alpha_r \Delta_r(n/k) \log\left(\sum_{j=0}^{\gamma_r -2} \frac{\Delta_r^j e^{-\Delta_r}}{j!} + 1 - \sum_{j=0}^{\gamma_r - 1}\frac{\Delta_r^j e^{-\Delta_r}}{j!}\right)}(1-5\xi)(1-o(1)) \\
    \label{eq:selectable_T_simplified} &= \frac{k \log k}{\displaystyle-\sum_{r \in [\ell]}\alpha_r \Delta_r \log \left(1 - \frac{\Delta_r^{\gamma_r - 1}e^{-\Delta_r}}{(\gamma_r - 1)!}\right)}(1-\varepsilon),
\end{align}
where in \eqref{eq:selectable_T_simplified} we have used $n = k^{\frac{1}{1-\xi}}$ and introduced $\varepsilon > 0$ that is arbitrarily small as a function of arbitrarily small $\xi > 0$ (and the $1-o(1)$ term). Taking the minimum over $(\Delta_1, \dots, \Delta_\ell) \in \bbR_+^\ell$, it follows that \eqref{eq:ev_binom_selectable} is at least $n^{2\xi}$ for all $(\Delta_1,\dots, \Delta_\ell)$ as long as
\begin{equation}
    \label{eq:selectable_T_optimised} T \leq \min_{(\Delta_1,\dots \Delta_\ell) \in \mathbb{R}_+^\ell} \frac{k \log k}{\displaystyle-\sum_{r \in [\ell]}\alpha_r \Delta_r \log \left(1 - \frac{\Delta_r^{\gamma_r - 1}e^{-\Delta_r}}{(\gamma_r - 1)!}\right)}(1-\varepsilon).
\end{equation}
To conclude this step, we can proceed in the same way as in the end of Section \ref{sec:step4_single} to show that the event $\mathcal{E} = \{M_0 \leq \frac{1}{4}n^{2\xi}\} \cup \{M_1 = 0\}$ (i.e., the event that there are at most $\frac{1}{4}n^{2\xi}$ masked non-defectives or no masked defectives) holds with probability $\exp(-\Omega(n^\xi))$.

\subsubsection{Wrapping Up}
Finally, as discussed in Section \ref{sec:modifications}, Steps \hyperref[enum:step5_roadmap]{5}-\hyperref[enum:step7_roadmap]{7} can be applied in the exact same way as in the single-threshold case, with their analyses remaining (almost)\footnote{The only difference between steps 5-7 here and steps 5-7 in the single-threshold setting is that the arguments surrounding \eqref{eq:w_t'_conc}--\eqref{eq:w_t'_min_max} need to be repeated for each $r \in [\ell]$, so we can establish the required concentration of $W_{t_r}'$ for each $t_{r} \in \mathcal{T}_r$ and $r \in [\ell]$ via a union bound (under Assumption \ref{asm:sub-matrix_regularity}). Since this straightforward change is the only one across the three steps, we omit the details.} entirely unchanged (see Sections \ref{sec:step5_single}-\ref{sec:step7_single} for these analyses). Since these steps are unchanged, we do not repeat the details, and only note that these steps jointly show that $\Pr(\hat{S} = S) = o(1)$ under the combinatorial prior for any decoder $\hat{S}$, sparsity parameter $\theta$, and test design $\bX$ satisfying Assumptions \ref{asm:sparsity}-\ref{asm:sub-matrix_regularity}, as long as \eqref{eq:selectable_T_optimised} holds. Thus, we can conclude that in order for $\Pr(\hat{S} \neq S) \nrightarrow 1$, it is necessary that
\begin{equation}
    \label{eq:selectable_T_final} T \geq \min_{(\Delta_1,\dots \Delta_\ell) \in \bbR_+^\ell} \frac{k \log k}{\displaystyle-\sum_{r \in [\ell]}\alpha_r \Delta_r \log \left(1 - \frac{\Delta_r^{\gamma_r - 1}e^{-\Delta_r}}{(\gamma_r - 1)!}\right)}(1-o(1)).
\end{equation}
This completes the proof of Theorem \ref{thm: multi}.

\subsection{Proof of Lemma~\ref{cor: bestdenom} (Maximization of $f$)} \label{app:pf_max_f}

Throughout the proof, we denote 
\begin{equation}
    A(\Delta,\gamma)=\dfrac{\Delta^{\gamma-1}}{\Gamma(\gamma)}e^{-\Delta}, \quad \text{ and thus } \quad h(\Delta,\gamma) = -\Delta \log\big( 1-A(\Delta,\gamma) \big), \label{eq:def_A} 
\end{equation}
where $\Gamma(\cdot)$ is the gamma function.  When $\gamma$ is integer-valued (as is the case in Lemma~\ref{cor: bestdenom}) we have the simplification $\Gamma(\gamma) = (\gamma-1)!$, but later in the proof we will consider general real-valued $\gamma > 0$ in \eqref{eq:def_A} in order to permit differentiation with respect to $\gamma$.

By the definitions $\Delta^*_{r} \coloneqq \arg\max_{\Delta> 0} h(\Delta,\gamma_r)$ and $r^* \coloneqq \arg\max_{r \in [\ell]} \gamma_r$, there exist choices $\boldsymbol \alpha^*$ and $\mathbf \Delta^*$ such that $f(\boldsymbol \alpha^*,\mathbf \Delta^*,\boldsymbol \gamma)= h(\Delta^*_{r^*},\gamma_{r^*})$.  To establish Lemma \ref{cor: bestdenom}, it suffices to show that 
    \begin{equation}
        f(\boldsymbol \alpha,\boldsymbol{\Delta},\boldsymbol \gamma)\le h(\Delta^*_{r^*},\gamma_{r^*})\quad\forall \boldsymbol \alpha,\boldsymbol{\Delta}.
    \end{equation}
    Consider an arbitrary tuple $(\boldsymbol \alpha,\boldsymbol{\Delta},\boldsymbol \gamma)$. Denoting $r_0=\arg\max_{r\in[\ell]}h(\Delta_r,\gamma_r)$, we have 
    \begin{equation}
        f(\boldsymbol \alpha,\boldsymbol{\Delta},\boldsymbol \gamma)\le h(\Delta_{r_0},\gamma_{r_0}),
    \end{equation}
    since $f$ is an average of $h(\Delta_r,\gamma_r)$ values and is thus upper bounded by the maximum.  
    We will establish that 
    \begin{equation}
        h(\Delta_{r_0},\gamma_{r_0})\le h(\Delta_{r_0}^*,\gamma_{r_0})\le h(\Delta_{r^*}^*,\gamma_{r^*}).
    \end{equation}
    The first inequality follows directly from the definition of $\Delta_r^*$. We proceed to prove the second inequality, starting with the following lemma limiting the range of $\Delta_{r}^*$ for $\gamma_r \ge 2$, and similarly for non-integer thresholds exceeding 2.  (The case $\gamma_r = 1$ was already handled in standard GT \cite{coja2020optimal}, and will be treated separately later.)
    
    \begin{lemma}\label{lem: boundxbygamma}
        For any real-valued threshold $\gamma \ge 2$ (possibly non-integer), the quantity $\Delta^* = \Delta^*(\gamma)=\arg\max_{\Delta> 0} h(\Delta,\gamma)$ satisfies $\gamma \ge \Delta^*> \gamma-1/3$.
    \end{lemma}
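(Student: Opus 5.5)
The plan is to study the sign of $\partial h/\partial\Delta$ directly. With $A=A(\Delta,\gamma)$ as in \eqref{eq:def_A}, we have $\partial A/\partial\Delta = A\big(\frac{\gamma-1}{\Delta}-1\big)$. Differentiating $h=-\Delta\log(1-A)$ then gives
\begin{equation*}
\frac{\partial h}{\partial \Delta} = -\log(1-A) + \frac{A}{1-A}\,(\gamma-1-\Delta).
\end{equation*}
We first record that a maximizer exists. Note that $0<A<1$ for all $\Delta>0$ (indeed $A\le 1/e$, as shown below), and $h$ is continuous on $(0,\infty)$. Moreover $h\to 0$ both as $\Delta\to 0$ (since $A\to 0$ when $\gamma\ge 2$) and as $\Delta\to\infty$ (since $A$ decays exponentially, so $h\approx \Delta A\to 0$), while $h>0$ in between. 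Hence the supremum is attained at some $\Delta^*$. Since $\partial h/\partial\Delta$ will turn out to be nonzero outside $(\gamma-1/3,\gamma)$, every maximizer lies in that interval.

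\textbf{Upper bound $\Delta^*\le\gamma$.} For $\Delta\ge\gamma$ we have $\gamma-1-\Delta\le -1$, so
\begin{equation*}
\frac{\partial h}{\partial\Delta}\le -\log(1-A)-\frac{A}{1-A}<0.
\end{equation*}
The last inequality is the elementary bound $-\log(1-A)<\frac{A}{1-A}$ for $A\in(0,1)$, which follows from $\log(1/(1-A))=\log\big(1+\frac{A}{1-A}\big)<\frac{A}{1-A}$. Thus $h$ is strictly decreasing on $[\gamma,\infty)$, which gives $\Delta^*\le\gamma$ (in fact strictly).

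\textbf{Lower bound $\Delta^*>\gamma-1/3$.} For $\Delta\le\gamma-1$ both terms in the derivative are nonnegative and the first is positive, so $h$ is increasing there. For $\Delta\in(\gamma-1,\gamma-1/3]$, positivity of $\partial h/\partial\Delta$ is equivalent to
\begin{equation*}
g(A):=\frac{-(1-A)\log(1-A)}{A} > \Delta-(\gamma-1),
\end{equation*}
and the right-hand side is at most $2/3$. The function $g$ is decreasing on $(0,1)$, going from $1$ at $0^+$ to $0$ at $1^-$; this follows from a routine derivative check, or by expanding $-(1-A)\log(1-A)=A-\sum_{m\ge2}\frac{A^m}{m(m-1)}$. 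It therefore suffices to show a uniform bound $A\le 1/e$ and then compute
\begin{equation*}
g(1/e)=e\,(1-e^{-1})\log\frac{1}{1-e^{-1}}\approx 0.788>\frac{2}{3}.
\end{equation*}

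\textbf{Main obstacle: the uniform bound on $A$.} For fixed $\gamma$, $A$ is maximized over $\Delta$ at $\Delta=\gamma-1$, with value $\phi(x):=x^x e^{-x}/\Gamma(x+1)$ where $x=\gamma-1\ge 1$. Since $\phi(1)=1/e$, it remains to show $\phi$ is nonincreasing on $[1,\infty)$. Its log-derivative is $\log x-\psi(x+1)$, so I would invoke the standard digamma bound $\psi(x+1)>\log(x+\tfrac12)>\log x$, or argue via $\psi(x+1)=\psi(x)+1/x$ together with $\psi(x)>\log x-\frac{1}{x}$. This is the step that needs care for non-integer $\gamma$; for integer $\gamma$ it reduces to the monotonicity of the Poisson mode mass, e.g.\ via Stirling-type bounds.

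With $A\le 1/e$ in hand, we get $\partial h/\partial\Delta>0$ on $(0,\gamma-1/3]$ and $\partial h/\partial\Delta<0$ on $[\gamma,\infty)$. Every maximizer therefore lies in $(\gamma-1/3,\gamma)$, which proves Lemma~\ref{lem: boundxbygamma}.
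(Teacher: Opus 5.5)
Your proof is correct and follows essentially the paper's argument: the same formula for $\partial h/\partial\Delta$, the same decreasing function $g(A)=-\frac{1-A}{A}\log(1-A)$ (your inequality $-\log(1-A)<\frac{A}{1-A}$ is just $g(A)<1$), and a uniform bound on $A$ at its mode $\Delta=\gamma-1$ to force $g(A)>2/3$. The differences are minor. You use the sign of $\partial h/\partial\Delta$ outside $(\gamma-1/3,\gamma)$ rather than the stationarity equation at $\Delta^*$, which covers every maximizer without appealing to unimodality. You also bound the mode value by $1/e$ via $\psi(x+1)>\log x$, whereas the paper gets $1/\sqrt{2\pi}$ from Batir's Stirling-type lower bound on $\Gamma$.
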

    \begin{proof}
    Since $\Delta^*=\arg\max_{\Delta> 0} h(\Delta,\gamma)$, we have
    $\dfrac{\partial}{\partial \Delta}h(\Delta,\gamma)\Big|_{\Delta=\Delta^*}=0$ (it is easy to verify that $h$ is unimodal with a single peak).  
    Differentiating \(h(\Delta,\gamma)=-\Delta\log(1-A(\Delta,\gamma))\)
    with respect to \(\Delta\), applying the product rule, and using $\frac{\partial}{\partial \Delta} A(\Delta,\gamma) = A(\Delta,\gamma)\big(\frac{\gamma-1}{\Delta}-1\big)$, we obtain
    \begin{equation}
        \frac{\partial}{\partial\Delta} h(\Delta,\gamma) =
        -\log(1-A(\Delta,\gamma)) + \frac{A(\Delta,\gamma)}{1-A(\Delta,\gamma)} (\gamma-1-\Delta).
    \end{equation}
    Setting this derivative to zero at \(\Delta=\Delta^*\) and writing
    \(\bar A=A(\Delta^*,\gamma)\) then gives
    \begin{equation}
        \Delta^*-\gamma+1 = -\frac{1-\bar A}{\bar A}\log(1-\bar A). \label{eq:barA_term}
    \end{equation}
    To establish the desired claim, it remains to show that the right-hand side lies in $[2/3,1]$.  To do so, we proceed with some useful observations:
    \begin{itemize}
        \item The quantity $\bar{A}$ always lies in $(0,1)$.  For integer-valued $\gamma$, this follows directly from $A(\Delta,\gamma)$ being a Poisson probability term.  For general real-valued $\gamma \ge 2$, the lower bound $\bar{A} > 0$ is still immediate, but we defer a stronger upper bound $\bar{A} \le \frac{1}{\sqrt{2\pi}}$ to the paragraph below. 
        \item The quantity $-\frac{1-\bar A}{\bar A}\log(1-\bar A)$ decreases monotonically from $1$ (at $\bar{A} = 0$) to $0$ (at $\bar{A} = 1$); the endpoints are easily checked, and monotonicity follows since the derivative $\frac{\bar{A}+\log(1-\bar{A})}{\bar{A}^2}$ is negative.
        \item The quantity $\Delta^{\gamma-1} e^{-\Delta}$ is maximized (with respect to $\Delta$) at $\gamma - 1$, and thus $(\Delta^{*})^{\gamma-1}e^{-\Delta^*}\le(\gamma-1)^{\gamma-1}e^{-\gamma+1} $.
    \end{itemize}
    The first two dot points immediately imply that \eqref{eq:barA_term} is upper bounded by 1 as desired.  To establish the lower bound of $2/3$, we observe from the third dot point above that $\bar{A}\le \dfrac{(\gamma-1)^{\gamma-1}e^{-\gamma+1}}{\Gamma(\gamma)}\le \frac{1}{\sqrt{2\pi}}$ for $\gamma\ge 2$, which can be verified via the Stirling-like lower bound $\Gamma(1+x) \ge \sqrt{2 \pi x}\big(\frac{x}{e}\big)^x$ for $x \ge 1$ \cite{Batir2017}.\footnote{This is a simplified form of their lower bound that replaces $x^2 + x/3 + a_*$ by $x^2$ in their notation (since $x,a_* > 0$). Their constraint $x \ge 1$ translates to our $\gamma \ge 2$ via the relation $x = \gamma - 1$.}  Combined with the second dot point above, this gives
    \begin{equation}
        -\dfrac{1-\bar{A}}{\bar{A}}\log(1-\bar{A})\ge -\dfrac{1-(2\pi)^{-1/2}}{(2\pi)^{-1/2}}\log(1-(2\pi)^{-1/2})\approx 0.767,
    \end{equation}
    which is strictly higher than the desired lower bound of $2/3$.
    \end{proof}

    Next, we state another lemma stating the conditions under which $h(\Delta^*(\gamma),\gamma)$ is increasing in $\gamma$, again treating $\gamma$ as a general real number (not necessarily an integer).
    
    \begin{lemma}
        Under the definition $\Delta^* = \Delta^*(\gamma) = \arg\max_{\Delta> 0} h(\Delta,\gamma)$, we have the equivalence 
        \begin{align}\label{eq: equiv}
            \frac{d}{d\gamma}\, h(\Delta^*,\gamma)>0
            \Longleftrightarrow
            \log \Delta^*>\Psi(\gamma),
        \end{align}
        where $\Psi(\cdot)$ is the Digamma function, i.e., the derivative of $\log \Gamma(\cdot)$.
    \end{lemma}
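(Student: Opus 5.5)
The natural route is the envelope theorem. Since $\Delta^*(\gamma)$ maximizes $h(\cdot,\gamma)$ and satisfies the stationarity condition $\frac{\partial}{\partial\Delta}h(\Delta,\gamma)\big|_{\Delta=\Delta^*}=0$, the total derivative reduces to a partial derivative:
\begin{equation}
\frac{d}{d\gamma}h(\Delta^*,\gamma)=\frac{\partial h}{\partial\Delta}(\Delta^*,\gamma)\,\frac{d\Delta^*}{d\gamma}+\frac{\partial h}{\partial\gamma}(\Delta^*,\gamma)=\frac{\partial h}{\partial\gamma}(\Delta^*,\gamma).
\end{equation}
This step needs $\Delta^*(\gamma)$ to be differentiable, or at least we must argue via the envelope theorem for the value function $\max_\Delta h(\Delta,\gamma)$. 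I would justify this using the unimodality of $h$ in $\Delta$ already invoked in the proof of Lemma~\ref{lem: boundxbygamma}, together with the implicit function theorem, assuming a nonzero second derivative at the peak. Alternatively, I would state the equivalence directly for the partial derivative at $\Delta^*$, which is all that the later argument needs.

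Next I would compute $\frac{\partial h}{\partial \gamma}$ using the notation $A(\Delta,\gamma)$ from \eqref{eq:def_A}. Since $\log A=(\gamma-1)\log\Delta-\Delta-\log\Gamma(\gamma)$, we get $\frac{\partial A}{\partial\gamma}=A\big(\log\Delta-\Psi(\gamma)\big)$. Therefore
\begin{equation}
\frac{\partial h}{\partial\gamma}(\Delta,\gamma)=\Delta\,\frac{A(\Delta,\gamma)}{1-A(\Delta,\gamma)}\big(\log\Delta-\Psi(\gamma)\big).
\end{equation}

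Finally, I would evaluate this expression at $\Delta=\Delta^*$. There $\Delta^*>0$, and in fact $\Delta^*>\gamma-1/3>0$ by Lemma~\ref{lem: boundxbygamma}. Also $\bar A=A(\Delta^*,\gamma)\in(0,1)$, as established in that lemma's proof (with $\bar A\le (2\pi)^{-1/2}$ for $\gamma\ge2$). So the prefactor $\Delta^*\bar A/(1-\bar A)$ is strictly positive, and the sign of the derivative equals the sign of $\log\Delta^*-\Psi(\gamma)$, which is exactly \eqref{eq: equiv}.

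The main obstacle is the envelope step, namely the regularity of $\Delta^*(\gamma)$. The rest is a direct computation.
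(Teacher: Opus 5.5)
Your proposal is correct and follows essentially the same route as the paper: the envelope reduction $\frac{d}{d\gamma}h(\Delta^*,\gamma)=\frac{\partial}{\partial\gamma}h(\Delta^*,\gamma)$, then $\frac{\partial A}{\partial\gamma}=A\big(\log\Delta-\Psi(\gamma)\big)$, then the sign argument from positivity of $\Delta^*\bar A/(1-\bar A)$. For the regularity step, the paper uses Danskin's theorem rather than the implicit function theorem, so it needs neither differentiability of $\Delta^*(\gamma)$ nor your nonzero-second-derivative assumption; it requires only that $h$ is continuously differentiable and that the maximizer is unique and confined to a compact interval (via Lemma~\ref{lem: boundxbygamma}).
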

    \begin{proof}
     We first calculate $\frac{d}{d \gamma}h(\Delta^*, \gamma)$.  For intuition, suppose first that $\Delta^*(\gamma)$ is differentiable. Then,
    \begin{equation}
        \frac{d}{d\gamma}\, h(\Delta^*,\gamma)
        = \underbrace{\frac{\partial}{\partial \Delta}
        h(\Delta,\gamma)\bigg|_{\Delta=\Delta^*}}_{=0
        \text{ by optimality}}\,
        \frac{d\Delta^*}{d\gamma} + \frac{\partial}{\partial\gamma}
        h(\Delta^*,\gamma)
        = \frac{\partial}{\partial\gamma}
        h(\Delta^*,\gamma).
    \end{equation}
    Even without establishing differentiability of $\Delta^*(\gamma)$, the same conclusion follows from Danskin's theorem \cite{Bertsekas1999}, since $h$ is continuously differentiable and the optimizer $\Delta^*(\gamma)$ is unique (as noted earlier) and confined to a compact interval (see Lemma \ref{lem: boundxbygamma}). 
    Since $h(\Delta,\gamma)=-\Delta\log(1-A(\Delta,\gamma))$ basic differentiation then gives
    \begin{equation}
    \frac{\partial }{\partial \gamma}h(\Delta,\gamma)
    =
    \Delta\cdot \frac{\partial }{\partial \gamma}A(\Delta,\gamma)\cdot \frac{1}{1-A(\Delta,\gamma)}.
    \end{equation}
    Thus, $\frac{\partial }{\partial \gamma}h(\Delta,\gamma)>0$ if and only if $\frac{\partial }{\partial \gamma}A(\Delta,\gamma)>0$ (since $A(\Delta, \gamma) \in (0,1)$ as stated following \eqref{eq:barA_term}). 

    To understand $\frac{\partial }{\partial \gamma}A(\Delta,\gamma)$, it is more convenient to work with the logarithm, namely, $\log A(\Delta,\gamma)=(\gamma-1)\log \Delta-\log\Gamma(\gamma)-\Delta$. Taking the partial derivative with respect to $\gamma$, we obtain
        \begin{equation}
        \frac{\partial}{\partial \gamma}\log A(\Delta,\gamma)=\log \Delta-\Psi(\gamma).
        \end{equation}
        Writing $A(\Delta, \gamma) = \exp(\log(A(\Delta,\gamma)))$ and applying the chain rule for differentiation, it follows that
        \begin{equation}
        \frac{\partial }{\partial \gamma}A(\Delta,\gamma)=A(\Delta,\gamma)\bigl(\log \Delta-\Psi(\gamma)\bigr).
        \end{equation}
    Again using $A(\Delta,\gamma) \in (0,1)$, this establishes \eqref{eq: equiv}. 
\end{proof}

Now, note that due to Lemma \ref{lem: boundxbygamma}, we have $\Delta^*>\gamma-\frac13$, and hence
    \begin{equation}
    \log \Delta^*>\log\Bigl(\gamma-\frac13\Bigr) \ge \log(\gamma)-\dfrac{1}{2\gamma}\ge \Psi(\gamma)\quad \forall \gamma\ge 2,
    \end{equation}
where the middle inequality uses $\log(x-1/3) \ge \log(x) - 1/(2x)$ for $x \ge 2$ (which can be verified analytically or graphically), and the last inequality can be found in \cite[Eq.~6.3.21]{abramowitz}. Thus, by \eqref{eq: equiv}, we have
\begin{equation}
\frac{d}{d\gamma}\, h(\Delta^*,\gamma)>0\quad\text{for all }\gamma\ge 2,
\end{equation}
which establishes that these values are monotonically increasing for $\gamma \ge 2$.  
Moreover $h(\Delta^*,\gamma)$ is strictly higher for $\gamma = 2$ than for $\gamma = 1$: For $\gamma=1$, its value is $(\log 2)^2<0.5$ \cite{coja2020optimal}, whereas for $\gamma = 2$ its value is at least $h(3/2,2) = -\frac{3}{2}\log\big(1-\frac{3}{2} e^{-3/2}\big) >0.6$.  Thus, $\gamma = 1$ can never give the highest value as long as there is a higher integer-valued threshold.  
Since $r^*$ is defined such that $\gamma_{r*}$ is the highest threshold, it follows that $h(\Delta_{r_0}^*,\gamma_{r_0})\le h(\Delta_{r^*}^*,\gamma_{r^*})
$, which completes the proof.

\subsection{Proof of Lemma~\ref{cor: denomconv} (The Large-$\gamma$ Limit)} \label{app:pf_large_conv}

Let
    \begin{equation}
    A_\gamma(\Delta)\coloneqq\frac{\Delta^{\gamma-1}}{(\gamma-1)!}e^{-\Delta},\qquad 
    G_\gamma(\Delta)\coloneqq -\Delta\log\bigl(1-A_\gamma(\Delta)\bigr),\qquad \Delta\in[\gamma-1/3,\gamma],
    \end{equation}
and write $\Delta=\gamma-\varepsilon$ with $\varepsilon\in[0,1/3]$. By Stirling's formula, we have
    \begin{equation}
    (\gamma-1)! = \sqrt{2\pi(\gamma-1)}\,(\gamma-1)^{\gamma-1}e^{-(\gamma-1)}\bigl(1+o(1)\bigr),
    \end{equation}
and hence
    \begin{equation}
    A_\gamma(\gamma-\varepsilon)
    =\frac{(\gamma-\varepsilon)^{\gamma-1}e^{-(\gamma-\varepsilon)}}{\sqrt{2\pi(\gamma-1)}(\gamma-1)^{\gamma-1}e^{-(\gamma-1)}}\bigl(1+o(1)\bigr)
    =\frac{e^{-(1-\varepsilon)}}{\sqrt{2\pi(\gamma-1)}}\Bigl(1+\frac{1-\varepsilon}{\gamma-1}\Bigr)^{\gamma-1}\bigl(1+o(1)\bigr). \label{eq:minus_eps}
    \end{equation}
Since $\varepsilon\in[0,1/3]$ is bounded, we have uniformly in this range of $\varepsilon$ that
    \begin{equation}
    \Bigl(1+\frac{1-\varepsilon}{\gamma-1}\Bigr)^{\gamma-1}
    =\exp\Bigl((\gamma-1)\log\Bigl(1+\frac{1-\varepsilon}{\gamma-1}\Bigr)\Bigr)
    =\exp\bigl((1-\varepsilon)+o(1)\bigr)=e^{1-\varepsilon}\bigl(1+o(1)\bigr).
    \end{equation}
Hence, the exponential factors cancel in \eqref{eq:minus_eps}, giving
    \begin{equation}
    \sup_{\Delta\in[\gamma-1/3,\gamma]}\left|A_\gamma(\Delta)-\frac{1}{\sqrt{2\pi\gamma}}\right|=o\left(\frac{1}{\sqrt{\gamma}}\right),
    \end{equation}
    and in particular
    \begin{equation}
    \sup_{\Delta\in[\gamma-1/3,\gamma]}A_\gamma(\Delta)=O(\gamma^{-1/2}).
    \end{equation}
Using $\,-\log(1-u)=u+O(u^2)$ as $u\to0$, we obtain uniformly for $\Delta\in[\gamma-1/3,\gamma]$ that
    \begin{equation} G_\gamma(\Delta)=\Delta\Bigl(A_\gamma(\Delta)+O(A_\gamma(\Delta)^2)\Bigr)=\Delta\cdot A_\gamma(\Delta)+O\Bigl(\Delta\cdot A_\gamma(\Delta)^2\Bigr).
    \end{equation}
Now, $\Delta\in[\gamma-1/3,\gamma]$ implies $\Delta=\gamma-O(1)$, while $A_\gamma(\Delta)=O(\gamma^{-1/2})$, so
    \begin{equation}
    \Delta \cdot A_\gamma(\Delta)^2 = O(\gamma\cdot \gamma^{-1})=O(1),
    \quad\text{and}\quad
    \Delta\cdot A_\gamma(\Delta)=\bigl(\gamma-O(1)\bigr)\Bigl(\frac{1}{\sqrt{2\pi\gamma}}+o(\gamma^{-1/2})\Bigr)
    =\sqrt{\frac{\gamma}{2\pi}}\bigl(1+o(1)\bigr),
    \end{equation}
uniformly over $\Delta\in[\gamma-1/3,\gamma]$. Multiplying by $\sqrt{2\pi/\gamma}$ gives
    \begin{equation}
    \sup_{\Delta\in[\gamma-1/3,\gamma]}\left|\,G_\gamma(\Delta)\sqrt{\frac{2\pi}{\gamma}}-1\,\right| \to 0,
    \end{equation}
which establishes the first part of the lemma. The second part follows directly from Lemma~\ref{lem: boundxbygamma}. 

\section{Verification of Assumptions for the NCC Design} \label{app:verify}

In this appendix, we verify that Assumptions \ref{asm:regularity} and \ref{asm:sparsity} hold with probability $1-o(1)$ under the NCC design (Definition \ref{def:NCC}) with $L = \frac{\nu T}{k}$ for any fixed parameter $\nu > 0$ and any number of tests satisfying $T = \Theta(k \log n)$.  For convenience, we write $T = C k \log n$ with $C = \Theta(1)$.

Assumption \ref{asm:regularity} can be verified (almost) directly from the findings of \cite{coja2020information}:
\begin{itemize}
    \item The first step in \cite[App.~D.2]{coja2020information} is precisely to establish that each item is placed in $L - O(1)$ distinct tests.  Since $L \to \infty$, this is a strictly stronger statement than our requirement of $L(1-o(1))$.
    \item \cite[Lemma 2.4]{coja2020information} states that with probability $1-o(1)$, all row weights behave as $\frac{\nu n}{k} + O\big( \sqrt{\frac{\nu n}{k}} \log n \big)$ \emph{when repeated placements (i.e., ``multi-edges'') are counted}.  When $\nu = \Theta(1)$ and $k = \Theta(n^{\theta})$, this simplifies to the desired behavior $\frac{\nu n}{k}(1+o(1))$.

    It remains to establish that removing repeated placements changes each row weight by only $o(n/k)$. For a fixed test, the expected number of items placed in that test at least twice is $O(n/k^2)$, since for each item this probability is at most $\binom{L}{2}/T^2=O(k^{-2})$. A Chernoff bound, together with a union bound over all tests and the $O(1)$ bound on the total number of repeated placements per item from
    \cite[Appendix D.2]{coja2020information}, then yields a uniform $o(n/k)$ bound on the number of repeated placements per test. Since this argument is largely similar to \eqref{eq:both_prob}--\eqref{eq:c_dense} below, we omit the remaining details.
    
\end{itemize}
In the following, we verify Assumption \ref{asm:sparsity}. 
Fix two distinct tests $t_1,t_2\in[T]$, and let $V_{t_1,t_2}$ denote the number of items appearing in both tests.  For a given item, the probability of appearing in both specified tests is at most
\begin{equation}
    \Pr(i \text{ in both } t_1 \text{ and } t_2) \le \left(\frac{L}{T}\right)^2 = \frac{\nu^2}{k^2}. \label{eq:both_prob}
\end{equation}
Since items are placed independently, we have the stochastic domination
\begin{equation}
    V_{t_1,t_2} \sd \mathrm{Bin}\left(n,\frac{\nu^2}{k^2}\right). \label{eq:overlap_domination}
\end{equation}
The mean of the dominating binomial variable is
\begin{equation}
    \mu_V = \nu^2\frac{n}{k^2} = \Theta(n^{1-2\theta}). \label{eq:overlap_mean}
\end{equation}
We now split into two regimes.

{\bf The case $\theta < 1/2$.} Here $\mu_V=n^{\Omega(1)}$.  By the Chernoff bound, for any fixed pair $(t_1,t_2)$, we have
\begin{equation}
    \Pr\left( V_{t_1,t_2} \ge 2\mu_V \right) \le \exp(-\Omega(\mu_V)).
\end{equation}
Since there are at most $T^2=O(k^2\log^2 n)$ pairs of tests, and $\mu_V=n^{\Omega(1)}$, a union bound gives
\begin{equation}
    \max_{t_1\ne t_2} V_{t_1,t_2} = O\left(\frac{n}{k^2}\right)
    = O(n^{1-2\theta}) \label{eq:c_sparse}
\end{equation}
with probability $1-o(1)$.  Thus, Assumption \ref{asm:sparsity} holds with a polynomial ``margin'', in the sense that $n^{1-2\theta}$ is smaller than $\frac{n}{k \log n}$ by a factor of $O\big(\frac{\log n}{n^{\theta}}\big)$.

{\bf The case $\theta \ge 1/2$.} Here $\mu_V=O(1)$.  Fix $A > 0$ and let
\begin{equation}
    a_n = \frac{A\,\log n}{\log\log n}
\end{equation}
Using the standard binomial tail bound
\begin{equation}
    \Pr(V \ge a)
    \le \left(\frac{e \mathbb{E}[V]}{a}\right)^{a}, \label{eq:bin_tail_large}
\end{equation}
we obtain for sufficiently large $A$ that
\begin{equation}
    \Pr( V_{t_1,t_2}\ge a_n ) \le O(n^{-3}).
\end{equation}
Since $T^2=O(k^2\log^2 n)\le O(n^2\log^2 n)$, a union bound yields with probability $1-o(1)$ that
\begin{equation}
    \max_{t_1\ne t_2} V_{t_1,t_2}
    = O\left(\frac{\log n}{\log\log n}\right). \label{eq:c_dense}
\end{equation}
This is again significantly smaller than the required $o\big( \frac{n}{k \log n} \big)$ in Assumption \ref{asm:sparsity}; this time the bound is sub-logarithmic in $n$ when even a (small enough) polynomial upper bound would have sufficed.

\bibliography{ref}
\bibliographystyle{ieeetr}
\end{document}